\documentclass[11pt,oneside,reqno]{amsart}

\usepackage{ifthen}
\newboolean{showfigures}
\setboolean{showfigures}{true} 

\usepackage[left]{lineno}
\newboolean{linenumbers}
\setboolean{linenumbers}{false}

\usepackage{adjustbox, algorithm, algpseudocode, amsfonts, amssymb, amsthm, amsmath, array, authblk, bm, booktabs, braket, caption, changepage, comment, dsfont, enumitem, etoolbox, fullpage, graphicx, mathrsfs, mathtools, microtype, multicol, multirow, pifont, pythonhighlight, relsize, subfloat, textcomp, url, xcolor, colortbl, xpatch}
\usepackage[round]{natbib}

\usepackage{algpseudocode, dsfont}
\algrenewcommand\algorithmicrequire{\textbf{Input:}}
\algrenewcommand\algorithmicensure{\textbf{Output:}}

\usepackage{tikz, tikz-cd}
\usetikzlibrary{arrows.meta, positioning, shapes.geometric, decorations.pathreplacing}

\usepackage{subcaption}

\newcommand{\yes}{$\checkmark$}

\newcommand{\na}{\textemdash}

\definecolor{lightgreen}{HTML}{d4edda}
\definecolor{lightgray}{gray}{0.92}

\usepackage[superscript,biblabel,nomove]{cite}

\setlist[enumerate]{itemsep=1ex, topsep=-1ex}
\setlist[itemize]{itemsep=0pt, topsep=0pt}

\usepackage[foot]{amsaddr}

\usepackage{titlesec}

\titleformat{\section}
	{\titlerule\vspace{-2ex}}
	{\thesection.}{1ex}{\centering\scshape}
	[\vspace{.5ex}\titlerule]
	
\titleformat{\subsection}[runin]
  {\normalfont\normalsize\bfseries}{\thesubsection.}{1ex}{}[.]	
\titlespacing*{\subsection}{0pt}{0.0\baselineskip}{0.5ex}

\titleformat{\subsubsection}[runin]
  {\normalfont\normalsize\itshape}{\thesubsubsection.}{1ex}{}	
\titlespacing*{\subsection}{0pt}{0.0\baselineskip}{0.5ex}

\newtheorem{theorem}{Theorem} 

\newtheoremstyle{style}
  {\baselineskip} 
  {0em} 
  {\itshape} 
  {} 
  {\bfseries} 
  {.} 
  {.5em} 
  {} 
\theoremstyle{style}

\newtheorem*{theorem*}{Theorem}
\numberwithin{equation}{section}
\newtheorem{assumption}{Assumption}
\newtheorem{corollary}{Corollary}

\newtheorem*{problem*}{Problem Statement}
\newtheorem{definition}{Definition} 
\newtheorem*{definition*}{Definition}

\newtheorem{lemma}{Lemma}[section]
\newtheorem*{lemma*}{Lemma}

\newtheorem*{prop*}{Proposition}
\newtheorem{remark}{Remark} 

\usepackage{hyperref}
\hypersetup{
	colorlinks=true,
	linkcolor=blue,
	citecolor=black,
	urlcolor=blue}
	
\usepackage[noabbrev,capitalise]{cleveref}
\creflabelformat{equation}{#2\textup{#1}#3}

\crefname{assumption}{Assumption}{Assumptions}
\crefname{condition}{Condition}{Conditions}
\crefname{remark}{Remark}{Remarks}
\crefname{lemma}{Lemma}{Lemmas}
\crefname{corollary}{Corollary}{Corollaries}
\crefname{definition}{Definition}{Definitions}
\numberwithin{equation}{section}

\algrenewcommand\algorithmicrequire{\textbf{Input:}}
\algrenewcommand\algorithmicensure{\textbf{Output:}}

\DeclareMathOperator{\1}{\mathds{1}}

\newcommand{\bbeta}{\bm{\beta}}
\newcommand{\bgamma}{\bm{\gamma}}

\newcommand{\bx}{\bm{x}}
\newcommand{\bX}{\bm{X}}
\newcommand{\css}{\texttt{CausalStabSel}}
\newcommand{\efp}{\mathtt{efp}}
\newcommand{\EFP}{\mathbb{E}(\mathrm{FP})}
\newcommand{\ci}{\perp\!\!\!\!\perp}

\newcommand{\var}{\text{Var}}
\newcommand{\cov}{\text{Cov}}

\title{{\Large C}ausal stability selection}

\author{{\large F}alco {\large J}. {\large B}argagli-{\large S}toffi\textsuperscript{1,$\star$}}
\author{{\large O}mar {\large M}elikechi\textsuperscript{2,$\star$}}

\address{\textsuperscript{1}Department of Biostatistics, University of California, Los Angeles. Email: \url{falco@ucla.edu}}
\address{\textsuperscript{2}Department of Statistical Science, Duke University. Email: \url{omar.melikechi@duke.edu}}

\thanks{\textsuperscript{$\star$}Equal contribution}

\begin{document}

\frenchspacing

\ifthenelse{\boolean{linenumbers}}{\linenumbers}{}

\maketitle

\vspace{-1.25cm}
\begin{abstract}
Identifying covariates that modify treatment effects is a central problem in causal inference. Yet existing data-adaptive procedures do not provide finite-sample control over the expected number of false discoveries, risking spurious findings that fail to replicate. We introduce \textit{causal stability selection}, an algorithm that combines cross-fitted estimation of conditional average treatment effects with integrated path stability selection. The method accommodates arbitrary treatment effect estimators and arbitrary base selectors, and produces a selection set with an explicit, non-asymptotic bound on the expected number of false positives. Under standard causal identifying assumptions and regularity conditions on the base selector, we prove that the estimated selection probabilities converge to their oracle counterparts at the rate of the underlying treatment effect estimator. This establishes a direct connection between treatment effect estimation and effect modifier discovery. We illustrate the method on a randomized trial in oncology and on observational data on maternal smoking and infant birthweight.
\end{abstract}
\vspace{0.25cm}

\section{Introduction}\label{sec:intro}


A large proportion of randomized and observational studies report null average treatment effects. For example, 40\% of trials registered on \texttt{ClinicalTrials.gov} find no significant overall effect \citep{dechartres2016reporting}. Yet null averages need not imply null effects: treatment response may vary systematically with pre-treatment covariates, even when the population mean is zero \citep{gates2019reporting}. The same pattern appears in social policy and public health, where interventions with negligible average impacts often exhibit substantial variation in response \citep{weiss2014conceptual}. Discovering which covariates drive this variation---that is, identifying \textit{effect modifiers}---has both practical and scientific value. Practically, effect modifier discovery is at the foundation of treatment targeting for precision interventions \citep{kosorok2019precision}. Scientifically, identifying effect modifiers helps characterize populations and subgroups in which treatments are most and least effective \citep{athey2016recursive}. Identifying effect modifiers can therefore reveal clinical or policy value in interventions otherwise deemed ineffective.

Formally, effect modifiers are defined in terms of the \textit{conditional average treatment effect} (CATE),
\begin{align}\label{eq:cate}
\tau(\bx) &= \mathbb{E}\left[Y(1) - Y(0) \mid \bX = \bx\right],
\end{align}
where $\bX=(X_1,\dots,X_p)\in\mathbb{R}^p$ are the pre-treatment covariates and $Y(1)$ and $Y(0)$ are potential outcomes under treatment and control, respectively \citep{rubin1974estimating}. When $\tau$ is non-constant in $\bx$, the treatment effect is said to be \textit{heterogeneous}. Many approaches have been developed to estimate $\tau$; however, a method may estimate $\tau$ accurately without revealing which covariates drive heterogeneity. In this work, our objective is effect modifier discovery, that is, to identify the covariates that explain variation in $\tau$ (this task is formally defined in Section~\ref{sec:setup}).  

While CATE estimation has received significant attention in recent years, several methods aimed specifically at effect modifier discovery have also been introduced. Many of these focus on constructing variable importance measures to quantify each covariate's contribution to treatment effect heterogeneity. For example, \citet{hines2025variable} develop model-free importance measures based on efficient influence functions, \citet{benard2023variable} derive importance measures from causal forests, and \citet{boileau2025nonparametric} introduce a nonparametric framework for defining effect modifier importance and establish asymptotic properties of their proposed estimators. While these approaches provide valuable tools for quantifying the contribution of each covariate to heterogeneity---and some include inferential procedures that can inform selection---they are not primarily designed to identify a parsimonious set of genuine effect modifiers with error rate guarantees. In a separate line of work more closely aligned with our objective, \citet{zhao2022selective} propose a two-stage procedure that first estimates nuisance functions related to the CATE, then applies the lasso \citep{tibshirani1996regression} to select a parsimonious model for effect modification. 

Despite these advances, existing methods do not provide finite-sample bounds on the expected number of selected covariates that are spurious effect modifiers. As we illustrate below, the absence of such control can lead to severely inflated false positive rates, undermining the reliability and replicability of downstream analyses and targeted interventions \citep{benjamini1995controlling, yu2013stability}. Achieving false discovery control while maintaining power to detect genuine effect modifiers is the central challenge we address.


\subsection{Illustrative example}\label{sec:illustration}


To illustrate the challenge of effect modifier discovery, we present the results of a simulation study in which data are generated from the linear model
\begin{align}\label{eq:linear}
Y &= \bX^\top\bbeta + Z\bX^\top\bgamma + \varepsilon,
\end{align}
where $Z\in\{0,1\}$ is a binary treatment and $\varepsilon$ is random noise satisfying $\mathbb{E}[\varepsilon\mid\bX,Z]=0$. The coefficients $\bbeta,\bgamma\in\mathbb{R}^p$ capture effects of $\bX$ on $Y$ that are unrelated and related to treatment, respectively, and their nonzero coordinates $\mathcal{P}=\{j : \beta_j\neq 0\}$ and $\mathcal{E}=\{j:\gamma_j\neq 0\}$ correspond to \textit{prognostic variables} and \textit{effect modifiers}. Under standard assumptions formalized in \cref{sec:setup},
\begin{align*}
Y(1) - Y(0) &= \bX^\top\bgamma + \varepsilon(1) - \varepsilon(0),
\end{align*}
where $Y(z)$ and $\varepsilon(z)$ are the outcome and noise terms when $Z=z$. Hence the CATE~\eqref{eq:cate} satisfies
\begin{align*}
\tau(\bX) &= \mathbb{E}\left[Y(1) - Y(0) \mid \bX\right]
    = \bX^\top\bgamma.
\end{align*} 

\cref{fig:illustration} shows the results of this simulation study averaged over 200 trials. In each trial, $n=1000$ samples are drawn independently according to \cref{eq:linear}, where $p=100$ and $\bX\sim\mathcal{N}(0,\bm{\Sigma})$ with $\bm{\Sigma}\in\mathbb{R}^{100\times 100}$ given by $\Sigma_{ij}=0.5^{\lvert i-j\rvert}$. Prior to each trial, $\bbeta$ and $\bgamma$ are assigned 10 nonzero entries selected uniformly at random from $\{1,\dots,100\}$, with values drawn uniformly from $[-1,-0.5]\cup[0.5,1]$. The overall signal-to-noise ratio, SNR = $\var(\bX^\top\bbeta + Z\bX^\top\bgamma)/\var(\varepsilon)$, is set to 1.

If $\tau$ were known, identifying effect modifiers would amount to standard supervised variable selection with predictors $\bX$ and response $\tau(\bX)$. The panels on the right side of \cref{fig:illustration} show that the lasso and integrated path stability selection \citep{melikechi2026integrated} perform especially well in this case, identifying all 10 effect modifiers on average while maintaining very low FDR, while the Knockoffs method \citep{barber2015controlling} also performs well once the nominal FDR exceeds 0.1. This \textit{oracle setting} is untenable in practice, however, because each unit receives only one of the two potential treatments, never both. Consequently, $\tau(\bX)$ must be estimated from data. 

\begin{figure}
\includegraphics[width=\textwidth]{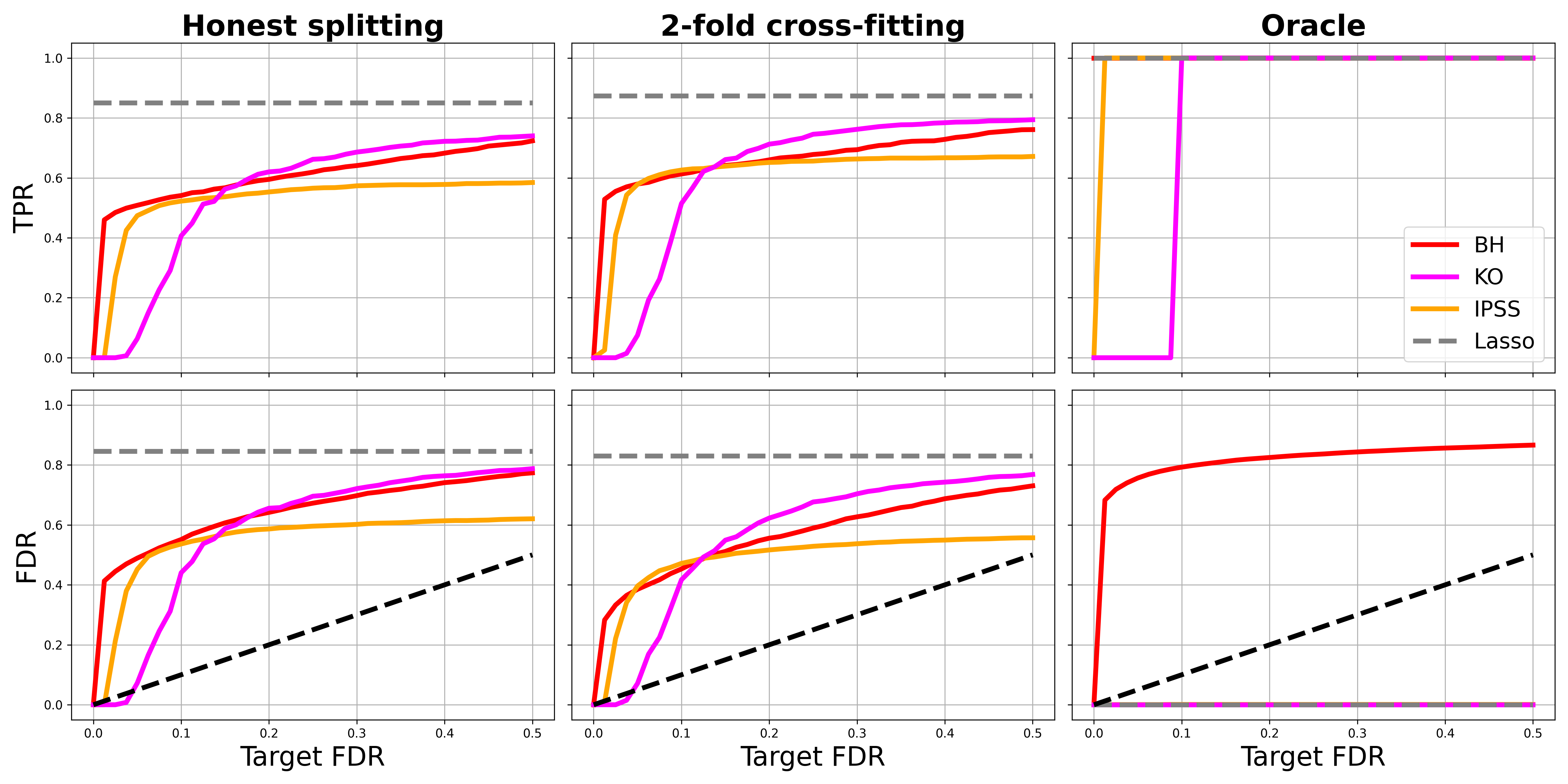}%
\caption{\textit{Performance of variable selection methods for effect modifier discovery.} \textbf{Top row:} True positive rate (TPR), defined as the proportion of identified effect modifiers among all true effect modifiers, averaged over $200$ simulation trials. \textbf{Bottom row:} False discovery rate (FDR), defined as the proportion of false discoveries among all selected covariates, averaged over $200$ trials. Black diagonal dashed lines in the bottom panels represent perfect FDR control; points on or below this line indicate successful control at the nominal level. \textbf{Left panels:} Results obtained under honest splitting. \textbf{Middle panels:} Results obtained under 2-fold cross-fitting. \textbf{Right panels:} Results obtained under ``oracle setting'' in which $\tau$ is known; FDR control fails for BH in this setting largely due to correlation between covariates. \textbf{Acronyms}: IPSS: integrated path stability selection; KO: knockoffs; BH: Benjamini-Hochberg; Lasso: Least Absolute Shrinkage and Selection Operator.}
\label{fig:illustration}
\end{figure}

Two general approaches for handling the unknown function $\tau$ are \textit{honest splitting} and \textit{$K$-fold cross-fitting}, both of which have been used extensively for CATE estimation \citep{athey2016recursive,chernozhukov2018double,kennedy2023towards} and, less commonly, for effect modifier/subgroup discovery \citep{zhao2022selective,bargagli2020causal, bargagli2025heterogeneous,huang2025distilling}. In the context of effect modifier discovery, honest splitting proceeds as follows: samples are partitioned into disjoint subsets $A$ and $A^c$; samples in $A$ are used to construct an estimate $\hat{\tau}_A$ of $\tau$, and variable selection is performed using the pairs $\{(\bx_i,\hat{\tau}_A(\bx_i)) : i\in A^c\}$. This approach reduces overfitting since $\hat{\tau}_A$ is constructed from samples in $A$ while the selection step uses those in $A^c$. However, honest splitting only uses half of the available observations for variable selection, which can reduce power. $K$-fold cross-fitting seeks to ameliorate this issue by partitioning samples into $K$ folds $\{A_1,\dots,A_K\}$. For each $k=1,\dots,K$, an estimate $\hat{\tau}^{(k)}$ is constructed using all samples not in $A_k$, and variable selection is performed on the aggregated pairs $\{(\bx_i,\hat{\tau}^{(k)}(\bx_i)) : i \in A_k, k=1,\dots,K\}$. 

Despite their appeal, the left and central panels shown in \cref{fig:illustration} indicate that neither lasso nor any of the variable selection methods with false discovery control are able to keep the false discovery rate (FDR) below nominal levels when combined with either honest splitting or cross-fitting. Lasso, which does not provide explicit false discovery control (and hence does not vary with the target FDR), makes a high number of false discoveries in both cases: on average, over $80\%$ of the covariates it selects are not effect modifiers. Moreover, results for this same simulation design indicate that increasing the number of folds $K$ does little to resolve this issue (\cref{fig:crossfitting}).


\subsection{Contributions}


We present a formal framework for effect modifier discovery and a method, \textit{Causal Stability Selection} (\texttt{CausalStabSel}), for selecting effect modifiers. The method combines CATE estimation and cross-fitting from modern causal inference \citep{fan2022estimation,kennedy2023towards} with stability selection \citep{meinshausen2010stability,melikechi2026integrated}. To our knowledge, \texttt{CausalStabSel}---which is compatible with any choice of CATE estimator and base selection algorithm---is the first effect modifier discovery method with finite-sample control over the expected number of false discoveries. We also prove that the set of variables selected by \texttt{CausalStabSel} converges to the set obtained if the true CATE were known (the oracle solution) at a rate governed by the CATE estimation error, establishing a direct connection between prediction and selection in the causal setting. Furthermore, by aggregating selections across repeated subsamples, \texttt{CausalStabSel} improves stability while remaining fast and easily parallelized, keeping it computationally tractable even in high dimensions. Finally, our proposed framework requires standard causal assumptions (SUTVA and strong ignorability) and regularity conditions on the base selector, making it applicable to both randomized trials and observational studies.


\subsection{Paper structure} 


The remainder of the paper is organized as follows. 
Section~\ref{sec:setup} formally defines the effect modifier discovery problem. Section~\ref{sec:method} introduces \texttt{CausalStabSel}. Section~\ref{sec:theory} establishes theoretical guarantees. Section~\ref{sec:simulations} presents simulation studies, and Section~\ref{sec:realdata} applies the method to real data. Section~\ref{sec:discussion} concludes. An open-source package implementing our method and the applied examples is provided.


\section{Problem formulation}\label{sec:setup}


In \Cref{subsec:notation}, we introduce notation and the potential outcomes framework of causal inference under which we operate. In \Cref{subsec:hte}, we formally define the central quantities of our investigation---effect modifiers---and the task of effect modifier discovery.


\subsection{Notation and causal framework}\label{subsec:notation}


Uppercase letters denote random variables and lowercase letters represent their realized values. Bold letters represent vectors and matrices. For any positive integer $n$, define $[n]=\{1,\dots,n\}$. We assume that our observed data are realizations $(\bx_i,y_i,z_i)$ of $n$ independent random vectors $(\bX_i,Y_i,Z_i)$ each with the same distribution as $(\bX,Y,Z)$, where $\bX\in\mathbb{R}^p$ is a vector of covariates, $Z\in\{0,1\}$ is a binary treatment, and $Y\in\mathbb{R}$ is the observed outcome. For generic random variables $A$, $B$, and $C$, we write $A \ci B \mid C$ when $A$ and $B$ are conditionally independent given $C$, and $A \not\ci B \mid C$ otherwise. For $\bX\in\mathbb{R}^p$, define $\bX_{-j}=(X_1,\dots,X_{j-1},X_{j+1},\dots,X_p)$. Variables $X_j$ are often referred to by their indices, e.g., ``$j$ is an effect modifier'' is shorthand for ``$X_j$ is an effect modifier.'' Finally, $\1(\cdot)$ denotes the indicator function: $\1(A)=1$ if event $A$ is true, and $\1(A)=0$ otherwise.

\begin{assumption}[Stable Unit Treatment Value Assumption (SUTVA)]
\label{ass:SUTVA}
For all $i\in[n]$, the potential outcome $Y_i$ satisfies:
\begin{itemize}[leftmargin=2em, itemsep=0.25em]
\item[(i)] \textbf{No interference}: $Y_i(\mathbf{z}) = Y_i(z_i)$ for any $\mathbf{z} = (z_1,\ldots,z_n)$;
\item[(ii)] \textbf{Consistency}: If $Z_{i} = z$, then $Y_{i} = Y_{i}(z)$ almost surely.
\end{itemize}
\end{assumption}

Part (i) states that each unit's outcome does not depend on treatments assigned to the other units, and hence that the potential outcomes $Y_i(0)$ and $Y_i(1)$ are well-defined. 
Part (ii) ensures each treatment level has a single version, so the observed outcome satisfies $y_i = Y_i(z_i)$. 
The fundamental problem of causal inference is that only one of $Y_i(0)$ or $Y_i(1)$ is observed for each $i$ \citep{holland1986statistics}.

\begin{assumption}[Strong Ignorability]
\label{ass:ignorability}
The treatment assignment satisfies:
\begin{itemize}[leftmargin=2em, itemsep=0.25em]
\item[(i)] \textbf{Unconfoundedness}: $(Y(0), Y(1)) \ci Z \mid \bX$; 
\item[(ii)] \textbf{Overlap/Positivity}: $0 < P(Z = 1 \mid \bX = \bx) < 1$ for all $\bx$ in the support of $\bX$.
\end{itemize} 
\end{assumption}

Strong ignorability ensures that, conditional on covariates, treatment assignment is independent of the potential outcomes (unconfoundedness) and that each unit has a positive probability of receiving either treatment (overlap) \citep{rosenbaum1983central}.

Under Assumption~\ref{ass:SUTVA}, the observed outcome satisfies
\begin{equation}\label{eq:decomposition}
\begin{split}
Y 
= Y(0) + Z(Y(1) - Y(0))
= \mu(\bX) + Z\tau(\bX) + \varepsilon,
\end{split}
\end{equation}
where $\mu(\bX) = \mathbb{E}[Y(0) \mid \bX]$ is the prognostic function, $\tau(\bX) = \mathbb{E}[Y(1) - Y(0) \mid \bX]$ is the CATE, and $\varepsilon = Y(0) - \mu(\bX) + Z(Y(1) - Y(0) - \tau(\bX))$ satisfies $\mathbb{E}[\varepsilon \mid \bX, Z] = 0$.
Under Assumption~\ref{ass:ignorability}, the prognostic and CATE functions satisfy
\begin{align*}
\mu(\bX) &= \mathbb{E}[Y(0) \mid \bX] 
             = \mathbb{E}[Y \mid Z = 0, \bX], \\
\tau(\bX) &= \mathbb{E}[Y(1) - Y(0) \mid \bX] 
             = \mathbb{E}[Y \mid Z = 1, \bX] 
             - \mathbb{E}[Y \mid Z = 0, \bX].
\end{align*}


\subsection{Treatment effect modifiers}\label{subsec:hte}


The CATE captures how treatment effects vary with $\bX$, while $\mu(\bX)$ captures associations between $\bX$ and $Y$ unrelated to treatment. 
Formally, we define prognostic variables and effect modifiers as follows.

\begin{definition}[\textbf{Prognostic Variables and Effect Modifiers}]\label{def:prognostic_and_effectmodifiers}
A covariate $X_j$ is:
\begin{itemize}[leftmargin=2em, itemsep=0.25em]
\item[(i)] A \textbf{prognostic variable} if $\mu(\bX) \not\ci X_j \mid \bX_{-j}$, i.e., $X_j$ provides information about the baseline outcome not captured by other covariates;
\item[(ii)] A \textbf{treatment effect modifier}, or, more simply, an \textbf{effect modifier}, if $\tau(\bX) \not\ci X_j \mid \bX_{-j}$, i.e., $X_j$ provides information about treatment effect heterogeneity not captured by other covariates.
\end{itemize}
We denote the sets of prognostic variables and effect modifiers by $\mathcal{P} := \{ j : \mu(\bX) \not\ci X_j \mid \bX_{-j} \}$ and $\mathcal{E} := \{ j : \tau(\bX) \not\ci X_j \mid \bX_{-j} \}$, respectively.
\end{definition}


\begin{remark}[Interpretation of Definition~\ref{def:prognostic_and_effectmodifiers}]
Since $\tau(\bX)$ and $\mu(\bX)$ are deterministic functions of $\bX$, the conditional independence statements in Definition~\ref{def:prognostic_and_effectmodifiers} have a direct functional interpretation. The statement $\tau(\bX) \not\ci X_j \mid \bX_{-j}$ is equivalent to $\tau$ not being almost-surely a function of $\bX_{-j}$ alone; that is, there exist values $\bx_{-j}$ at which $\tau$ varies with $x_j$. An analogous interpretation applies to $\mu$ and prognostic variables.
\end{remark}

The conditional independence formulation in Definition~\ref{def:prognostic_and_effectmodifiers} extends prior characterizations developed for specific settings (for example, coefficient sparsity in linear models \citep{tian2014simple}) to a general, model-agnostic form applicable to arbitrary $\mu$ and $\tau$. In the linear setting, \cref{eq:linear}, where $\mu(\bX) = \bX^\top\bbeta$ and $\tau(\bX) = \bX^\top\bgamma$, these reduce to $\mathcal{P} = \{j : \beta_j \neq 0\}$ and $\mathcal{E} = \{j : \gamma_j \neq 0\}$. Furthermore, we note that $\mathcal{P}$ and $\mathcal{E}$ are not mutually exclusive; that is, a covariate may be both prognostic and an effect modifier.

The following lemma shows that our definition of effect modifier in \cref{def:prognostic_and_effectmodifiers} is equivalent to variance-based variable importance measures of heterogeneity \citep{hines2025variable,williamson2023general,shin2025treatment}. 
However, the latter require that $\tau(\bX)$ has a finite second moment, making our characterization in terms of conditional independence slightly more general.

\begin{lemma}[Equivalent Characterizations of Effect Modifiers]
\label{lemma:equivalent}
Assume $\mathbb{E}(\tau(\bX)^2) < \infty$ and define $\tau_j(\bX)=\mathbb{E}[Y(1) - Y(0) \mid \bX_{-j}]$. Then $X_j$ is an effect modifier if and only if $\var(\tau_j(\bX)) \neq \var(\tau(\bX))$. 
\end{lemma}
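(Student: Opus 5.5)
The plan is to reduce everything to the law of total variance. First I rewrite $\tau_j$: by the tower property, $\tau_j(\bX)=\mathbb{E}[\mathbb{E}[Y(1)-Y(0)\mid\bX]\mid\bX_{-j}]=\mathbb{E}[\tau(\bX)\mid\bX_{-j}]$. Since $\mathbb{E}(\tau(\bX)^2)<\infty$, $\tau(\bX)\in L^2$, and $\tau_j(\bX)$ is its $L^2$-projection onto the $\sigma(\bX_{-j})$-measurable functions. The law of total variance then gives
\begin{align*}
\var(\tau(\bX)) = \var(\tau_j(\bX)) + \mathbb{E}\bigl[(\tau(\bX)-\tau_j(\bX))^2\bigr],
\end{align*}
because the cross term vanishes by orthogonality. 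Hence $\var(\tau_j(\bX))\neq\var(\tau(\bX))$ holds if and only if $\mathbb{E}[(\tau(\bX)-\tau_j(\bX))^2]>0$, that is, if and only if $\tau(\bX)\neq\mathbb{E}[\tau(\bX)\mid\bX_{-j}]$ on a set of positive probability.

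It remains to show that $\tau(\bX)=\mathbb{E}[\tau(\bX)\mid\bX_{-j}]$ a.s.\ holds exactly when $\tau(\bX)\ci X_j\mid\bX_{-j}$. For the direction ``a.s.\ equality implies independence'': if $\tau(\bX)$ equals a measurable function $g(\bX_{-j})$ almost surely, then conditional on $\bX_{-j}$ it is a.s.\ constant, and a degenerate variable is conditionally independent of anything. Hence $X_j$ is not an effect modifier.

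For the converse, suppose $\tau(\bX)\ci X_j\mid\bX_{-j}$. Then the conditional law of $\tau(\bX)$ given $(X_j,\bX_{-j})=\bX$ equals its conditional law given $\bX_{-j}$. But given $\bX$, the variable $\tau(\bX)$ is degenerate at $\tau(\bX)$. So the conditional law given $\bX_{-j}$ is a.s.\ a point mass, and its location must be $\mathbb{E}[\tau(\bX)\mid\bX_{-j}]=\tau_j(\bX)$, which is finite by the $L^2$ assumption. Therefore $\tau(\bX)=\tau_j(\bX)$ a.s., which is the functional interpretation stated in the Remark.

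The main obstacle is making this last step rigorous. I would argue with conditional distributions (regular versions exist since $\tau(\bX)$ is real-valued), or more elementarily via the conditional variance. Conditional independence implies $\var(\tau(\bX)\mid\bX_{-j})=\var(\tau(\bX)\mid\bX)$ a.s. To see this, take conditional expectations of $\tau$ and $\tau^2$ using $\mathbb{E}[h(\tau)\mid\bX]=\mathbb{E}[h(\tau)\mid\bX_{-j}]$; this requires $\mathbb{E}(\tau^2)<\infty$, which is given. The right-hand side is $0$, so $\mathbb{E}[(\tau-\tau_j)^2]=\mathbb{E}[\var(\tau\mid\bX_{-j})]=0$. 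Combining both directions with the variance decomposition yields the lemma.
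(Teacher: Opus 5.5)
Your proof is correct and is essentially the paper's argument: the law of total variance reduces $\var(\tau_j(\bX))=\var(\tau(\bX))$ to $\mathbb{E}[\var(\tau(\bX)\mid\bX_{-j})]=0$, i.e.\ $\tau(\bX)=\tau_j(\bX)$ a.s., and this is equivalent to $\tau(\bX)$ being conditionally independent of $X_j$ given $\bX_{-j}$. You justify the step from conditional independence to $\tau(\bX)=\mathbb{E}[\tau(\bX)\mid\bX_{-j}]$ more explicitly than the paper, which simply asserts it, though the conditional-variance detour is unnecessary: the case $h(t)=t$ alone gives $\tau(\bX)=\mathbb{E}[\tau(\bX)\mid\bX]=\mathbb{E}[\tau(\bX)\mid\bX_{-j}]$.
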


\begin{proof}
If $X_j$ is not an effect modifier, that is, if $\tau(\bX) \ci X_j \mid \bX_{-j}$, then
\begin{align*}
\tau(\bX) &= \mathbb{E}[\tau(\bX) \mid \bX_{-j}] 
    = \mathbb{E}[Y(1) - Y(0) \mid \bX_{-j}] 
    = \tau_j(\bX)
\end{align*}
almost surely, where the second equality holds by the tower property of conditional expectation. 
Hence $\var(\tau(\bX))=\var(\tau_j(\bX))$. 
Conversely, if $\var(\tau(\bX))=\var(\tau_j(\bX))$, then
\begin{align*}
\var(\tau_j(\bX)) &= \var(\tau(\bX)) \\
    &= \var(\mathbb{E}[\tau(\bX)\mid \bX_{-j}]) + \mathbb{E}(\var[\tau(\bX) \mid \bX_{-j}]) \\
    &= \var(\tau_j(\bX)) + \mathbb{E}(\var[\tau(\bX) \mid \bX_{-j}]).
\end{align*}
Hence $\mathbb{E}(\var[\tau(\bX) \mid \bX_{-j}]) = 0$, implying that $\var[\tau(\bX) \mid \bX_{-j}]=0$ almost surely. 
Thus, $\tau(\bX)$ is measurable with respect to the $\sigma$-algebra generated by $\bX_{-j}$, so $\tau(\bX) \ci X_j \mid \bX_{-j}$.
\end{proof}

With our definition of effect modifier in hand, we now state the principal goal of this work.
\begin{problem*}[\textbf{Effect Modifier Discovery}]
Given $n$ samples $(\bx_i,y_i,z_i)\in\mathbb{R}^p\times\mathbb{R}\times\{0,1\}$, construct an estimator $\hat{\mathcal{E}}$ of $\mathcal{E}$ that controls the expected number of false positives, $\EFP=\mathbb{E}|\hat{\mathcal{E}}\cap\mathcal{E}^c|$.
\end{problem*}

\cref{fig:illustration} showed that many existing methods fail at this task. In the next section, we introduce \texttt{CausalStabSel}, a method designed for effect modifier discovery with false discovery control.

\section{Causal Stability Selection}\label{sec:method}


Rather than apply a generic variable selection algorithm once to the full dataset, \css{}---and stability selection more broadly---aggregates selections across multiple random subsamples of the data, yielding more stable results and finite-sample false discovery control \citep{meinshausen2010stability}. However, existing versions of stability selection assume the outcome of interest is directly observed, making them ill-equipped for effect modifier discovery. To overcome this limitation, \css{} incorporates CATE estimation into the subsampling scheme, repeatedly using one subset of samples to estimate the CATE and the remaining samples for selection.

The rest of this section is dedicated to describing the \css{} algorithm. In \cref{sec:selection_probabilities}, we introduce the central quantities underlying \css{}, \textit{selection probabilities}, which quantify the probability that a covariate is an effect modifier. In \cref{sec:selection_criteria}, we explain how selection probabilities are used to construct the final \css{} estimate of the true set of effect modifiers, $\mathcal{E}$, and in \cref{sec:parameters}, we discuss the inputs to \css{}.

\begin{algorithm}
\caption{Cross-fitted Estimation of Selection Probabilities}\label{alg:selection_probabilities}
\begin{algorithmic}[1]
\Require Data $\{(\bx_i,y_i,z_i)\}_{i=1}^n$, base selector $\hat{S}_\lambda$, CATE estimator $\hat{\tau}$, parameter grid $\Lambda$, number of iterations $B$, number of subsamples $m\leq \lfloor n/2\rfloor$.
\For{$b=1,\ldots,B$}
    \State Randomly select disjoint subsets $A_{2b-1}, A_{2b}\subseteq[n]$ of size $m$.
    \State Train $\hat{\tau}$ on samples $\{(\bx_i,y_i,z_i) : i \in A_{2b-1}^c\}$. Denote the model by $\hat{\tau}_{A_{2b-1}^c}$.
    \State Predict $\hat{\tau}_{A_{2b-1}^c}(\bx_i)$ for all $i \in A_{2b-1}$.
    \For{$\lambda \in \Lambda$}
        \State Evaluate $\hat{S}_\lambda(\{(\bx_i, \hat{\tau}_{A_{2b-1}^c}(\bx_i)) : i \in A_{2b-1}\})$.
    \EndFor
    \State Repeat steps 3--7 with $A_{2b}$ in place of $A_{2b-1}$.
\EndFor
\end{algorithmic}
\begin{flushleft}
\textbf{Output:} Estimated selection probabilities 
$\hat{\pi}_{j,m}(\lambda) = \frac{1}{2B} \sum_{b=1}^{2B} \1\left(j \in \hat{S}_\lambda(\{(\bx_i, \hat{\tau}_{A_b^c}(\bx_i)) : i \in A_b\})\right)$.
\end{flushleft}
\end{algorithm}

\subsection{Selection probabilities}\label{sec:selection_probabilities}

Let $\hat{S}_\lambda$ be any supervised variable selection algorithm parametrized by $\lambda\in\mathbb{R}$; that is, given observed covariate--response pairs $\{(\bx_i,\tau_i) : i\in A\}$ for some index set $A$, $\hat{S}_\lambda(\{(\bx_i,\tau_i) : i\in A\})\subseteq[p]$ is an estimate of the covariates that directly relate to the response. In our setting, the responses $\tau_i = \tau(\bx_i)$ are the unobserved CATE values, and the relevant covariates are the effect modifiers, $\mathcal{E}$. Specific examples of $\hat{S}_\lambda$ are provided in \cref{sec:parameters}. 

For a subset $A\subseteq[n]$ of size $m$, the \textit{selection probability} $\pi_{j,m}(\lambda)$ of variable $j$ is
\begin{align}\label{eq:selection_probabilities}
\pi_{j,m}(\lambda) &= \mathbb{P}\left(j \in \hat{S}_\lambda(\bX_A,\tau(\bX_A))\right).
\end{align}
We regard these selection probabilities as oracle quantities because they correspond to the idealized setting in which $\tau$ is known. Note that under the independent samples assumption, $\pi_{j,m}$ depends on $A$ only through the subsample size $m$.

\css{} consists of two key steps: (i) estimating the selection probabilities $\pi_{j,m}(\lambda)$ from observed data, and (ii) using the estimated selection probabilities to construct an estimator of $\mathcal{E}$. \cref{alg:selection_probabilities} describes step (i): in each of $B$ subsampling steps, a random subset $A_{2b-1}\subseteq[n]$ of size $m$ is selected, and a user-specified CATE estimation algorithm $\hat{\tau}$ is trained using the held-out samples $A_{2b-1}^c = [n]\setminus A_{2b-1}$. The resulting model, $\hat{\tau}_{A_{2b-1}^c}$, is then used to predict CATE values for samples in $A_{2b-1}$ and $\hat{S}_\lambda$ is applied to the pairs $\{(\bx_i,\hat{\tau}_{A_{2b-1}^c}(\bx_i)) : i\in A_{2b-1}\}$. The final estimated selection probability, $\hat{\pi}_{j,m}(\lambda)$ for feature $j$ at $\lambda$, is the proportion of times $j$ is selected across all $2B$ subsamples. This complementary pairs design, which consists of drawing two disjoint subsets per iteration rather than one, has been shown to moderately improve selection performance relative to the original stability selection algorithm \citep{shah2013variable}.

Using disjoint samples for CATE estimation and selection in \cref{alg:selection_probabilities} is critical: when the same samples are used for both, the prediction $\hat{\tau}(\bx_i)$ depends on $\bx_i$ through the training process, inducing an artifactual dependence between $\bx_i$ and $\hat{\tau}(\bx_i)$ that the base selector mistakes for genuine effect modification. Cross-fitting breaks this dependence by ensuring $\hat{\tau}_{A^c}$ is independent of $\bx_i$ for $i \in A$. Furthermore, as shown by the violations of FDR control in \cref{fig:illustration}, splitting the data only once prior to applying stability selection or other variable selection methods does little to mitigate this problem. Only by \textit{repeated} cross-fitting---that is, training $\hat{\tau}$ on samples in $A^c$ and selecting on samples in $A$ across multiple iterations---is this source of bias controlled. We formalize this in \cref{sec:theory}.

\subsection{Constructing the selection set}\label{sec:selection_criteria}

After estimating the selection probabilities via \cref{alg:selection_probabilities}, the second and final step of \css{} is to construct an estimate of $\mathcal{E}$ so as to control the expected number of false positives, $\mathbb{E}(\mathrm{FP})$. In the original stability selection paper, \citet{meinshausen2010stability} achieve this by considering the estimate
\begin{align}\label{eq:ss}
\hat{\mathcal{E}}_{\text{SS}} &= \left\{j : \max_{\lambda\in\Lambda}\hat{\pi}_{j,m}(\lambda) \geq \gamma\right\},
\end{align}
where $\Lambda\subseteq\mathbb{R}$ is a set of $\lambda$ values and $\gamma$ is a user-specified threshold. However, this selection criterion is known to produce overly-conservative estimates, due in part to its relatively loose theoretical upper bound on $\mathbb{E}(\mathrm{FP})$ \citep{alexander2011stability,melikechi2026integrated}. In \citet{melikechi2026integrated}, a significantly tighter upper bound on $\mathbb{E}(\mathrm{FP})$ is obtained via the selection criterion
\begin{align}\label{eq:ipss}
\hat{\mathcal{E}}_{\gamma} &= \left\{j : \int_\Lambda f\left(\hat{\pi}_{j,m}(\lambda)\right)\, \mu(d\lambda) \geq \gamma\right\},
\end{align}
where $f(x) = (2x - 1)^3\1(x \geq 1/2)$ and $\mu$ is an arbitrary probability measure on $\Lambda$ whose specification is discussed in \cref{sec:parameters}. This approach, called \textit{integrated path stability selection} (IPSS), achieves this tighter upper bound on $\mathbb{E}(\mathrm{FP})$ by transforming (via $f$) and integrating (rather than maximizing) the selection probabilities over $\Lambda$, leading to more true positives than other versions of stability selection while maintaining false positive control. For this reason, we use the IPSS criterion \eqref{eq:ipss} when implementing \css{}. In \cref{sec:theory_fdr}, we prove that an upper bound on $\mathbb{E}(\mathrm{FP})$ similar to the original IPSS bound holds for \css{}.

Another benefit of IPSS is that the threshold $\gamma$ does not need to be specified by the user. Instead, the theoretical upper bound on $\mathbb{E}(\mathrm{FP})$ is used to assign an \textit{expected false positive} (efp) score, $\efp(j)$, to each variable $j$ \citep{melikechi2025nonparametric}. Given a user-specified target $\mathbb{E}(\mathrm{FP})$ level $t$, the final selection set $\hat{\mathcal{E}} = \{j : \efp(j) \leq t\}$ has at most $t$ false positives on average. For FDR control, we select the largest set satisfying $t / |\hat{\mathcal{E}}| \leq \alpha$ for a nominal FDR level $\alpha$.

\subsection{Implementation}\label{sec:parameters}

\css{} involves several inputs: the variable selection algorithm, or \textit{base selector}, $\hat{S}_\lambda$, the CATE estimator $\hat{\tau}$, the subsample size $m$, the parameter set $\Lambda$ and probability measure $\mu$ on $\Lambda$, and the number of iterations $B$. The CATE estimator is unique to \css{} relative to other stability selection methods, and $m$ plays a different role here than in standard stability selection: smaller subsamples leave more data for CATE estimation, and we show in Section 4 that this trade-off is asymptotically favorable. Both are further discussed below and analyzed in detail in \cref{sec:theory}. The remaining inputs serve identical purposes in \css{} and in IPSS. We briefly describe these below, referring the reader to \citet{melikechi2026integrated} and \citet{melikechi2025nonparametric} for details.

\css{} is compatible with any base selector $\hat{S}_\lambda$. Two broad classes are (i) penalized regression and (ii) thresholding variable importance scores. The former consists of regularized parametric models such as the lasso \citep{tibshirani1996regression} and adaptive lasso \citep{zou2006adaptive}, in which case $\lambda\geq 0$ is the regularization strength. The latter consists of selectors $\hat{S}_\lambda = \{j : \Phi_j > \lambda\}$, where $\Phi_j$ measures the importance of variable $j$ computed via any algorithm---such as tree-based or deep learning models---and $\lambda$ is the importance threshold. In this work, we use lasso-based selection and mean decrease impurity (MDI) from XGBoost \citep{chen2016xgboost} as our base estimators. When combined with IPSS, MDI from XGBoost exhibits strong selection performance and computational efficiency relative to other importance scores across a wide range of settings, most notably when there are nonlinear relationships in the data \citep{melikechi2025nonparametric}. We note that the Lipschitz condition on importance scores introduced in Section~\ref{sec:theory} (Assumption~\ref{ass:lipschitz}) is most readily verified for linear scores such as lasso coefficients; for tree-based importance measures including MDI, the condition is taken as a high-level regularity assumption.

Similar to the base selector, \css{} is compatible with any CATE estimation method. In this work, we consider three meta-learning approaches: T- and X-learners \citep{kunzel2019metalearners}, and the DR-learner \citep{kennedy2023towards}. Each combines one or more supervised regression models---its \textit{base learners}---to form a CATE estimate. On the theoretical side, the performance of \css{} is governed by the CATE estimation error $\mathbb{E}\|\hat{\tau}_{A^c} - \tau\|$; when this error vanishes at the parametric rate $n^{-1/2}$, the \css{} selection probabilities are asymptotically unbiased (\cref{cor:cate_rate}), the \css{} selection set converges to its oracle counterpart (\cref{cor:oracle_convergence}), and the false discovery bound matches that of the oracle setting (\cref{thrm:oracle_fdr}). 
On the empirical side, we compare all three meta-learners, each implemented with both ridge regression and gradient boosting as base learners (with lasso as the \css{} base selector in the linear setting and XGBoost importance scores in the nonlinear setting). \cref{fig:cate_comparison} reports the results. The three ridge-based implementations produce nearly identical \css{} performance: strong in the linear setting, where the model is correctly specified, and poor in the nonlinear setting. The XGBoost-based implementations differ modestly, with the DR-learner achieving better FDR control than the T- and X-learners. All three perform well in the nonlinear setting and moderately worse than the ridge-based estimators in the linear setting, and their computational requirements are similar in practice. Combined with its theoretical properties, this leads us to recommend the DR-learner as the default CATE estimator in \css{}.

For IPSS and other forms of stability selection, the subsample size $m$ is typically fixed at $\lfloor n/2 \rfloor$. For \css{}, however, there is benefit to choosing smaller values when $n$ is large, since the $n - m$ held-out samples are used for CATE estimation which, in our experience, tends to benefit more than selection from additional samples. In \cref{sec:theory}, we formalize this claim, proving that \css{} achieves oracle performance asymptotically as $n, B \to \infty$ provided that $m = o(\sqrt{n})$. For smaller sample sizes where the cost of reducing $m$ below $\lfloor n/2 \rfloor$ outweighs the benefit, we recommend $m = \lfloor n/2 \rfloor$.

The roles of the remaining inputs, $\Lambda$, $\mu$, and $B$, are identical to their roles in IPSS. Briefly, $\Lambda = [\lambda_{\min}, \lambda_{\max}]$ is discretized to a finite grid with $\lambda_{\max}$ chosen large enough that no variables are selected, and the integral in \eqref{eq:ipss} is approximated by a Riemann sum. For $\mu$, we use the family of probability measures $\mu_\delta(d\lambda) \propto \lambda^{-\delta}\,d\lambda$; results are robust to the choice of $\delta$ \citep{melikechi2026integrated, melikechi2025nonparametric}. Similarly, results vary little with $B$ provided $B \geq 50$ \citep{shah2013variable, melikechi2026integrated}. Our default is $B=100$. 

\section{Theoretical Results}\label{sec:theory}


The central challenge of effect modifier discovery is that $\tau$ must be estimated, introducing additional uncertainty that can bias selections and inflate false discoveries. In this section, we characterize the performance of \css{} in terms of CATE estimation error, establishing a direct connection between causal variable selection and prediction. We formalize this correspondence in three steps. First, we prove that the selection probabilities estimated via \cref{alg:selection_probabilities} are asymptotically unbiased for their oracle counterparts (\cref{thrm:bias}). Second, we prove that the \css{} estimator $\hat{\mathcal{E}}_\gamma$ converges to the oracle selection set under CATE consistency and suitable subsampling conditions (\cref{thrm:convergence}). Third, we establish finite-sample false discovery control for \css{} (\cref{thrm:fdr}). Proofs are deferred to \cref{sup_sec:proofs}.

\subsection{Oracle selection}\label{sec:theory_bias}

\cref{thrm:bias,thrm:convergence} assume that $\hat{S}_\lambda$ selects variables by thresholding importance scores (\cref{sec:parameters}). Specifically, we consider selectors of the form
\begin{align}\label{eq:thresholding}
	\hat{S}_\lambda(\bX_A, f)
    =
    \bigl\{
        j : \Phi_j(\bX_A, f) \geq \lambda
    \bigr\},
\end{align}
where $\Phi_j : \mathbb{R}^{|A| \times p} \times \mathcal{F} \to \mathbb{R}$ quantifies the importance of $X_j$ using samples in $A$, and $\mathcal{F}$ is a normed space of functions from $\mathbb{R}^p$ to $\mathbb{R}$. In the traditional supervised setting, for example, $\Phi_j(\bX_A,f)=\Phi_j(\bX_A,f(\bX_A))$ is trained on data $\{(\bx_i,y_i):i\in A\}$, where the outcomes $y_i=f(\bx_i) + \varepsilon$ are directly observed and $\varepsilon$ is additive noise. In our setting, $f=\tau$. Common examples of importance scores include magnitudes of estimated regression coefficients obtained from fitting generalized linear models, the aforementioned mean decrease impurity (MDI) scores computed from tree-ensemble methods such as random forests or gradient boosting, and even more general scores such as permutation importance and SHAP values \citep{breiman2001,tibshirani1996regression}.

In the following, we fix $\tau$ and the subsample size $m$, and let $A\subseteq [n]$ be of size $m$. \cref{ass:lipschitz,ass:density} are used in \cref{thrm:bias,thrm:convergence,thrm:oracle_fdr}; they are not required for \cref{thrm:fdr}.

\begin{assumption}[Lipschitz importance]\label{ass:lipschitz}
There exists $L > 0$ such that, for all $\eta \in \mathcal{F}$,
\begin{align}\label{eq:lipschitz_importance}
    \bigl|
        \Phi_j(\bX_A, \eta)
        -
        \Phi_j(\bX_A, \tau)
    \bigr|
    \leq
    L \lVert\eta - \tau\rVert
    \quad
    \text{almost surely}.
\end{align}
\end{assumption}

\begin{assumption}[Bounded density]\label{ass:density}
The conditional random variable $\Phi_j(\bX_A, \tau) \mid \hat{\tau}_{A^c}$ admits a density that is bounded above by $M < \infty$.
\end{assumption}

\cref{ass:lipschitz} ensures that the score used by the base selector changes continuously when the CATE is perturbed. \cref{ass:density} rules out degenerate behavior at the importance score threshold, $\lambda$.

\begin{theorem}[Asymptotic unbiasedness]\label{thrm:bias}
Under \cref{ass:lipschitz,ass:density},
\begin{align}\label{eq:bias_bound}
    \bigl|
        \mathbb{E}\left(\hat{\pi}_{j,m}(\lambda)\right)
        -
        \pi_{j,m}(\lambda)
    \bigr|
    \leq
    2ML\,\mathbb{E}\lVert\hat{\tau}_{A^c} - \tau\rVert.
\end{align}
\end{theorem}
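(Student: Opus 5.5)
By linearity of expectation, $\mathbb{E}(\hat{\pi}_{j,m}(\lambda))$ is the average over the $2B$ subsamples $A_b$ of $\mathbb{P}\bigl(j\in\hat{S}_\lambda(\bX_{A_b},\hat{\tau}_{A_b^c})\bigr)$. Each $A_b$ is a uniformly random subset of size $m$, drawn independently of the data, and the samples are i.i.d. Hence each term has the same law as $\mathbb{P}\bigl(\Phi_j(\bX_A,\hat{\tau}_{A^c})\geq\lambda\bigr)$ for a fixed $A$ of size $m$. It therefore suffices to bound
\begin{align*}
\Bigl|\mathbb{P}\bigl(\Phi_j(\bX_A,\hat{\tau}_{A^c})\geq\lambda\bigr)-\mathbb{P}\bigl(\Phi_j(\bX_A,\tau)\geq\lambda\bigr)\Bigr|,
\end{align*}
since the second probability is exactly $\pi_{j,m}(\lambda)$ by \eqref{eq:selection_probabilities} and \eqref{eq:thresholding}.

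The difference of the two indicators is at most $\1\bigl(\Phi_j(\bX_A,\hat{\tau}_{A^c})\geq\lambda\bigr)$ and $\1\bigl(\Phi_j(\bX_A,\tau)\geq\lambda\bigr)$ disagreeing. Write $\Delta=\lVert\hat{\tau}_{A^c}-\tau\rVert$. By \cref{ass:lipschitz}, applied with $\eta=\hat{\tau}_{A^c}\in\mathcal{F}$, we have $|\Phi_j(\bX_A,\hat{\tau}_{A^c})-\Phi_j(\bX_A,\tau)|\leq L\Delta$ almost surely. Disagreement can therefore occur only when
\begin{align*}
\Phi_j(\bX_A,\tau)\in[\lambda-L\Delta,\lambda+L\Delta).
\end{align*}
This gives
\begin{align*}
\bigl|\mathbb{E}(\hat{\pi}_{j,m}(\lambda))-\pi_{j,m}(\lambda)\bigr|\leq \mathbb{E}\Bigl[\mathbb{P}\bigl(|\Phi_j(\bX_A,\tau)-\lambda|\leq L\Delta \,\big|\, \hat{\tau}_{A^c}\bigr)\Bigr].
\end{align*}

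The key step is evaluating this inner conditional probability, and this is where cross-fitting matters. Conditional on $\hat{\tau}_{A^c}$, the quantity $\Delta$ is fixed, because $\hat{\tau}_{A^c}$ is built only from samples in $A^c$. By \cref{ass:density}, $\Phi_j(\bX_A,\tau)$ given $\hat{\tau}_{A^c}$ has a density bounded by $M$. So the conditional probability of landing in an interval of length $2L\Delta$ is at most $2ML\Delta$. Taking expectations over $\hat{\tau}_{A^c}$ yields $2ML\,\mathbb{E}\lVert\hat{\tau}_{A^c}-\tau\rVert$, which is \eqref{eq:bias_bound}.

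I expect the main obstacle to be the care needed in these conditioning steps, rather than any hard estimate. Two things must be justified. First, the Lipschitz bound holds almost surely jointly with the random $\eta=\hat{\tau}_{A^c}$; I will read \cref{ass:lipschitz} as holding uniformly over $\eta$, so plugging in a random element is allowed. Second, the interval endpoints are $\hat{\tau}_{A^c}$-measurable, which is what allows the conditional density bound to be applied. If needed, I will also remark that the randomness in the choice of the $A_b$ is independent of the data, so conditioning on $A_b$ reduces to the fixed-$A$ case. I will also remark that the boundary convention ($\geq$ versus $>$) has probability zero under the density assumption.
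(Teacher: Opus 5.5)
Your proposal is correct and follows essentially the same route as the paper. You bound the bias by the probability that the oracle and cross-fitted selection events disagree, use \cref{ass:lipschitz} to confine $\Phi_j(\bX_A,\tau)$ to a window of half-width $L\lVert\hat{\tau}_{A^c}-\tau\rVert$ around $\lambda$, and then condition on $\hat{\tau}_{A^c}$ and apply \cref{ass:density} before taking expectations; your explicit reduction from the $2B$-subsample average to a single fixed $A$, which the paper calls immediate, is a helpful clarification but does not change the argument.
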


\cref{thrm:bias} shows that the bias in the estimated selection probabilities is controlled by the CATE estimation error, $\mathbb{E}\lVert\hat{\tau}_{A^c} - \tau\rVert$. The key point is that in \cref{alg:selection_probabilities}, the estimator $\hat{\tau}_{A^c}$ is trained on observations outside $A$, whereas $\Phi_j(\bX_A,\tau)$ depends only on observations in $A$. Conditional on the random split, these two quantities are independent. Therefore, disagreement between the estimated and oracle selectors can occur only when the oracle score $\Phi_j(\bX_A,\tau)$ lies within $L\lVert\hat{\tau}_{A^c}-\tau\rVert$ of the importance threshold $\lambda$. \cref{ass:density} controls the probability of this near-threshold event.

\begin{corollary}[Rate under CATE consistency]\label{cor:cate_rate}
Under the conditions of \cref{thrm:bias}, if $\mathbb{E}\lVert\hat{\tau}_{A^c}-\tau\rVert = o(1)$ as $n \to \infty$, then $\hat{\pi}_{j,m}(\lambda)$ is asymptotically unbiased for $\pi_{j,m}(\lambda)$. If, in addition, $\mathbb{E}\lVert\hat{\tau}_{A^c}-\tau\rVert^2 = O(n^{-1})$, then Jensen's inequality gives $\mathbb{E}\lVert\hat{\tau}_{A^c}-\tau\rVert = O(n^{-1/2})$, and the bias is $O(n^{-1/2})$.
\end{corollary}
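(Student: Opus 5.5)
The corollary follows almost immediately from \cref{thrm:bias}, so the plan is to read off both claims from the bound \eqref{eq:bias_bound}. The constants $M$ and $L$ in \cref{ass:lipschitz,ass:density} do not depend on $n$; this is the standing interpretation, since otherwise the statement would need rates on $M$ and $L$. Under that interpretation, the right-hand side $2ML\,\mathbb{E}\lVert\hat{\tau}_{A^c}-\tau\rVert$ inherits whatever rate the CATE error has.

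\emph{First claim (asymptotic unbiasedness).} Apply \cref{thrm:bias} for each $n$. Since $\mathbb{E}\lVert\hat{\tau}_{A^c}-\tau\rVert = o(1)$, we get
\begin{align*}
\bigl|\mathbb{E}(\hat{\pi}_{j,m}(\lambda)) - \pi_{j,m}(\lambda)\bigr| \leq 2ML\,\mathbb{E}\lVert\hat{\tau}_{A^c}-\tau\rVert \to 0.
\end{align*}
This is exactly asymptotic unbiasedness.

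One subtlety concerns the target. The oracle probability $\pi_{j,m}(\lambda)$ depends on $m$, which may grow with $n$ (for example $m=o(\sqrt n)$). So I would phrase the result as the difference between $\mathbb{E}(\hat{\pi}_{j,m}(\lambda))$ and $\pi_{j,m}(\lambda)$ vanishing, not as convergence to a fixed limit. Note also that $A^c$ has size $n-m\geq n/2$, so rates stated in terms of $n$ and in terms of $n-m$ agree up to constants.

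\emph{Second claim (rate).} Apply Jensen's inequality to the concave square root, or equivalently Cauchy--Schwarz:
\begin{align*}
\mathbb{E}\lVert\hat{\tau}_{A^c}-\tau\rVert \leq \bigl(\mathbb{E}\lVert\hat{\tau}_{A^c}-\tau\rVert^2\bigr)^{1/2} = O(n^{-1/2}).
\end{align*}
Substituting this into \eqref{eq:bias_bound} gives bias at most $2ML\cdot O(n^{-1/2}) = O(n^{-1/2})$.

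\emph{Main obstacle.} The argument is routine; the only point needing care is that $M$ and $L$ are uniform in $n$. I would state this explicitly as part of the hypotheses inherited from \cref{thrm:bias}.
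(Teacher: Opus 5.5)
Your proposal is correct and matches the paper's approach. The paper gives no separate proof: it reads the first claim directly off the bound \eqref{eq:bias_bound} of \cref{thrm:bias}, and gets the rate from Jensen's inequality, $\mathbb{E}\lVert\hat{\tau}_{A^c}-\tau\rVert \le (\mathbb{E}\lVert\hat{\tau}_{A^c}-\tau\rVert^2)^{1/2}$. Your remarks that $M$ and $L$ must not depend on $n$, and that the target $\pi_{j,m}(\lambda)$ may move with $m$, state assumptions the paper leaves implicit and are worth keeping.
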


We next show that the \css{} estimator $\hat{\mathcal{E}}_\gamma$ converges to its oracle counterpart,
\begin{align}\label{eq:oracle_selection_set}
    \mathcal{E}_\gamma^*
    =
    \left\{
        j :
        \int_\Lambda
        f\bigl(\pi_{j,m}(\lambda)\bigr)
        \,\mu(d\lambda)
        \geq
        \gamma
    \right\}.
\end{align}

The proof uses \cref{thrm:bias} together with a variance bound on the oracle version of $\hat{\pi}_{j,m}$, in which $\tau$ replaces $\hat{\tau}_{A^c}$; see \cref{lem:variance} in \cref{sup_sec:proofs}.

\begin{theorem}[Oracle selection]\label{thrm:convergence}
Suppose $f$ in \eqref{eq:ipss} is $L_f$-Lipschitz on $[0,1]$ and define
\begin{align}\label{eq:margin_condition}
    \varepsilon_j
    =
    \left|
        \int_\Lambda
        f\bigl(\pi_{j,m}(\lambda)\bigr)
        \,\mu(d\lambda)
        -
        \gamma
    \right|
    >
    0
    \quad
    \text{for all } j.
\end{align}
Set $\varepsilon = \min_j \varepsilon_j$. Under \cref{ass:lipschitz,ass:density},
\begin{align}\label{eq:convergence_bound}
    \mathbb{E}
    \bigl|
        \hat{\mathcal{E}}_\gamma
        \,\Delta\,
        \mathcal{E}_\gamma^*
    \bigr|
    \leq
    \frac{L_f p}{\varepsilon}
    \left(
        2ML\,\mathbb{E}\lVert\hat{\tau}_{A^c}-\tau\rVert
        +
        \sqrt{
            \frac{1}{4B}
            +
            \frac{2m^2}{n}
            +
            4ML\,\mathbb{E}\lVert\hat{\tau}_{A^c}-\tau\rVert
        }
    \right).
\end{align}
\end{theorem}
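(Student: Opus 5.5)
The plan is to reduce the symmetric-difference count to a union of per-variable misclassification events, then control each one through the margin $\varepsilon_j$. Write $\hat I_j=\int_\Lambda f(\hat\pi_{j,m}(\lambda))\,\mu(d\lambda)$ and $I_j^*=\int_\Lambda f(\pi_{j,m}(\lambda))\,\mu(d\lambda)$. Then $j\in\hat{\mathcal{E}}_\gamma\,\Delta\,\mathcal{E}_\gamma^*$ only if $\hat I_j$ and $I_j^*$ lie on opposite sides of $\gamma$. Since $|I_j^*-\gamma|=\varepsilon_j\ge\varepsilon$, this forces $|\hat I_j-I_j^*|\ge\varepsilon$. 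Markov's inequality then gives
\begin{align*}
\mathbb{E}\bigl|\hat{\mathcal{E}}_\gamma\,\Delta\,\mathcal{E}_\gamma^*\bigr| \le \sum_{j=1}^p \mathbb{P}\bigl(|\hat I_j-I_j^*|\ge\varepsilon\bigr)\le \frac{1}{\varepsilon}\sum_{j=1}^p\mathbb{E}|\hat I_j-I_j^*|.
\end{align*}
The Lipschitz property of $f$ and the fact that $\mu$ is a probability measure give $\mathbb{E}|\hat I_j-I_j^*|\le L_f\sup_{\lambda}\mathbb{E}|\hat\pi_{j,m}(\lambda)-\pi_{j,m}(\lambda)|$, after using Fubini. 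So it suffices to show, for each fixed $\lambda$ and $j$, that $\mathbb{E}|\hat\pi_{j,m}(\lambda)-\pi_{j,m}(\lambda)|$ is bounded by the parenthesized quantity in \eqref{eq:convergence_bound}. Summing over $j$ produces the factor $p$.

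For fixed $\lambda$, let $\tilde\pi_{j,m}(\lambda)$ be the oracle analogue of $\hat\pi_{j,m}(\lambda)$, i.e. the same average over the $2B$ subsamples $A_b$ but with $\tau$ in place of $\hat\tau_{A_b^c}$. Split
\begin{align*}
\mathbb{E}|\hat\pi-\pi| \le \mathbb{E}|\hat\pi-\tilde\pi|+\mathbb{E}|\tilde\pi-\pi|.
\end{align*}
\textbf{First term.} Bound it by the average over $b$ of $\mathbb{P}\bigl(\1(\Phi_j(\bX_{A_b},\hat\tau_{A_b^c})\ge\lambda)\neq\1(\Phi_j(\bX_{A_b},\tau)\ge\lambda)\bigr)$. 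By \cref{ass:lipschitz}, such a disagreement requires $|\Phi_j(\bX_{A_b},\tau)-\lambda|\le L\|\hat\tau_{A_b^c}-\tau\|$. Conditioning on $\hat\tau_{A_b^c}$ and using \cref{ass:density}, this has probability at most $2ML\|\hat\tau_{A_b^c}-\tau\|$, so the term is at most $2ML\,\mathbb{E}\|\hat\tau_{A^c}-\tau\|$. This is exactly the argument behind \cref{thrm:bias}, now applied to the absolute deviation rather than the bias, and it is where the leading term comes from.

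\textbf{Second term.} Use Cauchy--Schwarz: $\mathbb{E}|\tilde\pi-\pi|\le\sqrt{\var(\tilde\pi)}$, because $\mathbb{E}\tilde\pi=\pi$. Then invoke \cref{lem:variance}, which I expect to give $\var(\tilde\pi_{j,m}(\lambda))\le \frac{1}{4B}+\frac{2m^2}{n}$. The $\frac{1}{4B}$ is the Monte Carlo part from averaging Bernoulli indicators over the subsample draws. The $\frac{2m^2}{n}$ controls the covariance between indicators on two random subsamples via the probability that they overlap, roughly $m^2/n$.

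\textbf{Main obstacle.} The extra $4ML\,\mathbb{E}\|\hat\tau_{A^c}-\tau\|$ inside the square root suggests the authors instead bound $\mathbb{E}(\hat\pi-\pi)^2$ directly, or bound $\var(\hat\pi)$. In that route one compares $(\hat\pi-\pi)^2$ with $(\tilde\pi-\pi)^2$: since all quantities lie in $[0,1]$, $|(\hat\pi-\pi)^2-(\tilde\pi-\pi)^2|\le 2|\hat\pi-\tilde\pi|$, and the first-term argument bounds its expectation by $4ML\,\mathbb{E}\|\hat\tau_{A^c}-\tau\|$. This yields $\mathbb{E}|\hat\pi-\mathbb{E}\hat\pi|\le\sqrt{1/(4B)+2m^2/n+4ML\,\mathbb{E}\|\hat\tau_{A^c}-\tau\|}$, and combining with the bias bound of \cref{thrm:bias} via $\mathbb{E}|\hat\pi-\pi|\le\mathbb{E}|\hat\pi-\mathbb{E}\hat\pi|+|\mathbb{E}\hat\pi-\pi|$ gives exactly the stated form. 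I would follow this route: bias from \cref{thrm:bias}, and spread from a perturbed-variance bound built on \cref{lem:variance}. The delicate step is making \cref{lem:variance} rigorous, namely bounding the covariance of indicators from overlapping random subsamples, and the centering shift between $\mathbb{E}\hat\pi$ and $\pi$, which only affects constants.
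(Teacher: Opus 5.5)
Your proposal is correct, and the route you say you would follow is essentially the paper's proof. Both use the margin, Markov's inequality and the Lipschitz property of $f$, then bound $\mathbb{E}|\hat\pi-\pi|\le|\mathbb{E}\hat\pi-\pi|+\sqrt{\var(\hat\pi)}$, taking the bias from \cref{thrm:bias} and controlling $\var(\hat\pi)$ by \cref{lem:variance} plus a $4ML\,\mathbb{E}\lVert\hat{\tau}_{A^c}-\tau\rVert$ perturbation. The paper perturbs via $\var(\hat\pi)\le 2\var(\hat\pi^*)+2\var(\hat\pi-\hat\pi^*)$ rather than your difference of squares, and that factor $2$ is where $\frac{1}{4B}+\frac{2m^2}{n}$ comes from; the lemma itself gives $\frac{1}{8B}+\frac{m^2}{n}$. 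Your first route through $\tilde\pi$ was already a complete proof, and a sharper one: it makes the parenthesized factor $2ML\,\mathbb{E}\lVert\hat{\tau}_{A^c}-\tau\rVert+\sqrt{\frac{1}{8B}+\frac{m^2}{n}}$, which is no larger than the stated one, so you did not need to abandon it.
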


The margin condition \eqref{eq:margin_condition} excludes the degenerate case in which variables lie exactly on the selection boundary. Since the integrated oracle scores $\smallint f(\pi_{j,m}(\lambda))\, \mu(d\lambda)$ are $p$ deterministic numbers, this condition holds for at most $p$ values of $\gamma$ (and hence Lebesgue almost-everywhere on $\Lambda$). The bound \eqref{eq:convergence_bound} separates the three sources of error: Monte Carlo error from finite $B$, dependence from overlapping subsamples (the $2m^2/n$ term), and CATE estimation error.

\begin{corollary}[Oracle convergence]\label{cor:oracle_convergence}
Under the conditions of \cref{thrm:convergence}, if $B \to \infty$, $m = o(\sqrt{n})$, and $\mathbb{E}\lVert\hat{\tau}_{A^c}-\tau\rVert \to 0$, then $\mathbb{E}\lvert\hat{\mathcal{E}}_\gamma\,\Delta\,\mathcal{E}_\gamma^*\rvert\to 0$
\end{corollary}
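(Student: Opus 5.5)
The corollary follows directly from the bound \eqref{eq:convergence_bound} in \cref{thrm:convergence}, so the plan is to show that its right-hand side tends to zero under the stated limits. The prefactor $L_f p/\varepsilon$ does not depend on $n$ or $B$: $p$ is fixed, $L_f$ is the Lipschitz constant of $f$, and $\varepsilon$ is determined by the oracle selection probabilities $\pi_{j,m}(\lambda)$. I will therefore treat $L_f p/\varepsilon$ as a constant and check that the term in parentheses vanishes. Implicit in the statement is that the margin $\varepsilon$ stays bounded away from zero along the sequence; if $m$ varies with $n$, the $\pi_{j,m}$ change with it, so I will assume $\liminf \varepsilon>0$ (or $m$ fixed). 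This is the only delicate point, and I would state it explicitly.

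Next I bound the terms separately. The linear term $2ML\,\mathbb{E}\lVert\hat{\tau}_{A^c}-\tau\rVert$ tends to $0$ by the hypothesis on the CATE error, since $M$ and $L$ are constants from \cref{ass:lipschitz,ass:density}. Inside the square root, $1/(4B)\to 0$ because $B\to\infty$. The term $2m^2/n\to 0$ because $m=o(\sqrt n)$ is equivalent to $m^2/n\to 0$. The term $4ML\,\mathbb{E}\lVert\hat{\tau}_{A^c}-\tau\rVert\to 0$ for the same reason as the linear term. The radicand is therefore a sum of three nonnegative sequences that each tend to zero, so it tends to zero, and by continuity of $\sqrt{\cdot}$ at $0$ so does its square root.

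Combining these, the right-hand side of \eqref{eq:convergence_bound} tends to zero. Since $\mathbb{E}|\hat{\mathcal{E}}_\gamma\,\Delta\,\mathcal{E}^*_\gamma|\ge 0$, the conclusion follows. I would also note that the convergence rate is governed by the square-root term, which is of order $(B^{-1}+m^2/n+\mathbb{E}\lVert\hat\tau_{A^c}-\tau\rVert)^{1/2}$.
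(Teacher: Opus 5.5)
Your argument is correct and matches how the paper obtains the corollary. The paper gives no separate proof: the result is read off directly from the bound \eqref{eq:convergence_bound} in \cref{thrm:convergence}, with the $1/(4B)$, $2m^2/n$, and CATE-error terms each vanishing under the stated limits. Your caveat that $\varepsilon$ must stay bounded away from zero when $m$ grows with $n$ is well taken and is left implicit in the paper. The same uniformity is needed for $M$ and $L$, since the density bound for $\Phi_j(\bX_A,\tau)$ and the Lipschitz constant can also depend on $m$; for fixed $m$ none of this is an issue.
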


\subsection{False discovery control}\label{sec:theory_fdr}

We now establish a finite-sample upper bound on the expected number of false positives, $\EFP=\mathbb{E}\lvert\hat{\mathcal{E}}_\gamma \cap \mathcal{E}^c\rvert$, selected by \css{}. Let
\begin{align}\label{eq:qhat_def}
    \hat{q}(\lambda)
    =
    \mathbb{E}
    \bigl|
        \hat{S}_\lambda(\bX_A,\hat{\tau}_{A^c})
    \bigr|
\end{align}
denote the expected number of variables selected by $\hat{S}_\lambda$ on samples $\{(\bX_i,\hat{\tau}_{A^c}(\bX_i)) : i\in A\}$, where $\hat{\tau}_{A^c}$ is trained on samples in $A^c$. Similarly, let $q(\lambda)=\mathbb{E}\lvert\hat{S}_\lambda(\bX_A,\tau)\rvert$ be the expected number of variables selected by $\hat{S}_\lambda$ on samples in $A$, but with the true CATE in place of $\hat{\tau}_{A^c}$.
\begin{assumption}\label{ass:exchangeability}
For $r\in\mathbb{N}$, let $A_1,\dots,A_{2r}$ be subsets of $[n]$ of size $m$, drawn as in \cref{alg:selection_probabilities}. We say that \cref{ass:exchangeability}a holds for $r$ if
\begin{align}\label{eq:exchangeability_condition_a}
    \max_{j \in \mathcal{E}^c}
    \int_\Lambda
    \mathbb{P}\!\left(
        j \in
        \bigcap_{b=1}^{2r}
        \hat{S}_\lambda(\bX_{A_b},\hat{\tau}_{A_b^c})
    \right)
    \mu(d\lambda)
    \leq
    \int_\Lambda
    \left(
        \frac{\hat{q}(\lambda)}{p}
    \right)^{2r}
    \mu(d\lambda).
\end{align}
Similarly, we say that \cref{ass:exchangeability}b holds for $r$ if
\begin{align}\label{eq:exchangeability_condition_b}
    \max_{j \in \mathcal{E}^c}
    \int_\Lambda
    \mathbb{P}\!\left(
        j \in
        \bigcap_{b=1}^{2r}
        \hat{S}_\lambda(\bX_{A_b},\tau)
    \right)
    \mu(d\lambda)
    \leq
    \int_\Lambda
    \left(
        \frac{q(\lambda)}{p}
    \right)^{2r}
    \mu(d\lambda).
\end{align}
\end{assumption}

Assumptions \ref{ass:exchangeability}a and \ref{ass:exchangeability}b say that, on average over all $\lambda\in\Lambda$, the probability that $\hat{S}_\lambda$ selects the same null variable across multiple random subsamples of the data is no larger than the corresponding average probability under uniformly random selection with expected size $\hat{q}(\lambda)$ or $q(\lambda)$, respectively. The only difference between the two is that \cref{ass:exchangeability}a is stated in terms of the estimated CATEs, $\hat{\tau}_{A_b^c}$, while \cref{ass:exchangeability}b uses the true CATE, $\tau$.

\begin{remark}[Relaxation of the IPSS condition]\label{rem:weakened_condition}
\Cref{ass:exchangeability} is stated in terms of integrals over $\Lambda$, whereas the analogous condition in \citet{melikechi2026integrated} (called Condition 1 in that paper) is required to hold pointwise in $\lambda$. Since the pointwise inequality implies the integrated one, \cref{ass:exchangeability} is strictly weaker. This same relaxation applies to the original IPSS false discovery bound and requires no changes to its proof.
\end{remark}

\begin{theorem}[False discovery control]\label{thrm:fdr}
Suppose \cref{ass:exchangeability}a holds for $r \in \{1,2,3\}$. Then
\begin{align}\label{eq:fdr_bound}
    \mathbb{E}
    \bigl|
        \hat{\mathcal{E}}_\gamma
        \cap
        \mathcal{E}^c
    \bigr|
    \leq
    \frac{1}{\gamma}
    \int_\Lambda
        \frac{\hat{q}(\lambda)^2}{B^2 p}
        +
        \frac{3(B-1)\hat{q}(\lambda)^4}{B^2 p^3}
        +
        \frac{(B-1)(B-2)\hat{q}(\lambda)^6}{B^2 p^5}\,
    \mu(d\lambda).
\end{align}
\end{theorem}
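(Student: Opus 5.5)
The plan follows the IPSS argument of \citet{melikechi2026integrated}, with the estimated CATE folded into the selector. Write $\hat{S}^{(b)}_\lambda = \hat{S}_\lambda(\bX_{A_b},\hat{\tau}_{A_b^c})$, and for a fixed null $j\in\mathcal{E}^c$ let $\hat{\pi}_j(\lambda)=\frac{1}{2B}\sum_{b=1}^{2B}\1(j\in\hat{S}^{(b)}_\lambda)$. The first step is Markov's inequality:
\begin{align*}
\mathbb{E}\bigl|\hat{\mathcal{E}}_\gamma\cap\mathcal{E}^c\bigr| = \sum_{j\in\mathcal{E}^c}\mathbb{P}\Bigl(\int_\Lambda f(\hat{\pi}_j(\lambda))\,\mu(d\lambda)\geq\gamma\Bigr) \leq \frac{1}{\gamma}\sum_{j\in\mathcal{E}^c}\int_\Lambda \mathbb{E} f(\hat{\pi}_j(\lambda))\,\mu(d\lambda),
\end{align*}
where Fubini justifies swapping the expectation and the integral. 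It then suffices to bound $\int\mathbb{E}f(\hat{\pi}_j)\,d\mu$ uniformly over null $j$ and multiply by $|\mathcal{E}^c|\leq p$.

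The second step bounds $f$ pairwise. Group the subsamples by complementary pair: for $b=1,\dots,B$ let $\Pi_b=\1(j\in\hat{S}^{(2b-1)}_\lambda\cap\hat{S}^{(2b)}_\lambda)$ and $\tilde{\pi}_j=\frac1B\sum_b\Pi_b$. Since $\1(x)+\1(y)\leq 1+\1(x)\1(y)$, we have $2\hat{\pi}_j-1\leq\tilde{\pi}_j$. Because $f$ vanishes below $1/2$ and is increasing, this gives $f(\hat{\pi}_j)\leq\tilde{\pi}_j^3$. Next expand
\begin{align*}
\tilde{\pi}_j^3=B^{-3}\sum_{b,b',b''}\Pi_b\Pi_{b'}\Pi_{b''}.
\end{align*}
Using $\Pi_b^2=\Pi_b$, the terms group by the number of distinct pair indices: $B$ terms with one distinct index, $3B(B-1)$ with two, and $B(B-1)(B-2)$ with three. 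A product over $r$ distinct pairs is the indicator that $j$ lies in the intersection of $2r$ subsamples drawn as in \cref{alg:selection_probabilities}.

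The third step takes expectations and integrates against $\mu$. \cref{ass:exchangeability}a for $r=1,2,3$ bounds each integrated probability by $\int(\hat{q}/p)^{2r}\,d\mu$. This yields
\begin{align*}
\int \mathbb{E}\tilde{\pi}_j^3\,d\mu \leq \int \frac{\hat{q}^2}{B^2p^2}+\frac{3(B-1)\hat{q}^4}{B^2p^4}+\frac{(B-1)(B-2)\hat{q}^6}{B^2p^6}\,d\mu.
\end{align*}
Multiplying by $p/\gamma$ gives exactly \eqref{eq:fdr_bound}.

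The main obstacle is that \cref{ass:exchangeability}a holds only after integration, not pointwise in $\lambda$. The cross terms must therefore be integrated before the assumption is invoked, which is the order the linear expansion permits: the expectation of the cubic is a sum of probabilities of intersections, so each group can be integrated separately before bounding. A second point to check is that pairs drawn in different iterations $b$ share the same joint law of $(A_b,\hat{\tau}_{A_b^c})$, because the assumption is stated for subsets drawn as in the algorithm. Beyond this, the estimated CATE enters only through $\hat{q}$, so the cross-fitting structure needs no extra argument; this is consistent with the theorem not requiring \cref{ass:lipschitz,ass:density}.
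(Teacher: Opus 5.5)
Your proposal is correct and follows the paper's proof essentially step for step. The shared steps are Markov's inequality for each null $j$, the pairwise bound $f(\hat{\pi}_j(\lambda))\leq\tilde{\pi}_j(\lambda)^3$ (your $\tilde{\pi}_j$ is the paper's simultaneous selection probability $\hat{\xi}_j$), expansion of the cube using idempotence of the indicators, and term-by-term use of \cref{ass:exchangeability}a after integrating against $\mu$. Your ordered-triple count $B$, $3B(B-1)$, $B(B-1)(B-2)$ agrees exactly with the paper's multinomial count over $\Delta_3$ grouped by $N_k$, and the exchangeability-across-iterations point you flag is one the paper also uses, though only implicitly.
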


The proof follows the structure of the IPSS $\EFP$ bound with two modifications: $\hat{q}(\lambda)$ is defined for the cross-fitted causal selector $\hat{S}_\lambda(\bX_A,\hat{\tau}_{A^c})$, and the multinomial expansion at the heart of the argument requires only the integrated form of \cref{ass:exchangeability} (\cref{rem:weakened_condition}). \cref{thrm:fdr} also holds for arbitrary base selectors $\hat{S}_\lambda$, since its proof does not invoke \cref{ass:lipschitz,ass:density}. The next result, \cref{thrm:oracle_fdr}, provides a complementary bound under the oracle exchangeability condition, \cref{ass:exchangeability}b. The bound matches the IPSS false-positive bound with $q(\lambda)$ in place of $\hat{q}(\lambda)$, plus an additional term $\rho$ from \cref{thrm:convergence} that vanishes asymptotically.

\begin{theorem}[Oracle false discovery control]\label{thrm:oracle_fdr}
If the assumptions of \cref{thrm:convergence} hold, and if
\cref{ass:exchangeability}b holds for $r \in \{1,2,3\}$, then
\begin{align}\label{eq:oracle_fdr_bound}
    \mathbb{E}
    \bigl|
        \hat{\mathcal{E}}_\gamma
        \cap
        \mathcal{E}^c
    \bigr|
    \leq
    \rho
    +
    \frac{1}{\gamma}
    \int_\Lambda
        \frac{q(\lambda)^2}{B^2 p}
        +
        \frac{3(B-1)q(\lambda)^4}{B^2 p^3}
        +
        \frac{(B-1)(B-2)q(\lambda)^6}{B^2 p^5}\,
    \mu(d\lambda),
\end{align}
where $\rho$ denotes the right-hand side of \eqref{eq:convergence_bound}.
\end{theorem}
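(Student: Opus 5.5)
The plan is to compare $\hat{\mathcal{E}}_\gamma$ with an oracle selection set that never sees $\hat{\tau}$, and then apply the IPSS false-positive bound to that oracle set. Define
\begin{align*}
\tilde{\mathcal{E}}_\gamma = \left\{ j : \int_\Lambda f\bigl(\tilde{\pi}_{j,m}(\lambda)\bigr)\,\mu(d\lambda) \geq \gamma \right\},
\end{align*}
where $\tilde{\pi}_{j,m}(\lambda) = \frac{1}{2B}\sum_{b=1}^{2B}\1\bigl(j\in\hat{S}_\lambda(\bX_{A_b},\tau)\bigr)$ is \cref{alg:selection_probabilities} run with the true CATE $\tau$ in place of every $\hat{\tau}_{A_b^c}$, on the same random subsets $A_1,\dots,A_{2B}$. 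Since $\hat{\mathcal{E}}_\gamma\cap\mathcal{E}^c\subseteq(\tilde{\mathcal{E}}_\gamma\cap\mathcal{E}^c)\cup(\hat{\mathcal{E}}_\gamma\setminus\tilde{\mathcal{E}}_\gamma)$, we get
\begin{align*}
\mathbb{E}\bigl|\hat{\mathcal{E}}_\gamma\cap\mathcal{E}^c\bigr| \leq \mathbb{E}\bigl|\tilde{\mathcal{E}}_\gamma\cap\mathcal{E}^c\bigr| + \mathbb{E}\bigl|\hat{\mathcal{E}}_\gamma\,\Delta\,\tilde{\mathcal{E}}_\gamma\bigr|.
\end{align*}
The two terms are handled separately.

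\textbf{First term.} The set $\tilde{\mathcal{E}}_\gamma$ is exactly the IPSS estimator applied to covariate--response pairs $(\bx_i,\tau(\bx_i))$ with complementary-pairs subsampling. \cref{ass:exchangeability}b for $r\in\{1,2,3\}$ is the (integrated) IPSS exchangeability condition with expected selection size $q(\lambda)$. I would therefore invoke the argument of \cref{thrm:fdr} verbatim, with $\hat{S}_\lambda(\bX_{A_b},\tau)$ replacing $\hat{S}_\lambda(\bX_{A_b},\hat{\tau}_{A_b^c})$ and $q$ replacing $\hat{q}$; \cref{rem:weakened_condition} guarantees the integrated form suffices. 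The argument runs as follows: Markov's inequality, then $f(x)\le(2x-1)^3_+$, then a multinomial expansion of the cubed average over $B$ pairs into terms involving the intersections of $2$, $4$ and $6$ selection sets. This yields the integral term in \eqref{eq:oracle_fdr_bound}. Nothing in that argument uses $\hat{\tau}$, so the substitution is mechanical.

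\textbf{Second term.} I would show $\mathbb{E}|\hat{\mathcal{E}}_\gamma\Delta\tilde{\mathcal{E}}_\gamma|\le\rho$ by reusing the proof of \cref{thrm:convergence}. That proof bounds $\mathbb{E}|\hat{\mathcal{E}}_\gamma\Delta\mathcal{E}^*_\gamma|$ via the margin $\varepsilon$, the Lipschitz constant $L_f$, the bias bound of \cref{thrm:bias}, and the variance bound of \cref{lem:variance}. The obstacle is that \cref{thrm:convergence} compares $\hat{\mathcal{E}}_\gamma$ to the deterministic set $\mathcal{E}^*_\gamma$, not to the random set $\tilde{\mathcal{E}}_\gamma$. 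Two routes are available.

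\begin{enumerate}
\item The simplest route is to avoid $\tilde{\mathcal{E}}_\gamma$ altogether. Write $\mathbb{E}|\hat{\mathcal{E}}_\gamma\cap\mathcal{E}^c|\le|\mathcal{E}^*_\gamma\cap\mathcal{E}^c|+\mathbb{E}|\hat{\mathcal{E}}_\gamma\Delta\mathcal{E}^*_\gamma|$. The second piece is at most $\rho$ directly by \cref{thrm:convergence}. The remaining task is to bound the deterministic quantity $|\mathcal{E}^*_\gamma\cap\mathcal{E}^c|$ by the IPSS integral. For $j\in\mathcal{E}^*_\gamma\cap\mathcal{E}^c$, $\gamma\le\int f(\pi_{j,m})\,d\mu$, and $f$ is convex with $f(0)=0$, so Jensen's inequality gives $f(\pi_{j,m}(\lambda))=f(\mathbb{E}\tilde{\pi}_{j,m}(\lambda))\le\mathbb{E}f(\tilde{\pi}_{j,m}(\lambda))$. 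Hence $|\mathcal{E}^*_\gamma\cap\mathcal{E}^c|\le\gamma^{-1}\sum_{j\in\mathcal{E}^c}\mathbb{E}\int f(\tilde{\pi}_{j,m})\,d\mu$. This is exactly the quantity the IPSS multinomial argument bounds, so the first-term analysis applies to it unchanged.
\item The alternative, if a direct comparison with $\tilde{\mathcal{E}}_\gamma$ is preferred, is the triangle inequality through $\mathcal{E}^*_\gamma$. This would require a second copy of the convergence bound for $\tilde{\mathcal{E}}_\gamma$, with the CATE terms set to zero, and so would lose a constant factor.
\end{enumerate}

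I will follow the first route, since it reproduces exactly $\rho$ plus the oracle IPSS integral and matches the statement. The step to check carefully is whether the IPSS proof indeed bounds $\sum_{j\in\mathcal{E}^c}\mathbb{E}\int f(\tilde{\pi}_{j,m})\,d\mu$ rather than only a probability. I expect it does, since the Markov step precedes the expansion.
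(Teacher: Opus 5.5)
Your proposal is correct, and your first route is the paper's proof. You bound $\mathbb{E}|\hat{\mathcal{E}}_\gamma\,\Delta\,\mathcal{E}^*_\gamma|\le\rho$ by \cref{thrm:convergence} and bound the deterministic $|\mathcal{E}^*_\gamma\cap\mathcal{E}^c|$ by $\gamma^{-1}\sum_{j\in\mathcal{E}^c}\int f(\pi_j)\,d\mu$; you then apply Jensen using convexity of $f$ and unbiasedness of the oracle $\hat{\pi}^*_j$, and finish with $f(\hat{\pi}^*_j)\le\xi_j^3$ and \cref{lem:moment_bound} at $r=3$ under \cref{ass:exchangeability}b. The point you flagged is fine: that lemma bounds $\int_\Lambda\mathbb{E}[\xi_j(\lambda)^3]\,\mu(d\lambda)$ directly, so the argument applies to an expectation, and the detour through the random set $\tilde{\mathcal{E}}_\gamma$ is unnecessary.
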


The quantity $\rho$ in \eqref{eq:oracle_fdr_bound} controls the discrepancy between the cross-fitted and oracle selection sets. By \cref{cor:oracle_convergence}, if $B \to \infty$, $m = o(\sqrt{n})$, and $\mathbb{E}\lVert\hat{\tau}_{A^c}-\tau\rVert \to 0$, then $\rho \to 0$ and the bound reduces to the IPSS false-positive bound; that is, the expected number of false positives in the \css{} estimator $\hat{\mathcal{E}}_\gamma$ converges to the expected number of false positives in the oracle selection set $\mathcal{E}^*_\gamma$.

\begin{remark}[FDR control via efp scores]\label{rem:fdr}
\Cref{thrm:fdr} bounds $\EFP = \mathbb{E}|\hat{\mathcal{E}}_\gamma \cap \mathcal{E}^c|$ directly. In practice, we convert this bound into an \textit{efp (expected false positive) score} $\efp(j)$ for each variable $j$, defined as the smallest value of $\EFP$ at which $j$ would be selected \citep{melikechi2025nonparametric}. Given a target $\EFP$ level $t$, the selection set $\{j : \efp(j) \leq t\}$ contains at most $t$ false positives on average. For FDR control at level $\alpha$, we choose the largest $t$ for which the resulting selection set $\hat{\mathcal{E}}(t) = \{j : \efp(j) \leq t\}$ satisfies $t / |\hat{\mathcal{E}}(t)| \leq \alpha$, which follows from the approximation $\mathrm{FDR} \approx \EFP / \mathbb{E}|\hat{\mathcal{E}}(t)|$ \citep{storey2003positive}.
\end{remark}


\section{Simulation Studies}\label{sec:simulations}


We conduct a series of simulation studies to evaluate the performance of \css{} relative to other effect modifier discovery methods across a range of settings. There are 34 experiments total, 17 in which the prognostic function and CATE are linear functions of the covariates, and 17 in which they are nonlinear (\cref{tab:sim_params}). Each experiment consists of 200 trials. In \cref{sec:sim_design}, we describe the simulation designs. In \cref{sec:sim_methods}, we describe the evaluation protocol and method implementations. In \cref{sec:sim_results}, we present the results. Additional results are reported in \cref{sup_sec:simulations}.

\subsection{Data generation}\label{sec:sim_design}

We generate data from the potential outcomes model~\eqref{eq:decomposition} with an additional parameter, $a$, that controls the relative contributions of the prognostic variables and effect modifiers to the outcome; see \eqref{eq:prog_strength} below. Covariates $\bX \in \mathbb{R}^p$ are drawn from a multivariate normal distribution with mean zero and Toeplitz covariance $\bm{\Sigma} \in \mathbb{R}^{p \times p}$, $\Sigma_{ij} = \rho^{|i-j|}$, then standardized to unit empirical variance. The sets $\mathcal{P}$ and $\mathcal{E}$ are subsets of $[p]$ of fixed sizes $|\mathcal{P}|$ and $|\mathcal{E}|$, drawn uniformly without replacement and independently of each other; $\mathcal{P} \cap \mathcal{E}$ may therefore be nonempty. Treatment assignment is modeled after a randomized control trial, $Z_i \sim \mathrm{Bernoulli}(1/2)$, in 30 of the 34 experiments. In the other four, there are 5 or 10 confounders drawn uniformly at random from $[p]$, which affect treatment assignment via a logistic propensity score, $\Pr(Z_i = 1 \mid \bX_i) = \bigl(1 + \exp({-\textstyle\sum_{j \in \mathcal{C}} X_{ij}})\bigr)^{-1}$, where $\mathcal{C} \subseteq [p]$ indexes the confounders. Confounders may overlap with $\mathcal{P}$ and $\mathcal{E}$.

In the linear case, the prognostic function and CATE are given by
\begin{align*}
  \mu(\bX) = \bX^\top\bbeta \quad\text{and}\quad \tau(\bX) = \bX^\top\bgamma,
\end{align*}
where the nonzero coordinates of $\bbeta$ and $\bgamma$ are indexed by $\mathcal{P}$ and $\mathcal{E}$, respectively, and drawn independently and uniformly from $[-1, -0.5] \cup [0.5, 1]$. In the nonlinear setting,
\begin{align*}
  \mu(\bX) = \sum_{j \in \mathcal{P}} \exp\!\left(-\tfrac{1}{8}X_j^2\right) \quad\text{and}\quad \tau(\bX) = \sum_{j \in \mathcal{E}} \exp\!\left(-\tfrac{1}{8}X_j^2\right).
\end{align*}
In both settings, $\mu$ and $\tau$ are centered and standardized to unit empirical variance within each simulated dataset. The observed outcome is
\begin{align}\label{eq:prog_strength}
  Y = a\mu(\bX) + Z\tau(\bX) + \varepsilon,
\end{align}
where $\varepsilon \sim \mathcal{N}(0, \sigma^2)$ and $\sigma^2$ is calibrated to achieve a prespecified signal-to-noise ratio,
\begin{align*}
  \mathrm{SNR} = \var\bigl(a\mu(\bX) + Z\tau(\bX)\bigr) \,/\, \sigma^2,
\end{align*}
with the variance computed empirically on each simulated dataset. The parameter $a \geq 0$ controls the relative magnitude of the prognostic and treatment components: in the RCT case, $\var(a\mu(\bX)) = a^2$ and $\var(Z\tau(\bX)) = 1/2$, so $a = 1/\sqrt{2}$ yields equal contributions, while the default $a = 1$ makes the prognostic component twice as variable as the treatment component. Parameter values for each experiment are summarized in \cref{tab:sim_params}.

\begin{table}[h]
\centering
\begin{tabular}{l | ccc | ccc}
\toprule
\multicolumn{1}{l}{Parameter} & \multicolumn{3}{c}{\hspace*{-0.5em}Linear} & \multicolumn{3}{c}{Nonlinear} \\
\midrule
$n$             & $500$ & $\mathbf{1000}$ & $2000$ & $500$ & $\mathbf{1000}$ & $2000$ \\
$p$             & $50$  & $\mathbf{100}$  & $200$  & $25$  & $\mathbf{50}$   & $100$  \\
$\lvert\mathcal{E}\rvert$ & $5$   & $\mathbf{10}$   & $15$   & $2$   & $\mathbf{5}$    & $10$   \\
$\lvert\mathcal{P}\rvert$ & $5$   & $\mathbf{10}$   & $15$   & $2$   & $\mathbf{5}$    & $10$   \\
$\lvert\mathcal{C}\rvert$     & $\mathbf{0}$ & $5$ & $10$ & $\mathbf{0}$ & $5$ & $10$ \\
$\rho$          & $0$   & $\mathbf{0.5}$  & $0.75$ & $0$   & $\mathbf{0.5}$  & $0.75$ \\
$\mathrm{SNR}$  & $0.5$ & $\mathbf{1}$    & $2$    & $0.5$ & $\mathbf{1}$    & $2$    \\
$a$             & $\tfrac{1}{\sqrt{2}}$ & $\mathbf{1}$ & $1.5$ & $\tfrac{1}{\sqrt{2}}$ & $\mathbf{1}$ & $1.5$ \\
\bottomrule
\end{tabular}
\caption{\textit{Simulation settings}. The baseline linear and nonlinear experiments correspond to all parameters set to their default values, shown in bold. The 32 single-parameter perturbations (16 per setting) each vary one parameter from its default to one of the non-bold values, holding the others fixed. Symbols: $n$ (sample size); $p$ (number of covariates); $\lvert\mathcal{E}\rvert$ (number of effect modifiers); $\lvert\mathcal{P}\rvert$ (number of prognostic variables); $\lvert\mathcal{C}\rvert$ (number of confounders); $\rho$ (correlation strength); $\mathrm{SNR}$ (signal-to-noise ratio); $a$ (relative prognostic strength).}
\label{tab:sim_params}
\end{table}

\subsection{Methods}\label{sec:sim_methods}

We compare \css{} to six methods: IPSS \citep{melikechi2026integrated}, knockoffs \citep{barber2015controlling}, the Benjamini-Hochberg procedure \citep{benjamini1995controlling}, Lasso \citep{tibshirani1996regression}, and variable importance scores from XGBoost \citep{chen2016xgboost} and causal forests \citep{wager2018estimation}. \cref{tab:methods} summarizes the key implementation details, which we elaborate upon below.

All methods except \css{} and causal forest operate on cross-fitted pseudo-outcomes $\hat{\tau}_i$ that serve as individual-level estimates of $\tau(\bx_i)$. These are generated via $K$-fold cross-fitting as described in \cref{sec:intro}: for each fold $k$, an X-Learner is fit on the $K-1$ training folds and applied to the held-out fold to obtain $\hat{\tau}_i$ for each $i \in A_k$. We use ridge regression for the X-Learner's base models in the linear setting and gradient boosting in the nonlinear setting. We use $K = 5$ folds for all methods except XGB-VI, for which $K = 10$ showed improved performance compared to $K=5$; the other methods showed no improvement at $K = 10$.

\begin{table}[h]
\centering
\begin{tabular}{l c c l l}
\toprule
Method & FDR control & Folds & Base method & Software \\
\midrule
\css{}            & \checkmark & ---  & Lasso/XGBoost                                           & ---                   \\
IPSS              & \checkmark & $5$  & Lasso/XGBoost                                           & \texttt{ipss}         \\
KO                & \checkmark & $5$  & Lasso/GLM      & \texttt{knockoff} (R) \\
BH                & \checkmark & $5$  & OLS $p$-values                                          & \texttt{statsmodels}  \\
Lasso             & $\times$   & $5$  & cross-validated $\ell_1$                                & \texttt{scikit-learn} \\
XGB-VI            & $\times$   & $10$ & XGBoost gain                                            & \texttt{xgboost}      \\
CF-VI             & $\times$   & ---  & intrinsic VI                                            & \texttt{econml}       \\
\bottomrule
\end{tabular}
\caption{Methods compared in the simulation studies. The Folds column denotes the number of cross-fitting folds used to generate pseudo-outcomes $\hat{\tau}_i$; \css{} and CF-VI operate directly on $(\bX, Y, Z)$ and perform their own internal cross-fitting. For \css{}, IPSS, and KO, slashed entries indicate linear/nonlinear configurations; for KO, \texttt{lasso\_coefdiff} and \texttt{glmnet\_coefdiff} are the corresponding knockoff statistics from the \texttt{knockoff} R package. CF-VI uses the variable importance scores intrinsic to the causal forest. All methods are implemented using Python packages except \texttt{knockoff}.}
\label{tab:methods}
\end{table}

\css{}, IPSS, KO, and BH each target a nominal FDR level $\alpha \in [0, 0.5]$. The remaining three methods yield a fixed selection that does not vary with a nominal level: Lasso selects all covariates with nonzero coefficients at the cross-validation-selected penalty $\lambda$, while XGB-VI and CF-VI rank covariates by their respective importance scores and select the top $|\mathcal{E}|$. The top-$|\mathcal{E}|$ rule provides these two methods with the oracle benefit of knowing the true number of effect modifiers, which they would not have in practice; their reported performance therefore represents an upper bound on what is achievable from the underlying rankings. As a direct consequence of selecting exactly $|\mathcal{E}|$ variables, their TPR and FDR are mechanically related by $\mathrm{FDR} = 1 - \mathrm{TPR}$.

Performance is summarized by the true positive rate (TPR) and false discovery rate (FDR), defined as in \cref{sec:intro}, averaged across the 200 trials. For \css{}, IPSS, KO, and BH, we plot mean TPR and mean FDR as functions of $\alpha$. The diagonal dashed line in the FDR panels serves as a reference for perfect nominal control; a method achieves control if its empirical FDR curve lies on or below this line. For Lasso, XGB-VI, and CF-VI, mean TPR and mean FDR are displayed as horizontal lines.

\subsection{Results}\label{sec:sim_results}

\begin{figure}
\includegraphics[width=\textwidth]{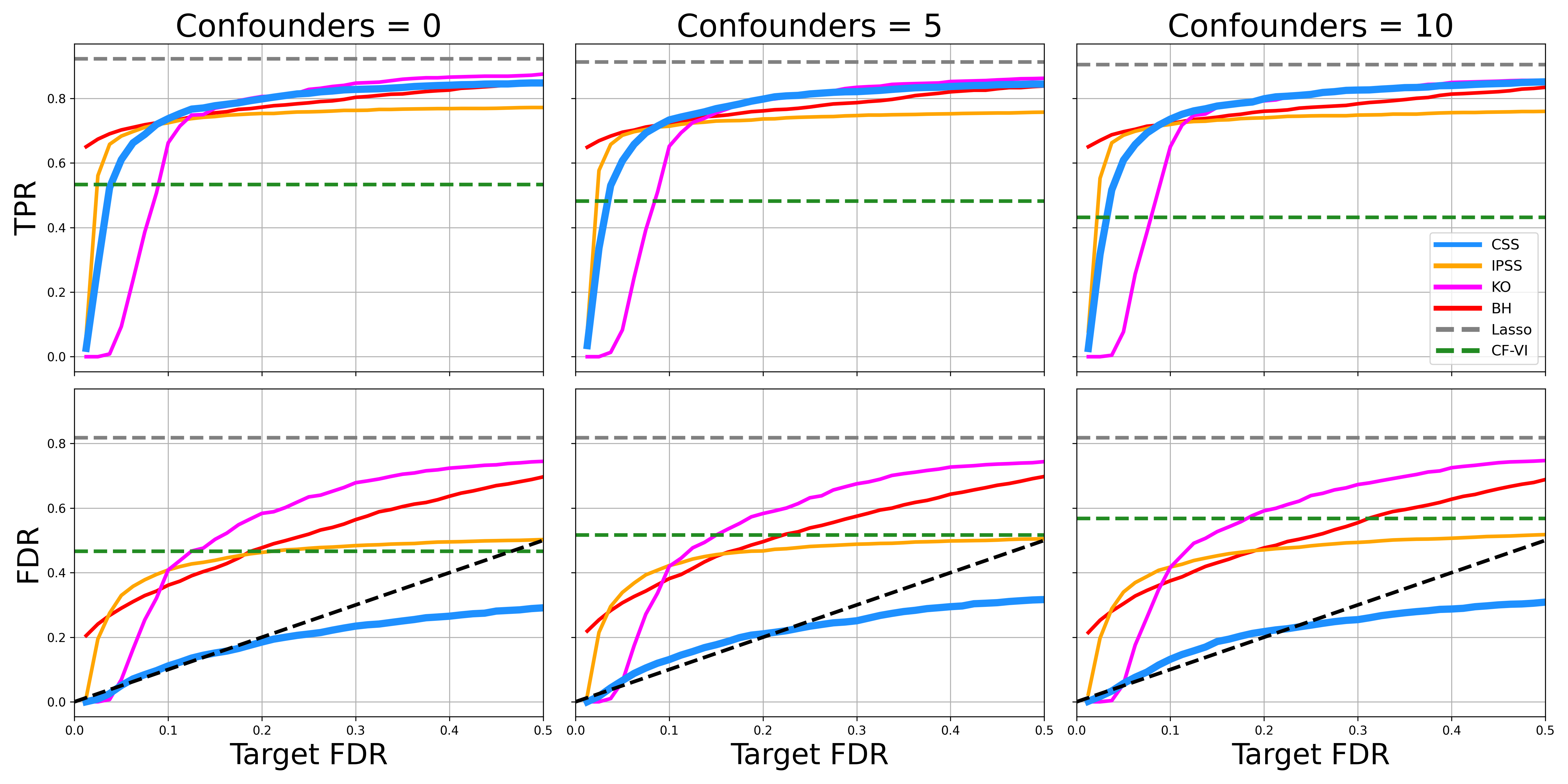}%
\caption{\textit{Linear results: Confounding variables.} Top and bottom rows show mean TPR and mean FDR (averaged over 200 trials); the diagonal dashed line indicates perfect nominal FDR control. \textbf{Left:} $|\mathcal{C}| = 0$ (RCT). \textbf{Middle:} $|\mathcal{C}| = 5$. \textbf{Right:} $|\mathcal{C}| = 10$.}
\label{fig:linear_confounders}
\end{figure}

\begin{figure}
\includegraphics[width=\textwidth]{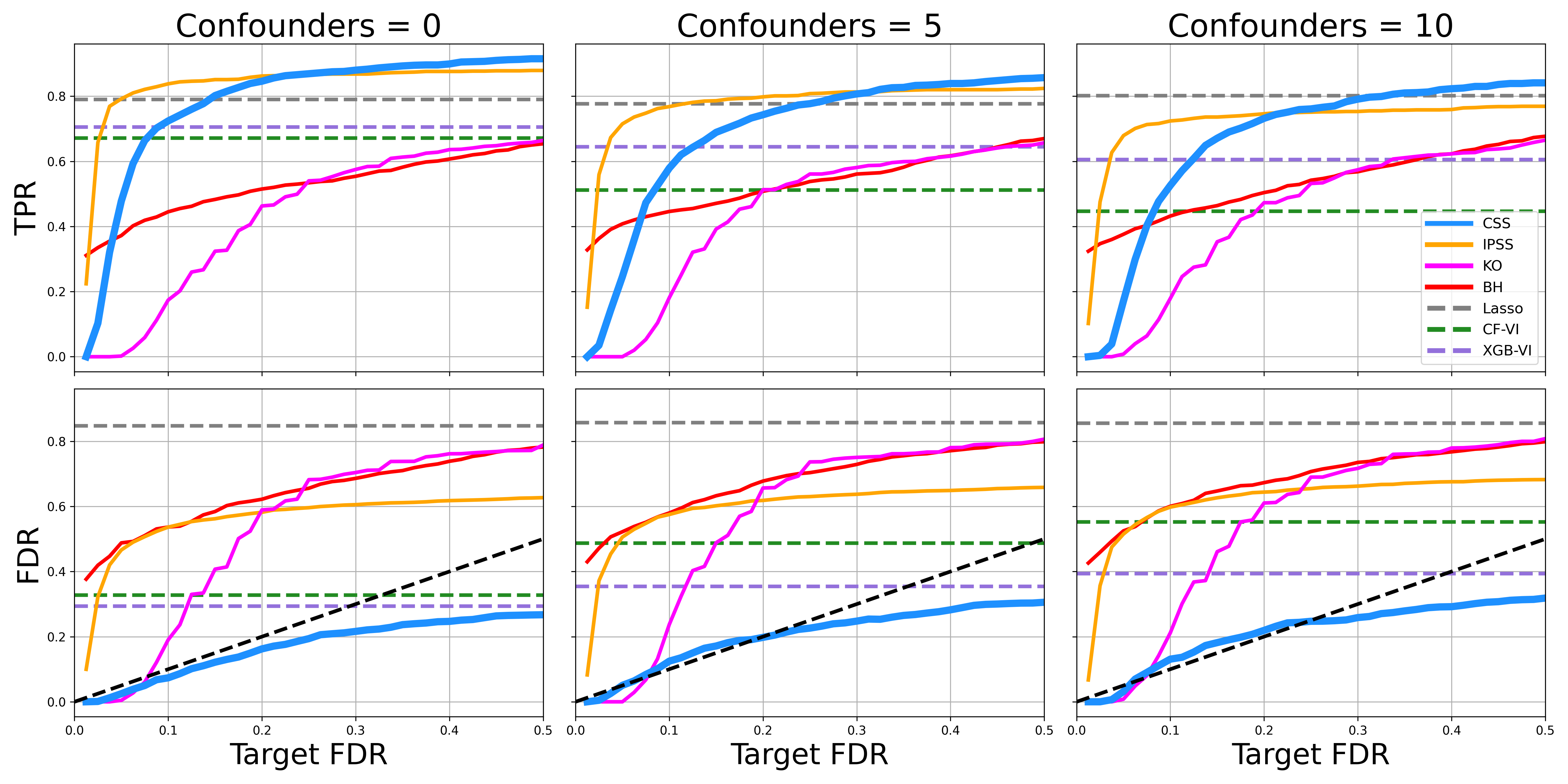}%
\caption{\textit{Nonlinear results: Confounding variables.} As in \cref{fig:linear_confounders} but in the nonlinear setting.}
\label{fig:nonlinear_confounders}
\end{figure}

\cref{fig:linear_confounders,fig:nonlinear_confounders} report results in the linear and nonlinear settings, respectively, as the number of confounders varies from 0 (RCT) to 10. The headline finding is consistent across both settings: \css{} is the only method that maintains empirical FDR at or below the nominal level $\alpha$ uniformly across confounding regimes, and it does so without sacrificing power relative to the other FDR-controlling alternatives. 

\cref{fig:linear_confounders} shows results in the linear setting. The empirical FDR of \css{} lies on or below the diagonal across the full range of target levels and is essentially unchanged as the number of confounders increases from 0 to 10. The other FDR-controlling methods are anti-conservative across all three regimes: IPSS plateaus at empirical FDR near 0.5, KO rises above 0.7, and BH exceeds 0.5 once $\alpha \geq 0.1$. The non-FDR-controlling baselines, Lasso and CF-VI, produce empirical FDR around 0.8 and 0.5, respectively. Despite its conservatism in FDR, \css{} attains TPR comparable to the best alternatives, reaching approximately 0.85 by $\alpha = 0.1$ and matching BH, KO, and Lasso in the high-$\alpha$ regime while substantially exceeding CF-VI and the plateaued IPSS curve.

\cref{fig:nonlinear_confounders} shows the corresponding nonlinear results. The qualitative pattern is preserved: \css{} is again the only method that controls FDR uniformly across confounding regimes. \css{} attains the highest TPR among methods with valid FDR control: by $\alpha = 0.2$ its TPR exceeds 0.85, surpassing the other FDR-controlling methods. The non-FDR-controlling baselines (Lasso, XGB-VI, CF-VI) achieve TPR between 0.5 and 0.8 but at empirical FDR levels of 0.3 to 0.85. The improvement of \css{} over other methods, such as IPSS, in the nonlinear setting is consistent with the difficulty of CATE estimation being more pronounced when the underlying response surface is nonlinear, which the cross-fitted design of \css{} is explicitly intended to address.

Additional experiments varying sample size $n$, dimension $p$, number of effect modifiers $|\mathcal{E}|$, number of prognostic variables $|\mathcal{P}|$, covariate correlation $\rho$, signal-to-noise ratio, and prognostic strength $a$ are reported in \cref{sup_sec:simulations}. The patterns observed in \cref{fig:linear_confounders,fig:nonlinear_confounders} are broadly consistent with those additional results.

\section{Application}\label{sec:realdata}

We apply \texttt{CausalStabSel} to two real-world datasets: a randomized controlled trial (RCT) and an observational study. The RCT data come from Amgen's PRIME trial, which evaluates the addition of a monoclonal-antibody therapy to standard chemotherapy in patients with previously untreated metastatic colorectal cancer (mCRC) \citep{douillard2010randomized}. The observational data come from the Pennsylvania birth-records study of maternal smoking and infant birthweight, a publicly available subset of the data used by \citet{cattaneo2010}. The objective in both settings is to assess the behavior of 
\texttt{CausalStabSel} on real data with partially known structure.

\subsection{PRIME Randomized Trial}

PRIME is a multicenter, open-label, randomized phase~3 trial comparing \textit{panitumumab}---a fully human anti-EGFR monoclonal antibody---plus FOLFOX4 chemotherapy to FOLFOX4 alone in mCRC \citep{douillard2010randomized}. The trial demonstrated a PFS benefit confined to KRAS wild-type patients, later refined by extended-RAS analyses to patients without mutations in KRAS or NRAS exons 2--4 \citep{douillard2013panitumumab}. Patient-level data were obtained through Project Data Sphere \citep{green2015project}.

The analytic sample consists of $N = 862$ patients with $p = 30$ baseline covariates spanning demographics, clinical features, mutation biomarkers, and laboratory values; the outcome is progression-free survival at 365 days, treated as a binary landmark. Several covariates are highly correlated (Figure~\ref{fig:corr_309}). The data are split into a discovery sample ($N = 603$, 70\%) and a held-out inference sample ($N = 259$, 30\%); an 80/20 split is used as a sensitivity analysis. As a baseline, plain AIPW regression on the discovery sample finds 2--4 covariates significant at $p < 0.05$ univariately and 6--7 multivariately (Supplementary Material Section~\ref{sup_sec:application}).
\subsubsection{Causal Stability Selection Results.}

We run \texttt{CausalStabSel} with: (i) DR-learner pseudo-outcomes with Ridge nuisance models and 5-fold cross-fitting; (ii) LASSO as the main selector and Gradient Boosting (GB) as a sensitivity check; (iii) $B = 500$ subsamples, $\delta = 2$, subsample size $n_{\text{disc}}/4$, and a $q$-value threshold of $0.10$. \texttt{CausalStabSel} selects five effect modifiers: platelets, BMMTR1 (KRAS exon 2), BMMTR3 (KRAS exon 4), age, and lactate dehydrogenase. The four-feature core (BMMTR1, BMMTR3, age, platelets) is selected in all four configurations of split ratio and selector (Supplementary Material Section~\ref{sup_sec:application}). Recovering BMMTR1 (KRAS exon 2) and BMMTR3 (KRAS exon 4) provides a calibration check: KRAS mutations are the canonical predictors of resistance to anti-EGFR therapy and motivated the trial's original subgroup analysis \citep{douillard2010randomized, douillard2013panitumumab}. The three additional selections---platelets, age, and lactate dehydrogenase---have established but less canonical associations with anti-EGFR response: elevated platelet counts mark systemic inflammation and angiogenesis pathways partially independent of EGFR signaling, while age and LDH are recognized prognostic factors with documented associations with treatment response. The goal of this application, however, is to assess the method's behavior, not to advance new clinical claims.

For corroboration, we run held-out inference using the BLP-on-CATE specification of \citet{chernozhukov2025generic}: regressing the estimated CATE on each de-meaned modifier individually, with HC1 robust standard errors. The estimated ATE is essentially zero ($-0.016$, 95\% CI $[-0.144, 0.112]$), consistent with the trial's well-known finding that benefit is confined to the KRAS wild-type subgroup \citep{douillard2010randomized}. Higher baseline platelet counts are associated with a smaller treatment effect (coefficient $-0.0014$, 95\% CI $[-0.0024, -0.0004]$); the other selected features do not reach significance in the held-out inference. This may reflect limited statistical power in the small held-out sample ($n_{\text{inf}} = 259$), especially for KRAS exon-4 mutations (prevalent in $<5\%$ of patients), but the possibility that some selections do not replicate cannot be ruled out.


\subsection{Birthweight Observational Study}

The observational analysis uses data from \citet{cattaneo2010}, a study of the effect of maternal smoking on infant birthweight drawn from Pennsylvania vital statistics birth records. The dataset has become a standard benchmark for evaluating causal-inference methods in observational settings \citep{di2026aggregation}. The data used here are a publicly available random sample of the original.

The analytic sample consists of $N = 4{,}642$ observations with $p = 15$ baseline covariates spanning maternal and paternal demographics, parity, marital status, and prenatal behaviors; the outcome is infant birthweight in grams (continuous), and the treatment is maternal smoking during pregnancy, with prevalence $18.6\%$. Several covariates are highly correlated (Figure~\ref{fig:corr_bw}), notably between maternal and paternal characteristics under assortative mating (Mother White / Father White, $r = 0.85$) and between mechanically related parity variables (Birth order / First baby, $r = -0.70$). The data are split into a discovery sample ($N = 3{,}249$, 70\%) and a held-out inference sample ($N = 1{,}393$, 30\%); an 80/20 split is used as a sensitivity analysis. As a baseline, plain AIPW regression on the discovery sample finds 3--7 covariates significant at $p < 0.05$ univariately and zero significant covariates multivariately (Supplementary Material Section~\ref{sup_sec:application}).

\subsubsection{Causal Stability Selection Results.}

We run \texttt{CausalStabSel} with the same configuration as in the PRIME application: DR-learner pseudo-outcomes with Ridge nuisance models and 5-fold cross-fitting; LASSO as the selector; $B = 500$ subsamples, $\delta = 2$, subsample size $n_{\text{disc}}/4$, and a $q$-value threshold of $0.10$. \texttt{CausalStabSel} selects two effect modifiers: Mother age and First baby; both are selected in the 70/30 and 80/20 splits (Supplementary Material Section~\ref{sup_sec:application}). Sensitivity-analysis results using Gradient Boosting as the selector are also reported in the Supplementary Material. Both Mother age and First baby have established associations with smoking-related birthweight reduction in the epidemiological literature: older mothers tend to be exposed to lower cumulative smoking dose and to have better-developed uterine vasculature; nulliparous women have less developed uterine and placental vasculature than multiparous women, potentially compounding the harm from smoking-induced vasoconstriction. The goal of this application, however, is to assess the method's behavior, not to advance new clinical claims.

For corroboration, we run held-out inference using the BLP-on-CATE specification of \citet{chernozhukov2025generic}: regressing the estimated CATE on each de-meaned modifier individually, with HC1 robust standard errors. The estimated ATE is $-266$g (95\% CI $[-331, -202]$), consistent with the consensus epidemiological estimate of a $200$--$300$g smoking penalty and providing a calibration check on the AIPW pseudo-outcomes. In the 70/30 inference sample, Mother age (coefficient $-14.9$g per year, 95\% CI $[-26.3, -3.5]$) and First baby (coefficient $+145.8$g, 95\% CI $[17.0, 274.6]$) both reach significance. Neither modifier retains significance in the 80/20 split, likely reflecting limited power in the smaller held-out sample ($n_{\text{inf}} = 929$ vs.\ $1{,}393$). Treatment-effect heterogeneity is detected here against a strongly significant negative ATE, a regime distinct from PRIME and complementary as a test of the method's behavior.

\subsection{Application remarks} 

Together, these applications demonstrate \css{} on two complementary regimes. PRIME involves a near-null average treatment effect, where heterogeneity is the substantive finding and recovering the canonical KRAS modifier serves as calibration. The birthweight study involves a strongly significant negative ATE, where heterogeneity is detected against a clear treatment effect and the estimate of the ATE itself provides an independent calibration of the pseudo-outcomes. In both cases \css{} identifies a parsimonious set of effect modifiers with established literature support, and held-out BLP-on-CATE inference provides corroboration for at least a subset of selections.

\section{Discussion}\label{sec:discussion}


We introduced \css{}, a method for effect modifier discovery that combines cross-fitted CATE estimation with integrated path stability selection. The procedure accommodates arbitrary treatment effect estimators and arbitrary base selectors, and produces a selection set with an explicit, non-asymptotic upper bound on the expected number of false positives. Our theoretical results connect the bias of the estimated selection probabilities directly to the convergence rate of the underlying CATE estimator, establishing a quantitative link between treatment effect estimation and effect modifier discovery. Across a comprehensive simulation grid and two real-world applications spanning a randomized oncology trial and an observational study of maternal smoking, \css{} consistently controls false discovery while maintaining competitive power.


Several extensions follow naturally from the framework developed here. The selection probability machinery applies, with appropriate modification, to subgroup discovery, time-varying treatments, selection subspaces and settings with interference, where the relevant heterogeneity object differs from the CATE but the cross-fitted stability selection structure is preserved \citep{zhang2025quantifying}. On the inferential side, post-selection inference for the effect modifiers identified by \css{} (e.g., building on the held-out BLP-on-CATE strategy used in our applications, but with formal guarantees) would convert discovery into estimation with valid coverage. We leave these directions for future work.

\bibliographystyle{plainnat}
\bibliography{biblio}

\pagebreak

\appendix
\counterwithin{figure}{section}
\counterwithin{table}{section}

\renewcommand{\thepage}{S\arabic{page}}
\setcounter{page}{1}

\renewcommand{\thesection}{S\arabic{section}}

\renewcommand{\thefigure}{S\arabic{figure}}
\renewcommand{\thetable}{S\arabic{table}}

\begin{center}
    \textbf{\large Supplementary Material for ``Causal Stability Selection''}\\ \vspace{0.25cm}
    \normalsize Falco J. Bargagli-Stoffi, Omar Melikechi
\end{center}

\section{Proofs}\label{sup_sec:proofs}


\subsection{\cref{sec:theory_bias} proofs}\label{sup_sec:oracle_selection}

\begin{proof}[Proof of \cref{thrm:bias}]
Set $\Phi_j=\Phi_j(\bX_A,\tau)$ and $\hat{\Phi}_j=\Phi_j(\bX_A,\hat{\tau}_{A^c})$ and define the events
\begin{align*}
E_j(\lambda) &= \{j\in\hat{S}_\lambda(\bX_A,\tau(\bX_A))\} 
	= \{\Phi_j \geq \lambda\}; \\
\hat{E}_j(\lambda) &= \{j\in\hat{S}_\lambda(\bX_A,\hat{\tau}_{A^c}(\bX_A))\} 
	= \{\hat{\Phi}_j \geq \lambda\}.
\end{align*}
The following sequence of equalities and inequalities constitutes the full proof and is presented without interruption for clarity. Details about the individual steps are provided subsequently. In the following, $B_1 \,\Delta\, B_2 = (B_1\setminus B_2)\cup(B_2\setminus B_1)$ denotes the symmetric difference of sets $B_1$ and $B_2$.
\begin{align*}
\lvert\mathbb{E}\left(\hat{\pi}_j(\lambda)\right) - \pi_j(\lambda)\rvert &= \big\lvert\mathbb{P}(\hat{E}_j(\lambda)) - \mathbb{P}\left(E_j(\lambda)\right)\big\rvert \\
	&\leq \mathbb{P}\left(\hat{E}_j(\lambda) \,\Delta\, E_j(\lambda)\right) \\
	&\leq \mathbb{P}\left(\lvert\Phi_j - \lambda\rvert \leq \lvert\hat{\Phi}_j - \Phi_j\rvert\right) \\
	&\leq \mathbb{P}\left(\lvert\Phi_j - \lambda\rvert \leq L\lVert\hat{\tau}_{A^c} - \tau\rVert\right) \\
	&= \mathbb{E}\left[\mathbb{P}\left(\lvert\Phi_j - \lambda\rvert \leq L\lVert\hat{\tau}_{A^c} - \tau\rVert\right) \mid \hat{\tau}_{A^c}\right] \\
	&\leq \mathbb{E}\left(2ML\lVert\hat{\tau}_{A^c} - \tau\rVert\right).
\end{align*}
The first equality is immediate. The first inequality holds because, for any sets $B_1$ and $B_2$,
\begin{align*}
\lvert\mathbb{P}(B_1) - \mathbb{P}(B_2)\rvert &= \big\lvert\left[\mathbb{P}(B_1\setminus B_2) + \mathbb{P}(B_1\cap B_2)\right] - \left[\mathbb{P}(B_2\setminus B_1) + \mathbb{P}(B_2\cap B_1)\right]\big\rvert \\
	&= \lvert\mathbb{P}(B_1\setminus B_2) - \mathbb{P}(B_2\setminus B_1)\rvert \\
	&\leq \mathbb{P}(B_1 \,\Delta\, B_2),
\end{align*}
the last step holding by the triangle inequality. The second inequality holds because $\hat{E}_j(\lambda) \,\Delta\, E_j(\lambda)$ is the union of the events
\begin{align*}
\hat{E}_j(\lambda)\setminus E_j(\lambda) &= \{\Phi_j < \lambda \leq \hat{\Phi}_j\} 
	= \{0 < \lambda - \Phi_j \leq \hat{\Phi}_j - \Phi_j\}\\
E_j(\lambda)\setminus \hat{E}_j(\lambda) &= \{\hat{\Phi}_j < \lambda \leq \Phi_j\}
	= \{\hat{\Phi}_j - \Phi_j < \lambda - \Phi_j \leq 0\}.
\end{align*}
Together, these imply $\hat{E}_j(\lambda) \,\Delta\, E_j(\lambda) \subseteq \{\lvert\Phi_j - \lambda\rvert \leq \lvert\hat{\Phi}_j - \Phi_j\rvert\}$, from which the second inequality follows. The third inequality is a direct consequence of the Lipschitz condition, \cref{ass:lipschitz}. The subsequent equality is a direct application of the law of total expectation, conditioning on the estimated CATE function $\hat{\tau}_{A^c}$. The critical observation is that $\hat{\tau}_{A^c}$ is trained on samples in $A^c$, while $\Phi_j$ depends only on samples in $A$. In particular, letting $C = L\lVert\hat{\tau}_{A^c} - \tau\rVert$, we have
\begin{align*}
\mathbb{P}\left(\lvert\Phi_j - \lambda\rvert \leq L\lVert\hat{\tau}_{A^c} - \tau\rVert\right) &= \int_{\lambda - C}^{\lambda + C} f(\varphi \mid \hat{\tau}_{A^c})\, d\varphi
	\leq M\left(\lambda + C - [\lambda - C]\right)
	= 2MC,
\end{align*}
where the conditional density of $\Phi_j$ given $\hat{\tau}_{A^c}$, $f(\varphi\mid\hat{\tau}_{A^c})$, is uniformly bounded above by $M$ by assumption. Applying the conditional expectation $\mathbb{E}[\cdot\mid\hat{\tau}_{A^c}]$ establishes the result.
\end{proof}

Let $\hat{\pi}^*_{j,m}(\lambda) = \frac{1}{2B}\sum_{b=1}^{2B} \1\!\left(j \in \hat{S}_\lambda(\bX_{A_b}, \tau(\bX_{A_b}))\right)$ denote the estimated oracle selection probability, i.e., the version of $\hat{\pi}_{j,m}(\lambda)$ that would result if $\tau$ were directly observed.

\begin{lemma}[Variance bound]\label{lem:variance}
$\var(\hat{\pi}_{j,m}^*(\lambda)) \leq \frac{1}{8B} + \frac{m^2}{n}$ for any $B$, $j$, $\lambda$, and $m\leq \lfloor n/2\rfloor$.
\end{lemma}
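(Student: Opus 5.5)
We write $I_b=\1\!\left(j\in\hat{S}_\lambda(\bX_{A_b},\tau(\bX_{A_b}))\right)$, so that $\hat{\pi}^*_{j,m}(\lambda)=\frac{1}{2B}\sum_{b=1}^{2B}I_b$. Each $I_b$ is Bernoulli with mean $\pi:=\pi_{j,m}(\lambda)$, because every $A_b$ has size $m$ and the samples are i.i.d. Hence $\var(I_b)=\pi(1-\pi)\le 1/4$. Expanding the variance of the sum gives
\begin{align*}
\var(\hat{\pi}^*_{j,m}(\lambda)) = \frac{1}{4B^2}\Bigl(\sum_{b}\var(I_b) + \sum_{b\neq b'}\cov(I_b,I_{b'})\Bigr).
\end{align*}
The diagonal part is at most $2B\cdot\frac14\cdot\frac{1}{4B^2}=\frac{1}{8B}$, which is the first term of the bound. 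What remains is to show that every off-diagonal covariance is at most $m^2/n$. Since there are $2B(2B-1)\le 4B^2$ ordered pairs, the off-diagonal contribution is then at most $m^2/n$.

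\textbf{Covariance bound.} Fix $b\neq b'$. The subsets are drawn at random, independently of the data, except that complementary pairs are disjoint. If $A_b\cap A_{b'}=\emptyset$, then $I_b$ and $I_{b'}$ depend on disjoint sets of i.i.d. samples. Conditional on the subsets they are therefore independent, and each has conditional mean $\pi$, which does not depend on the subset. By the law of total covariance,
\begin{align*}
\cov(I_b,I_{b'})=\mathbb{E}\bigl[\cov(I_b,I_{b'}\mid A_b,A_{b'})\bigr]+\cov\bigl(\mathbb{E}[I_b\mid A_b],\mathbb{E}[I_{b'}\mid A_{b'}]\bigr).
\end{align*}
The second term vanishes because both conditional means equal the constant $\pi$. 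In the first term, the conditional covariance is zero on the event $\{A_b\cap A_{b'}=\emptyset\}$ and is at most $1/4$ in absolute value otherwise. Thus $\lvert\cov(I_b,I_{b'})\rvert\le \frac14\,\mathbb{P}(A_b\cap A_{b'}\neq\emptyset)$. Complementary pairs $\{2b-1,2b\}$ have this probability equal to zero. For subsets from different iterations, which are independent uniform $m$-subsets, a union bound over the $m$ elements of $A_b$ gives
\begin{align*}
\mathbb{P}(A_b\cap A_{b'}\neq\emptyset)\le \sum_{i\in A_b}\mathbb{P}(i\in A_{b'})=\frac{m^2}{n}.
\end{align*}
Each covariance is therefore at most $\frac{m^2}{4n}$, and the off-diagonal contribution is at most $\frac{1}{4B^2}\cdot 4B^2\cdot\frac{m^2}{4n}=\frac{m^2}{4n}\le\frac{m^2}{n}$.

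\textbf{Main obstacle.} The delicate step is justifying conditional independence given disjoint index sets. It needs two facts. First, the random subsets are independent of the data. Second, $\hat{S}_\lambda$ is applied with the fixed function $\tau$, so that $I_b$ is a measurable function of $\bX_{A_b}$ alone; this is where the oracle version avoids the dependence through $\hat{\tau}_{A^c}$. If the selector uses internal randomization, we also assume that randomization is independent across $b$.
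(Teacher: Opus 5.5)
Your proof is correct and follows essentially the same route as the paper's. Both arguments bound each diagonal variance by $1/4$. Both control each off-diagonal covariance by conditioning on whether $A_b$ and $A_{b'}$ overlap, using conditional independence on disjoint subsamples and the fact that the conditional selection probability does not depend on the realized subset. Both finish with the union bound $\mathbb{P}(A_b\cap A_{b'}\neq\emptyset)\le m^2/n$. Your law-of-total-covariance step, with $\lvert\cov(I_b,I_{b'}\mid A_b,A_{b'})\rvert\le 1/4$ on the overlap event, sharpens the paper's per-pair bound $\cov\le \mathbb{P}(A_b\cap A_{b'}\neq\emptyset)$ by a factor of four, which is harmless for the stated inequality.
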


\begin{proof}
Fix $B$, $j$, $\lambda$, and $m$, and set $\hat{\pi}^*=\hat{\pi}^*_{j,m}(\lambda)$. For $b\in[2B]$, define $E_b^* = \{j\in\hat{S}_\lambda(\bX_{A_b}, \tau(\bX_{A_b}))\}$, where each $A_b$ is drawn as in \cref{alg:selection_probabilities}. Then
\begin{equation}\label{eq:var_bound}
\begin{aligned}
\var(\hat{\pi}^*) &= \var\left(\frac{1}{2B}\sum_{b=1}^{2B} \1(E_b^*)\right) \\
&= \frac{1}{4B^2}\sum_{b=1}^{2B} \var(\1(E_b^*)) + \frac{1}{4B^2}\sum_{b\neq b'} \cov(\1(E_b^*),\1(E_{b'}^*)) \\
&\leq \frac{1}{8B} + \frac{1}{4B^2}\sum_{b\neq b'} \cov(\1(E_b^*),\1(E_{b'}^*)).
\end{aligned}
\end{equation}
The inequality holds because $\var(\1(E))\leq 1/4$ for any event $E$. For the covariance terms, fix $b\neq b'$ and set $C = \{A_b \cap A_{b'} = \emptyset\}$.
By conditioning on $C$ and $C^c$,
\begin{align*}
\cov(\1(E_b^*),\1(E_{b'}^*))
&= \mathbb{P}(E_b^* \cap E_{b'}^*) - \mathbb{P}(E_b^*)\mathbb{P}(E_{b'}^*) \\
&= \mathbb{P}(E_b^* \cap E_{b'}^*\mid C^c)\mathbb{P}(C^c)
    + \mathbb{P}(E_b^* \cap E_{b'}^*\mid C)\mathbb{P}(C)
    - \mathbb{P}(E_b^*)\mathbb{P}(E_{b'}^*) \\
&\leq \mathbb{P}(C^c)
    + \mathbb{P}(E_b^* \cap E_{b'}^*\mid C)
    - \mathbb{P}(E_b^*)\mathbb{P}(E_{b'}^*) \\
&= \mathbb{P}(C^c)
    + \mathbb{P}(E_b^*\mid C)\mathbb{P}(E_{b'}^*\mid C)
    - \mathbb{P}(E_b^*)\mathbb{P}(E_{b'}^*) \\
&= \mathbb{P}(C^c).
\end{align*}
The third equality holds because, under the assumption of iid samples, $E_b^*$ and $E_{b'}^*$ are conditionally independent given $A_b$ and $A_{b'}$ with $A_b\cap A_{b'}=\emptyset$. The iid assumption further implies that the corresponding conditional probabilities do not depend on the realized subsets, so the factorization remains valid when conditioning only on $C$. For the last equality, the iid assumption implies that
$\mathbb{P}(E_b^*\mid A_b=S)$ is the same for all $S\subseteq[n]$ of size $m$.
Moreover, the conditional distribution of $A_b$ given $C$ is uniform over
subsets of $[n]$ of size $m$. Hence $\mathbb{P}(E_b^*\mid C)=\mathbb{P}(E_b^*)$, and similarly for $b'$.

To bound $\mathbb{P}(C^c)$, observe that $\mathbb{P}(i\in A_{b'}\mid A_b)\leq m/n$
for every $i\in A_b$. Specifically, if $b$ and $b'$ belong to the same complementary pair in \cref{alg:selection_probabilities}, then $A_b\cap A_{b'}=\emptyset$
and the probability is $0$; otherwise $A_{b'}$ is drawn independently of $A_b$, so
$\mathbb{P}(i\in A_{b'}\mid A_b)=\mathbb{P}(i\in A_{b'})=m/n$. Applying the tower
property and a union bound,
\begin{align*}
\mathbb{P}(C^c)
&= \mathbb{E}\!\left[\mathbb{P}(A_b \cap A_{b'} \neq \emptyset \mid A_b)\right] 
\leq \mathbb{E}\!\bigg[\sum_{i \in A_b} \mathbb{P}(i \in A_{b'} \mid A_b)\bigg]
\leq \mathbb{E}\!\bigg[\sum_{i \in A_b} \frac{m}{n}\bigg]
= \frac{m^2}{n}.
\end{align*}
Thus, $\cov(\1(E_b^*),\1(E_{b'}^*))\leq m^2/n$ for all $b\neq b'$. Plugging this
into \cref{eq:var_bound} gives
\begin{equation*}
\var(\hat{\pi}^*)
\leq \frac{1}{8B} + \frac{2B(2B-1)m^2}{4B^2n}
\leq \frac{1}{8B} + \frac{m^2}{n}. \qedhere
\end{equation*}
\end{proof}

\begin{proof}[Proof of \cref{thrm:convergence}]
Let $L_f$ be the Lipschitz constant for $f$.
\begin{align*}
\mathbb{E}\lvert \hat{\mathcal{E}}^\gamma\, \Delta\, \mathcal{E}^\gamma_* \rvert &\leq \mathbb{E}\left\lvert \left\{ j : \left\lvert \textstyle\int f(\pi_j) \, d\mu - \gamma \right\rvert
    \leq \left\lvert \textstyle\int f(\hat{\pi}_j) - f(\pi_j) \, d\mu \right\rvert \right\} \right\rvert \\
    &\leq \sum_{j=1}^p \mathbb{P}\left(\varepsilon \leq \left\lvert\textstyle\int f(\hat{\pi}_j) - f(\pi_j)\, d\mu\right\rvert\right) \\
    &\leq \sum_{j=1}^p\frac{1}{\varepsilon}\mathbb{E}\left\lvert\textstyle\int f(\hat{\pi}_j) - f(\pi_j)\, d\mu\right\rvert \\
    &\leq \frac{L_f}{\varepsilon}\sum_{j=1}^p\int\mathbb{E}\lvert \hat{\pi}_j - \pi_j\rvert\, d\mu \\
    &\leq \frac{L_f}{\varepsilon}\sum_{j=1}^p \int\mathbb{E}\lvert\hat{\pi}_j - \mathbb{E}(\hat{\pi}_j)\rvert + \lvert\mathbb{E}(\hat{\pi}_j) - \pi_j\rvert\, d\mu \\
    &\leq \frac{L_f}{\varepsilon}\sum_{j=1}^p \int \sqrt{\var(\hat{\pi}_j)} + \lvert\mathbb{E}(\hat{\pi}_j) - \pi_j\rvert\, d\mu
\end{align*}
By \cref{thrm:bias}, we have $\lvert\mathbb{E}(\hat{\pi}_j) - \pi_j\rvert \leq 2ML\mathbb{E}\lVert\hat{\tau}_{A^c} - \tau\rVert$ for all $j$. For the variance term, fix $j$ and $\lambda$ and set $\hat{\pi}=\hat{\pi}_j(\lambda)$. Define $\hat{E}_b = \{j\in\hat{S}_\lambda(\bX_{A_b},\hat{\tau}_{A_b^c})\}$ and $E_b = \{j\in\hat{S}_\lambda(\bX_{A_b},\tau)\}$. Then
\begin{align*}
\var(\hat{\pi}) &= \var(\hat{\pi}^* + \hat{\pi} - \hat{\pi}^*)
    \leq 2\var(\hat{\pi}^*) + 2\var(\hat{\pi} - \hat{\pi}^*)
\end{align*}
An upper bound for the first term, $\var(\hat{\pi}^*)$, is provided by \cref{lem:variance}. For the second term,
\begin{align*}
\var(\hat{\pi} - \hat{\pi}^*) &\leq \mathbb{E}\lvert\hat{\pi} - \hat{\pi}^*\rvert \\
    &= \mathbb{E}\bigg\lvert\frac{1}{2B}\sum_{b=1}^{2B} \1(\hat{E}_b) - \1(E_b)\bigg\rvert \\
    &\leq \frac{1}{2B}\sum_{b=1}^{2B} \mathbb{E}\left\lvert\1(\hat{E}_b) - \1(E_b)\right\rvert \\
    &= \mathbb{P}(\hat{E}_b\Delta E_b) \\
    &\leq 2ML\mathbb{E}\lVert\hat{\tau}_{A^c} - \tau\rVert.
\end{align*}
The first inequality holds because $\lvert\hat{\pi} - \hat{\pi}^*\rvert \leq 1$ and hence $\var(\hat{\pi} - \hat{\pi}^*) \leq \mathbb{E}[(\hat{\pi} - \hat{\pi}^*)^2] \leq \mathbb{E}\lvert\hat{\pi} - \hat{\pi}^*\rvert$. The second inequality is the triangle inequality. The second equality holds because $\lvert\1(\hat{E}_b) - \1(E_b)\rvert = \1(\hat{E}_b\Delta E_b)$, and the third inequality was established in the proof of \cref{thrm:bias}. Therefore,
\begin{align*}
\sqrt{\var(\hat{\pi})} &\leq \sqrt{\frac{1}{4B} + \frac{2m^2}{n} + 4ML\mathbb{E}\lVert\hat{\tau}_{A^c} - \tau\rVert}
\end{align*}
and hence,
\begin{align*}
\mathbb{E}\lvert \hat{\mathcal{E}}^\gamma\, \Delta\, \mathcal{E}^\gamma_* \rvert &\leq \frac{L_f p}{\varepsilon}\int\left(\sqrt{\frac{1}{4B} + \frac{2m^2}{n} + 4ML\mathbb{E}\lVert\hat{\tau}_{A^c} - \tau\rVert} + 2ML\mathbb{E}\lVert\hat{\tau}_{A^c} - \tau\rVert\right)\, d\mu \\
    &= \frac{L_f p}{\varepsilon}\left(\sqrt{\frac{1}{4B} + \frac{2m^2}{n} + 4ML\mathbb{E}\lVert\hat{\tau}_{A^c} - \tau\rVert} + 2ML\mathbb{E}\lVert\hat{\tau}_{A^c} - \tau\rVert\right).
\end{align*}
The equality holds since $\mu$ is a probability measure on $\Lambda$ and the integrand is independent of $\lambda$.
\end{proof}

\subsection{\cref{sec:theory_fdr} proofs}\label{sup_sec:false_discovery}

Throughout this section, we fix a subsample size $m\leq \lfloor n/2\rfloor$ and suppress $m$ from notation. For example, we write $\hat{\pi}_j(\lambda)$ instead of $\hat{\pi}_{j,m}(\lambda)$. The following quantities are used in the proofs of
\cref{thrm:fdr,thrm:oracle_fdr}. Fix $B \geq 2$. For $\lambda \in \Lambda$ and
$j \in [p]$, define the \emph{simultaneous selection probabilities}
\begin{align*}
    \hat{\xi}_j(\lambda)
    &= \frac{1}{B}\sum_{b=1}^B
        \1\!\left(j \in \hat{S}_\lambda(\bX_{A_{2b-1}}, \hat{\tau}_{A_{2b-1}^c})\right)
        \1\!\left(j \in \hat{S}_\lambda(\bX_{A_{2b}}, \hat{\tau}_{A_{2b}^c})\right), \\
    \xi_j(\lambda)
    &= \frac{1}{B}\sum_{b=1}^B
        \1\!\left(j \in \hat{S}_\lambda(\bX_{A_{2b-1}}, \tau)\right)
        \1\!\left(j \in \hat{S}_\lambda(\bX_{A_{2b}}, \tau)\right).
\end{align*}
Since each summand lies in $\{0,1\}$, a direct expansion of $\hat{\xi}_j(\lambda)$ gives
\begin{align*}
    0 \leq
    \frac{1}{B}\sum_{b=1}^B
    \Bigl(1 - \1\bigl(j \in \hat{S}_\lambda(\bX_{A_{2b-1}}, \hat{\tau}_{A_{2b-1}^c})\bigr)\Bigr)
    \Bigl(1 - \1\bigl(j \in \hat{S}_\lambda(\bX_{A_{2b}}, \hat{\tau}_{A_{2b}^c})\bigr)\Bigr)
    = 1 - 2\hat{\pi}_j(\lambda) + \hat{\xi}_j(\lambda),
\end{align*}
and hence $\hat{\pi}_j(\lambda) \leq \tfrac{1}{2}(1 + \hat{\xi}_j(\lambda))$.
Replacing $\hat{\tau}$ with $\tau$ yields the oracle analogue
$\hat{\pi}^*_j(\lambda) \leq \tfrac{1}{2}(1 + \xi_j(\lambda))$.
Since $f$ is nondecreasing and satisfies $f\bigl(\tfrac{1}{2}(1+x)\bigr) = x^3$ for $x \geq 0$,
\begin{equation}\label{eq:f_xi_bound}
    f\!\left(\hat{\pi}_j(\lambda)\right)
    \leq
    f\!\left(\tfrac{1}{2}(1 + \hat{\xi}_j(\lambda))\right)
    = \hat{\xi}_j(\lambda)^3,
\end{equation}
and likewise $f\bigl(\hat{\pi}^*_j(\lambda)\bigr) \leq \xi_j(\lambda)^3$.

For readability, define
\begin{align*}
    \Delta_r = \Bigl\{\bm{k} \in \mathbb{Z}^B : k_b \geq 0,\;
    {\textstyle\sum_{b=1}^B} k_b = r\Bigr\},
\end{align*}
and $ N_k = \sum_{b=1}^B \1(k_b \neq 0)$, and denote multinomial coefficients by $\tbinom{r}{k_1,\ldots,k_B} = r!/(k_1!\cdots k_B!)$.

The following lemma is the \css{} analogue of Lemma~S2.1 in
\citet{melikechi2026integrated}. The proof is the same multinomial expansion
argument, the only change being that it invokes the integrated form of
\cref{ass:exchangeability} rather than the pointwise form used in that work.

\begin{lemma}\label{lem:moment_bound}
Fix $r \in \mathbb{N}$. If \cref{ass:exchangeability}a holds for all $r' \in [r]$, then
\begin{align}\label{eq:moment_bound_hat}
    \max_{j \in \mathcal{E}^c}
    \int_\Lambda \mathbb{E}\!\left[\hat{\xi}_j(\lambda)^r\right]\mu(d\lambda)
    \leq
    \frac{1}{B^r}
    \sum_{\bm{k} \in \Delta_r}
    \binom{r}{k_1,\ldots,k_B}
    \int_\Lambda
    \!\left(\frac{\hat{q}(\lambda)}{p}\right)^{\!2N_k}
    \!\mu(d\lambda).
\end{align}
If \cref{ass:exchangeability}b holds for all $r' \in [r]$, then \eqref{eq:moment_bound_hat} holds with $\xi_j$ and $q(\lambda)$ in place of $\hat{\xi}_j$ and $\hat{q}(\lambda)$.
\end{lemma}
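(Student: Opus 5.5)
We expand the $r$-th power of $\hat{\xi}_j(\lambda)$ with the multinomial theorem and then control each resulting term with \cref{ass:exchangeability}a. Write $\hat{\xi}_j(\lambda) = \frac{1}{B}\sum_{b=1}^B W_b$, where
\begin{align*}
W_b = \1\!\left(j \in \hat{S}_\lambda(\bX_{A_{2b-1}}, \hat{\tau}_{A_{2b-1}^c})\right)\1\!\left(j \in \hat{S}_\lambda(\bX_{A_{2b}}, \hat{\tau}_{A_{2b}^c})\right)\in\{0,1\}.
\end{align*}
Then
\begin{align*}
\hat{\xi}_j(\lambda)^r = \frac{1}{B^r}\sum_{\bm{k}\in\Delta_r}\binom{r}{k_1,\ldots,k_B}\prod_{b=1}^B W_b^{k_b}.
\end{align*}
Each $W_b$ is an indicator, so $W_b^{k_b} = W_b$ whenever $k_b\geq 1$ and $W_b^{0}=1$. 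Hence $\prod_b W_b^{k_b} = \prod_{b: k_b\neq 0} W_b$. This is the indicator that $j$ is selected on all $2N_k$ subsamples $A_{2b-1},A_{2b}$ with $k_b\neq 0$.

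Next we take expectations and integrate against $\mu$. By linearity and Tonelli (everything is nonnegative and finitely many terms appear),
\begin{align*}
\int_\Lambda \mathbb{E}\!\left[\hat{\xi}_j(\lambda)^r\right]\mu(d\lambda) = \frac{1}{B^r}\sum_{\bm{k}\in\Delta_r}\binom{r}{k_1,\ldots,k_B}\int_\Lambda \mathbb{P}\!\left(j\in\bigcap_{b:k_b\neq0}\big(\hat{S}_\lambda(\bX_{A_{2b-1}},\hat{\tau}_{A_{2b-1}^c})\cap\hat{S}_\lambda(\bX_{A_{2b}},\hat{\tau}_{A_{2b}^c})\big)\right)\mu(d\lambda).
\end{align*}
The complementary pairs $(A_{2b-1},A_{2b})$ for different $b$ are drawn independently and identically as in \cref{alg:selection_probabilities}. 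The data are iid, and $\hat{\tau}$ is a fixed algorithm applied to $A^c$. So the joint law of the family of selected sets indexed by any $N_k$ distinct pairs equals that of the first $N_k$ pairs. Each intersection probability therefore equals the probability that $j$ lies in $\bigcap_{b=1}^{2N_k}\hat{S}_\lambda(\bX_{A_b},\hat{\tau}_{A_b^c})$ for a draw of $N_k$ complementary pairs. This is exactly the object appearing in \eqref{eq:exchangeability_condition_a} with $r$ replaced by $r'=N_k$.

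Since $1\leq N_k\leq r$, \cref{ass:exchangeability}a for $r'=N_k\in[r]$ bounds each integrated probability by $\int_\Lambda(\hat{q}(\lambda)/p)^{2N_k}\mu(d\lambda)$, uniformly over $j\in\mathcal{E}^c$. The multinomial weights do not depend on $j$. Taking the maximum over $j\in\mathcal{E}^c$ of a sum of terms, each of which is bounded uniformly in $j$, therefore gives \eqref{eq:moment_bound_hat}. Here we use that the maximum of the sum is at most the sum of the bounds.

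The main point requiring care is the reduction from an arbitrary index set $\{b:k_b\neq 0\}$ to the canonical indices $1,\dots,2N_k$. This step needs the pairs to be exchangeable. Exchangeability holds because the pairs are i.i.d.\ draws given the data, and the law of the data is invariant. We would state this explicitly, noting that the max over $j$ is handled by the assumption's own max rather than pulled inside the integral.

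The oracle statement follows by the identical argument. We replace $\hat{\tau}_{A_b^c}$ by $\tau$, $\hat{\xi}_j$ by $\xi_j$, $\hat{q}$ by $q$, and invoke \cref{ass:exchangeability}b in place of \cref{ass:exchangeability}a. The exchangeability step is even simpler here, since $\tau$ is deterministic.
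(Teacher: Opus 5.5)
Your proposal is correct and follows the paper's proof essentially step for step: you expand $\hat{\xi}_j(\lambda)^r$ by the multinomial theorem, collapse each $W_b^{k_b}$ to $W_b^{\1(k_b\neq 0)}$, and apply \cref{ass:exchangeability}a with $r'=N_k\in[r]$ to each integrated intersection probability before taking the maximum over $j\in\mathcal{E}^c$. The paper leaves implicit your relabeling of the pairs $\{b:k_b\neq 0\}$ as the first $N_k$ pairs, since its assumption is already stated for any $2r$ subsets drawn as in \cref{alg:selection_probabilities}; making that step explicit is harmless extra care.
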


\begin{proof}
Fix $j \in \mathcal{E}^c$ and define $0^0 = 1$. Let
\begin{align*}
    U_b(\lambda)
    = \1\!\left(j \in \hat{S}_\lambda(\bX_{A_{2b-1}}, \hat{\tau}_{A_{2b-1}^c})\right)
    \1\!\left(j \in \hat{S}_\lambda(\bX_{A_{2b}}, \hat{\tau}_{A_{2b}^c})\right)
\end{align*}
so that $\hat{\xi}_j(\lambda) = B^{-1}\sum_{b=1}^B U_b(\lambda)$. By the multinomial theorem,
\begin{align*}
    \hat{\xi}_j(\lambda)^r
    = \frac{1}{B^r}\sum_{\bm{k} \in \Delta_r}
    \binom{r}{k_1,\ldots,k_B}
    \prod_{b=1}^B U_b(\lambda)^{k_b}
    = \frac{1}{B^r}\sum_{\bm{k} \in \Delta_r}
    \binom{r}{k_1,\ldots,k_B}
    \prod_{b=1}^B U_b(\lambda)^{\1(k_b \neq 0)}.
\end{align*}
The last equality holds because $U_b(\lambda) \in \{0,1\}$, and hence $U_b(\lambda)^{k_b} = U_b(\lambda)$ whenever $k_b > 0$. Next, since \cref{ass:exchangeability}a holds for all $r' \leq r$ and since every $\bm{k} \in \Delta_r$ has at most $r$ nonzero entries,
\begin{align*}
    \int_\Lambda \mathbb{E}\bigg[\prod_{b=1}^B U_b(\lambda)^{\1(k_b \neq 0)}\bigg]\mu(d\lambda)
    &= \int_\Lambda \mathbb{P}\!\bigg(j \in \bigcap_{b : k_b \neq 0}
        \Bigl[\hat{S}_\lambda(\bX_{A_{2b-1}}, \hat{\tau}_{A_{2b-1}^c}) \cap \hat{S}_\lambda(\bX_{A_{2b}}, \hat{\tau}_{A_{2b}^c})\Bigr]\!\bigg)\mu(d\lambda) \\
    &\leq \int_\Lambda \!\left(\frac{\hat{q}(\lambda)}{p}\right)^{\!2N_k}\!\mu(d\lambda).
\end{align*}
Therefore,
\begin{align*}
    \int_\Lambda \mathbb{E}\!\left[\hat{\xi}_j(\lambda)^r\right]\mu(d\lambda)
    &= \frac{1}{B^r}\sum_{\bm{k} \in \Delta_r}
    \binom{r}{k_1,\ldots,k_B}
    \int_\Lambda \mathbb{E}\bigg[\prod_{b=1}^B U_b(\lambda)^{\1(k_b \neq 0)}\bigg]\mu(d\lambda) \\
    &\leq \frac{1}{B^r}\sum_{\bm{k} \in \Delta_r}
    \binom{r}{k_1,\ldots,k_B}
    \int_\Lambda \!\left(\frac{\hat{q}(\lambda)}{p}\right)^{\!2N_k}\!\mu(d\lambda).
\end{align*}
Taking the maximum over $j \in \mathcal{E}^c$ gives \eqref{eq:moment_bound_hat}.
\end{proof}

\begin{proof}[Proof of \cref{thrm:fdr}]
Fix $j \in \mathcal{E}^c$ and $\gamma>0$. By Markov's inequality,
\begin{align*}
    \mathbb{P}(j \in \hat{\mathcal{E}}_\gamma)
    = \mathbb{P}\!\left(\int_\Lambda f(\hat{\pi}_j(\lambda))\,\mu(d\lambda) \geq \gamma\right)
    \leq \frac{1}{\gamma}\int_\Lambda \mathbb{E}\!\left[f(\hat{\pi}_j(\lambda))\right]\mu(d\lambda)
    \leq \frac{1}{\gamma}\int_\Lambda \mathbb{E}\!\left[\hat{\xi}_j(\lambda)^3\right]\mu(d\lambda),
\end{align*}
where the last inequality holds by \eqref{eq:f_xi_bound}. Applying \cref{lem:moment_bound} with $r = 3$ under \cref{ass:exchangeability}a,
\begin{align*}
    \mathbb{P}(j \in \hat{\mathcal{E}}_\gamma)
    \leq \frac{1}{\gamma B^3}
    \sum_{\bm{k} \in \Delta_3}\binom{3}{k_1,\ldots,k_B}
    \int_\Lambda \!\left(\frac{\hat{q}(\lambda)}{p}\right)^{\!2N_k}\!\mu(d\lambda).
\end{align*}
Therefore,
\begin{align*}
    \mathbb{E}|\hat{\mathcal{E}}_\gamma \cap \mathcal{E}^c|
    = \sum_{j \in \mathcal{E}^c}\mathbb{P}(j \in \hat{\mathcal{E}}_\gamma)
    \leq \frac{p}{\gamma B^3}\sum_{\bm{k} \in \Delta_3}\binom{3}{k_1,\ldots,k_B}\int_\Lambda \!\left(\frac{\hat{q}(\lambda)}{p}\right)^{\!2N_k}\!\mu(d\lambda).
\end{align*}
It remains to evaluate the sum over $\Delta_3=\{\bm{k}\in\mathbb{Z}^B : k_b\geq 0, \sum_{b=1}^B k_b=3\}$. The elements of $\Delta_3$ partition into three groups according to the number of nonzero entries, $N_k$:
\begin{itemize}[itemsep=0.5em, leftmargin=1.5em]
    \item $N_k = 1$: a single entry equal to $3$. There are $B$ such $\bm{k}$, each with $\binom{3}{k_1,\ldots,k_B} = 1$.
    \item $N_k = 2$: one entry equal to 1, another to 2. There are $B(B-1)$ such $\bm{k}$, each with $\binom{3}{k_1,\ldots,k_B} = 3$.
    \item $N_k = 3$: three entries equal to 1. There are $\binom{B}{3}$ such $\bm{k}$, each with $\binom{3}{k_1,\ldots,k_B} = 6$.
\end{itemize}
Substituting,
\begin{align*}
    \mathbb{E}|\hat{\mathcal{E}}_\gamma \cap \mathcal{E}^c|
    &\leq \frac{p}{\gamma B^3}\int_\Lambda
        B\!\left(\frac{\hat{q}(\lambda)}{p}\right)^{\!2}
        + 3B(B-1)\!\left(\frac{\hat{q}(\lambda)}{p}\right)^{\!4}
        + B(B-1)(B-2)\!\left(\frac{\hat{q}(\lambda)}{p}\right)^{\!6}\mu(d\lambda) \\
    &= \frac{1}{\gamma}\int_\Lambda
        \frac{\hat{q}(\lambda)^2}{B^2 p}
        + \frac{3(B-1)\hat{q}(\lambda)^4}{B^2 p^3}
        + \frac{(B-1)(B-2)\hat{q}(\lambda)^6}{B^2 p^5}\,\mu(d\lambda). \qedhere
\end{align*}
\end{proof}

\begin{proof}[Proof of \cref{thrm:oracle_fdr}]
Since $\hat{\mathcal{E}}_\gamma \cap \mathcal{E}^c
\subseteq (\hat{\mathcal{E}}_\gamma \,\Delta\, \mathcal{E}^*_\gamma)
\cup (\mathcal{E}^*_\gamma \cap \mathcal{E}^c)$,
taking expectations gives
\begin{align*}
    \mathbb{E}|\hat{\mathcal{E}}_\gamma \cap \mathcal{E}^c|
    \leq \mathbb{E}|\hat{\mathcal{E}}_\gamma \,\Delta\, \mathcal{E}^*_\gamma|
       + |\mathcal{E}^*_\gamma \cap \mathcal{E}^c|
    \leq \rho + |\mathcal{E}^*_\gamma \cap \mathcal{E}^c|,
\end{align*}
where $\mathcal{E}^*_\gamma$ is deterministic so no expectation is needed on the
second term, and the second inequality is \cref{thrm:convergence}. It remains to
bound $|\mathcal{E}^*_\gamma \cap \mathcal{E}^c|$.

Applying $\1(x \geq \gamma) \leq x/\gamma$ for $x \geq 0$,
\begin{align*}
    |\mathcal{E}^*_\gamma \cap \mathcal{E}^c|
    = \sum_{j \in \mathcal{E}^c}
      \1\!\left(\int_\Lambda f(\pi_j(\lambda))\,\mu(d\lambda) \geq \gamma\right)
    \leq \frac{p}{\gamma}
         \max_{j \in \mathcal{E}^c}
         \int_\Lambda f(\pi_j(\lambda))\,\mu(d\lambda).
\end{align*}
We now bound $f(\pi_j(\lambda))$ using Jensen's inequality. The function
$f(x) = (2x-1)^3\1(x \geq \tfrac{1}{2})$ is convex on $[0,1]$: it is identically
zero on $[0,\tfrac{1}{2}]$, satisfies $f''(x) = 24(2x-1) \geq 0$ on
$[\tfrac{1}{2},1]$, and has $f'(\tfrac{1}{2}) = 0$ from both sides. Since
$\hat{\pi}^*_j(\lambda)$ is an unbiased estimator of $\pi_j(\lambda)$, Jensen's
inequality gives
\begin{align*}
    f(\pi_j(\lambda))
    = f\!\left(\mathbb{E}[\hat{\pi}^*_j(\lambda)]\right)
    \leq \mathbb{E}\!\left[f(\hat{\pi}^*_j(\lambda))\right]
    \leq \mathbb{E}\!\left[\xi_j(\lambda)^3\right],
\end{align*}
where the last step uses the oracle analogue of \eqref{eq:f_xi_bound}.
Integrating over $\Lambda$ and applying \cref{lem:moment_bound} with $r = 3$
under \cref{ass:exchangeability}b yields the same multinomial expansion as in
the proof of \cref{thrm:fdr}, with $q(\lambda)$ in place of $\hat{q}(\lambda)$.
Combining the two bounds gives \eqref{eq:oracle_fdr_bound}.
\end{proof}

\section{Additional Simulation Results}\label{sup_sec:simulations}


In this section, we report additional simulation results.

\begin{figure}[H]
\includegraphics[width=\textwidth]{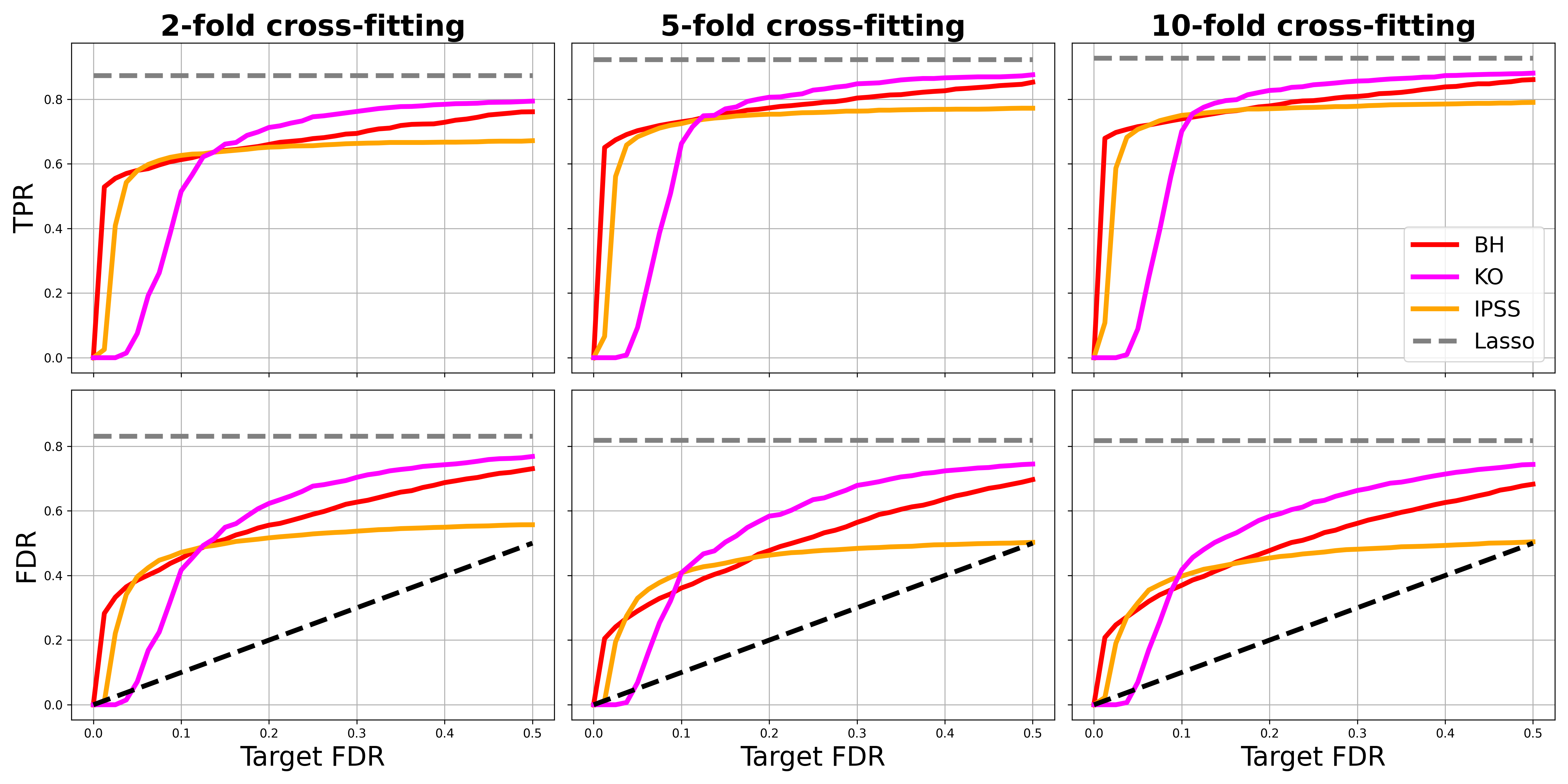}
\caption{\textit{$K$-fold cross-fitting with different $K$}. Results for the $K$-fold cross-fitting approach to effect modifier discovery described in \cref{sec:illustation}. While going from 2-folds to 5 shows mild improvements in performance, there is almost no difference between $K=5$ and 10. In all cases, the FDR is poorly controlled.}
\label{fig:crossfitting}
\end{figure}

\begin{figure}[H]
\includegraphics[width=\textwidth]{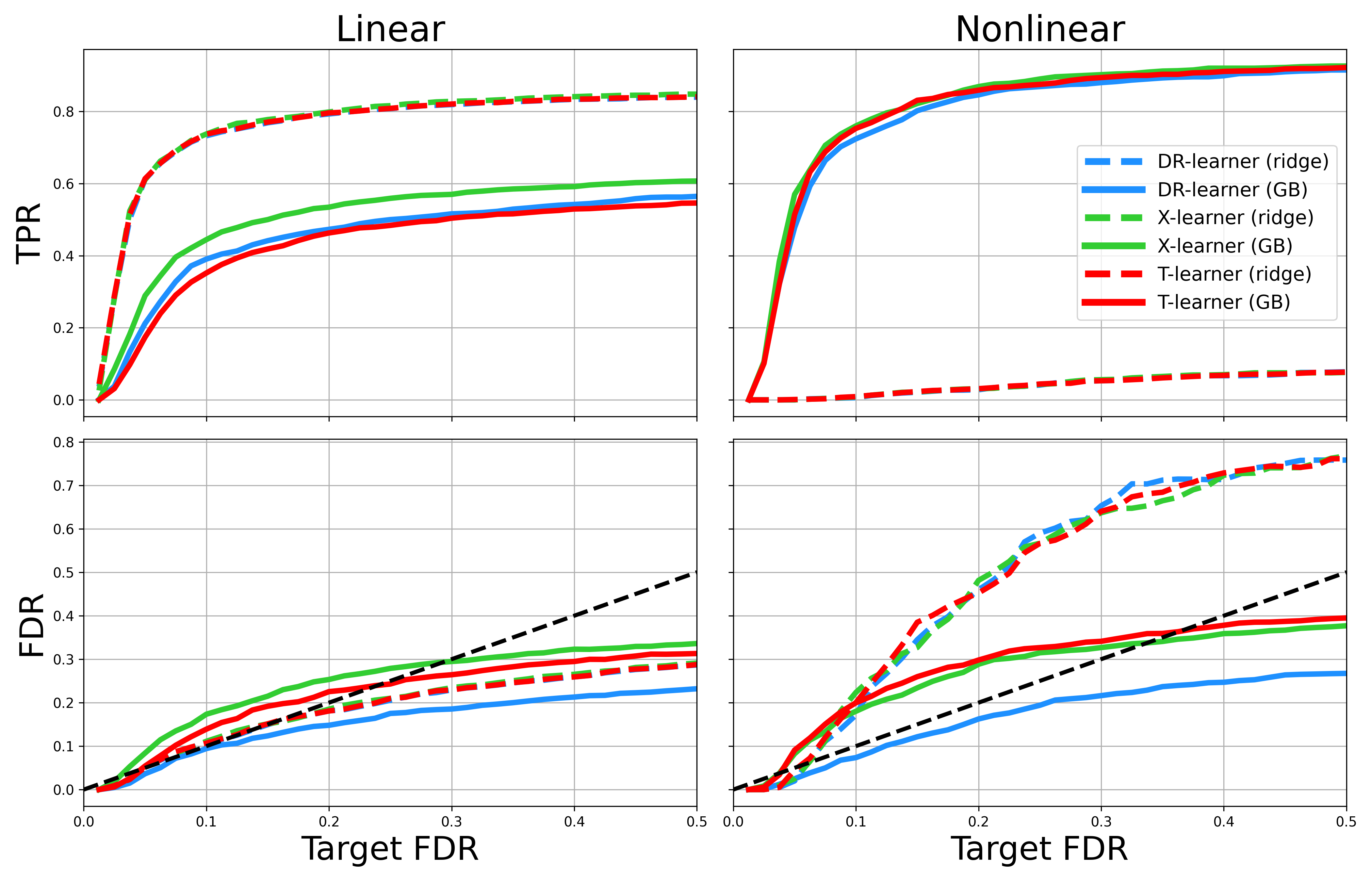}
\caption{\textit{\css{} with different CATE estimators}. \css{} is implemented with six different CATE estimators: two versions each of a DR-learner, a T-learner, and an X-learner, where one version uses ridge regression, and the other uses XGBoost. Data are simulated according to the baseline simulation designs described in \cref{sec:sim_design}; see also the bold parameter values in \cref{tab:sim_params}. For each implementation of \css{}, only the CATE estimator varies; all other inputs are held constant.}
\label{fig:cate_comparison}
\end{figure}

\begin{figure}[H]
\includegraphics[width=\textwidth]{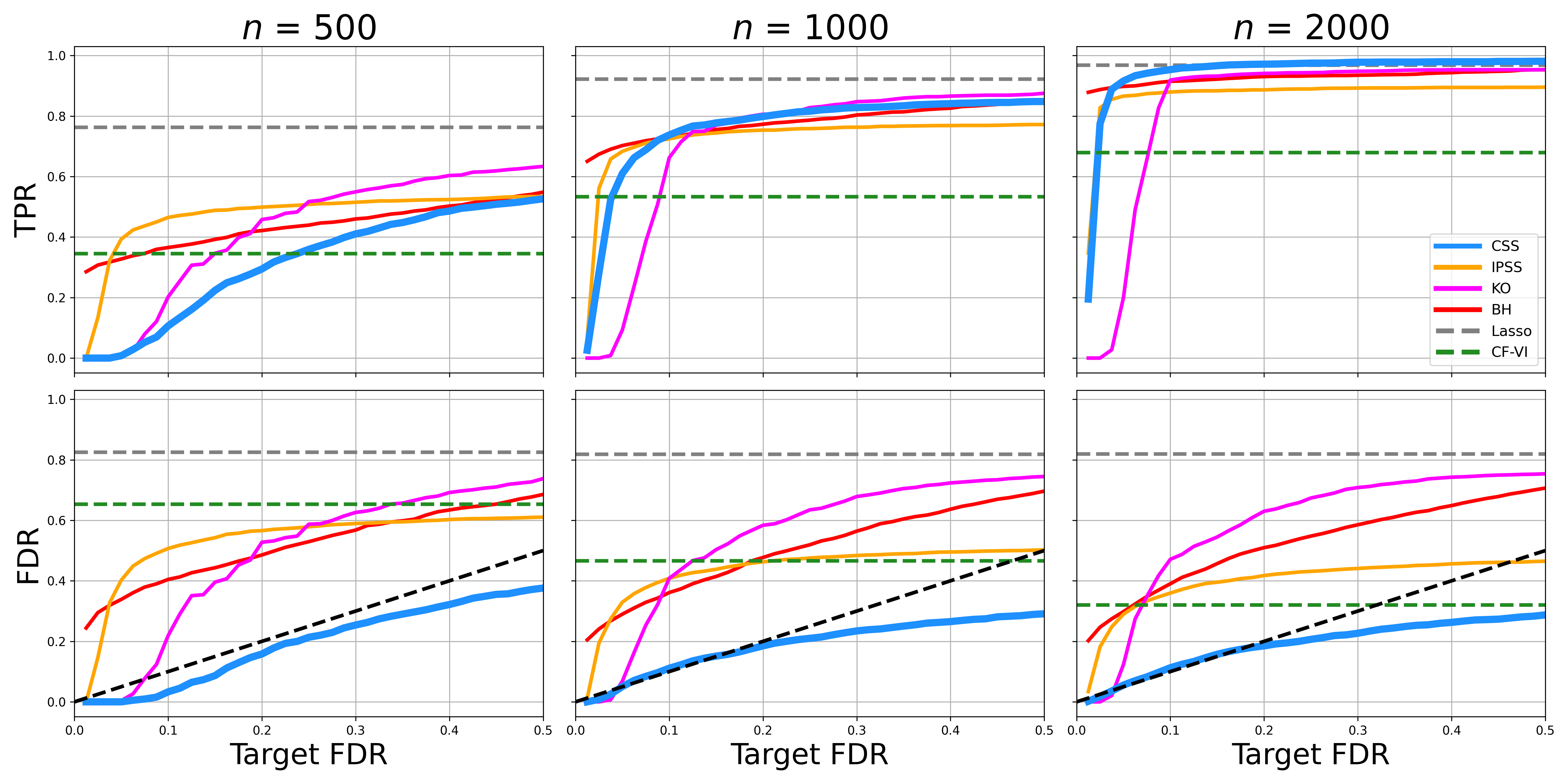}
\caption{\textit{Linear results: sample size, $n$.}}
\label{fig:sup_linear_n}
\end{figure}

\begin{figure}[H]
\includegraphics[width=\textwidth]{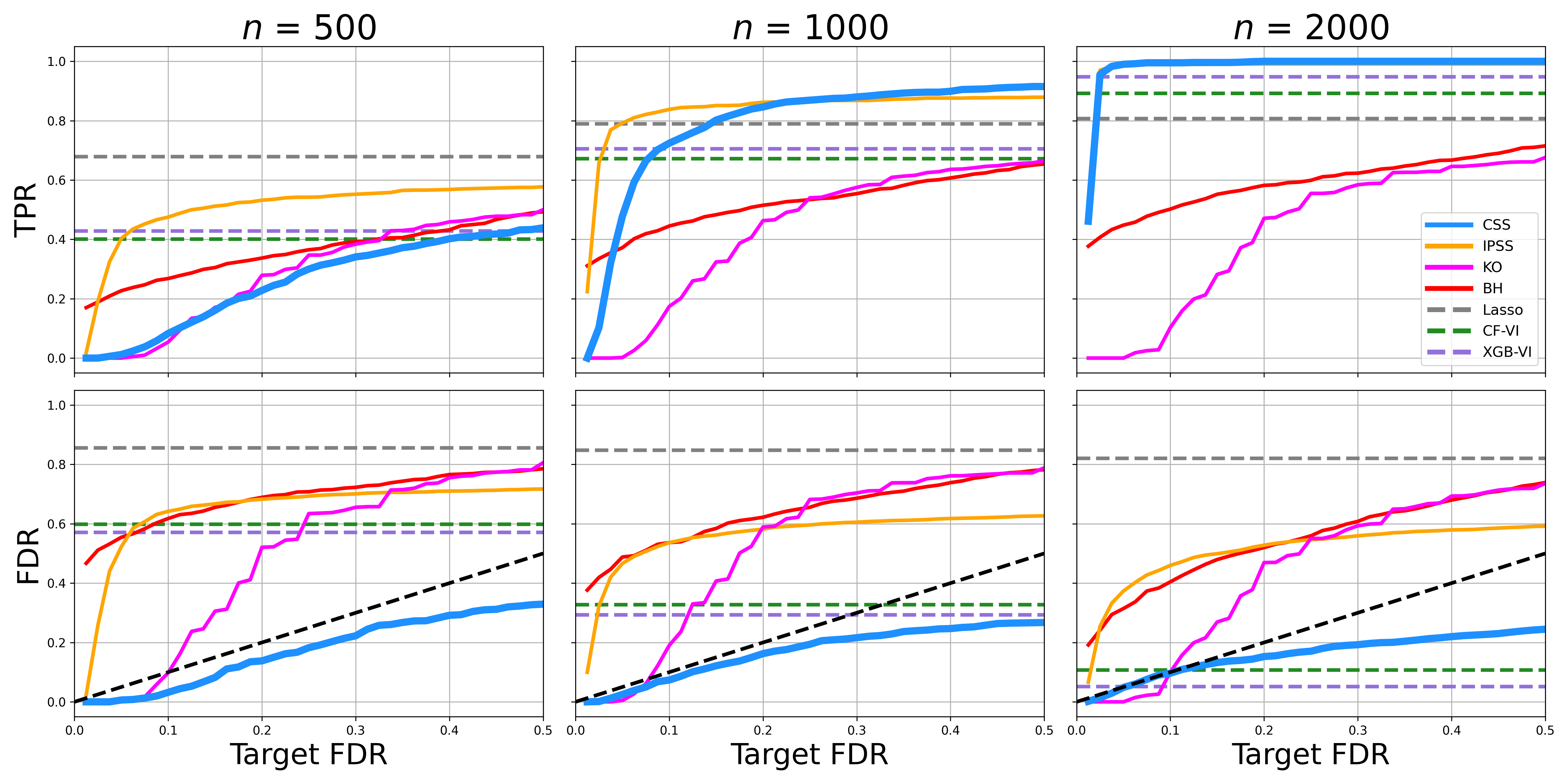}
\caption{\textit{Nonlinear results: sample size, $n$.}}
\label{fig:sup_nonlinear_n}
\end{figure}

\begin{figure}[H]
\includegraphics[width=\textwidth]{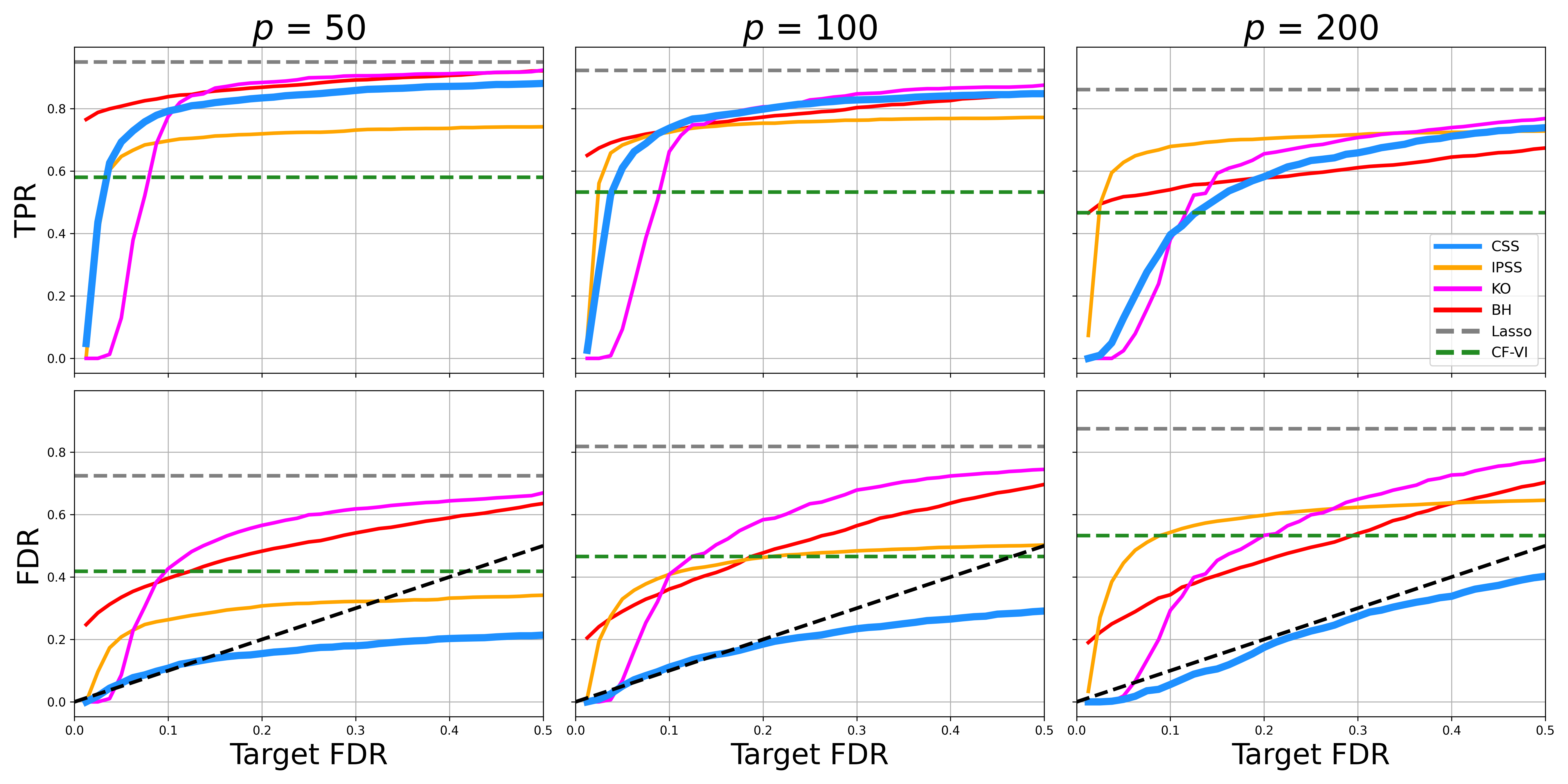}
\caption{\textit{Linear results: number of covariates, $p$.}}
\label{fig:sup_linear_p}
\end{figure}

\begin{figure}[H]
\includegraphics[width=\textwidth]{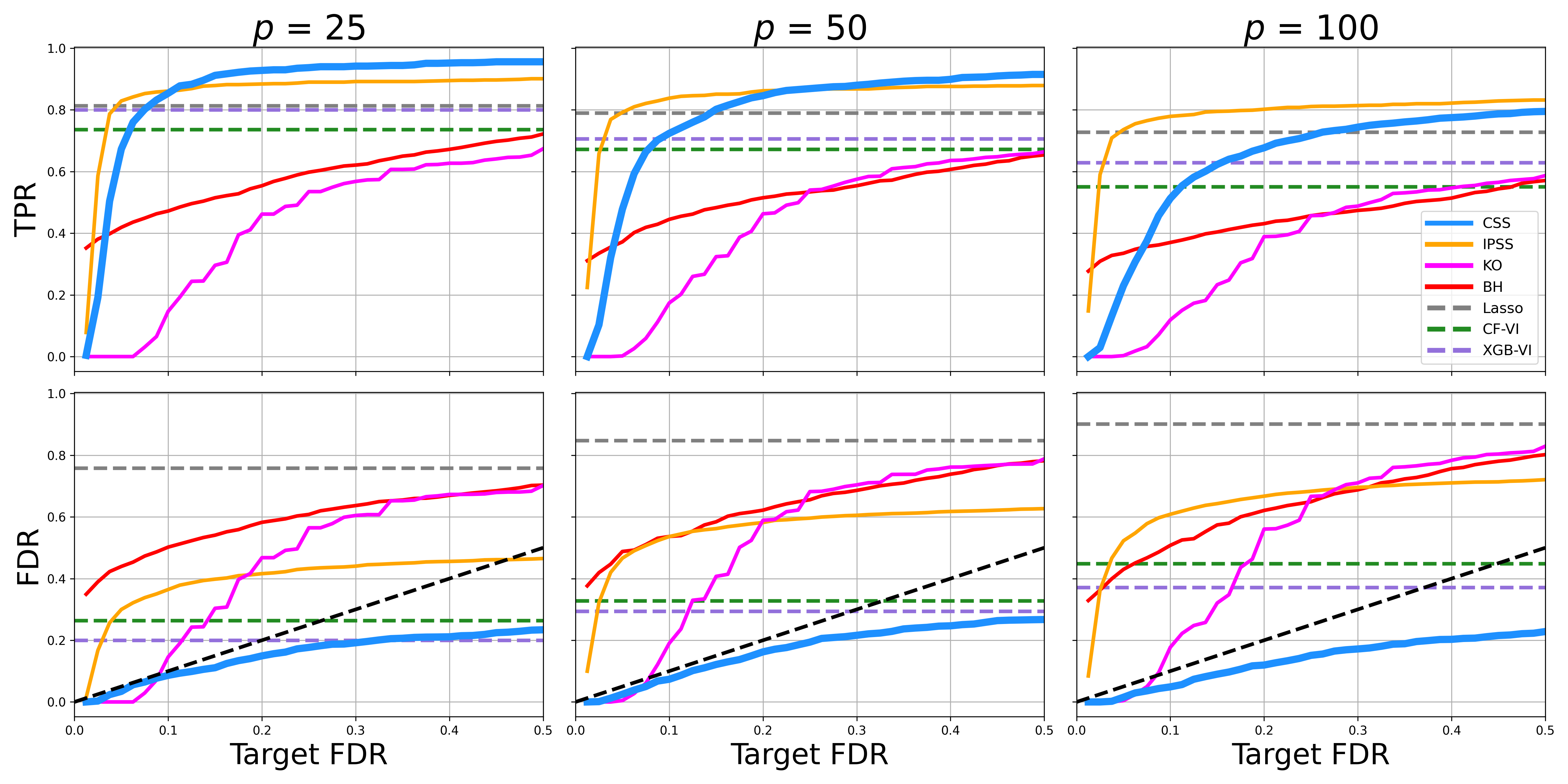}
\caption{\textit{Nonlinear results: number of covariates, $p$.}}
\label{fig:sup_nonlinear_p}
\end{figure}

\begin{figure}[H]
\includegraphics[width=\textwidth]{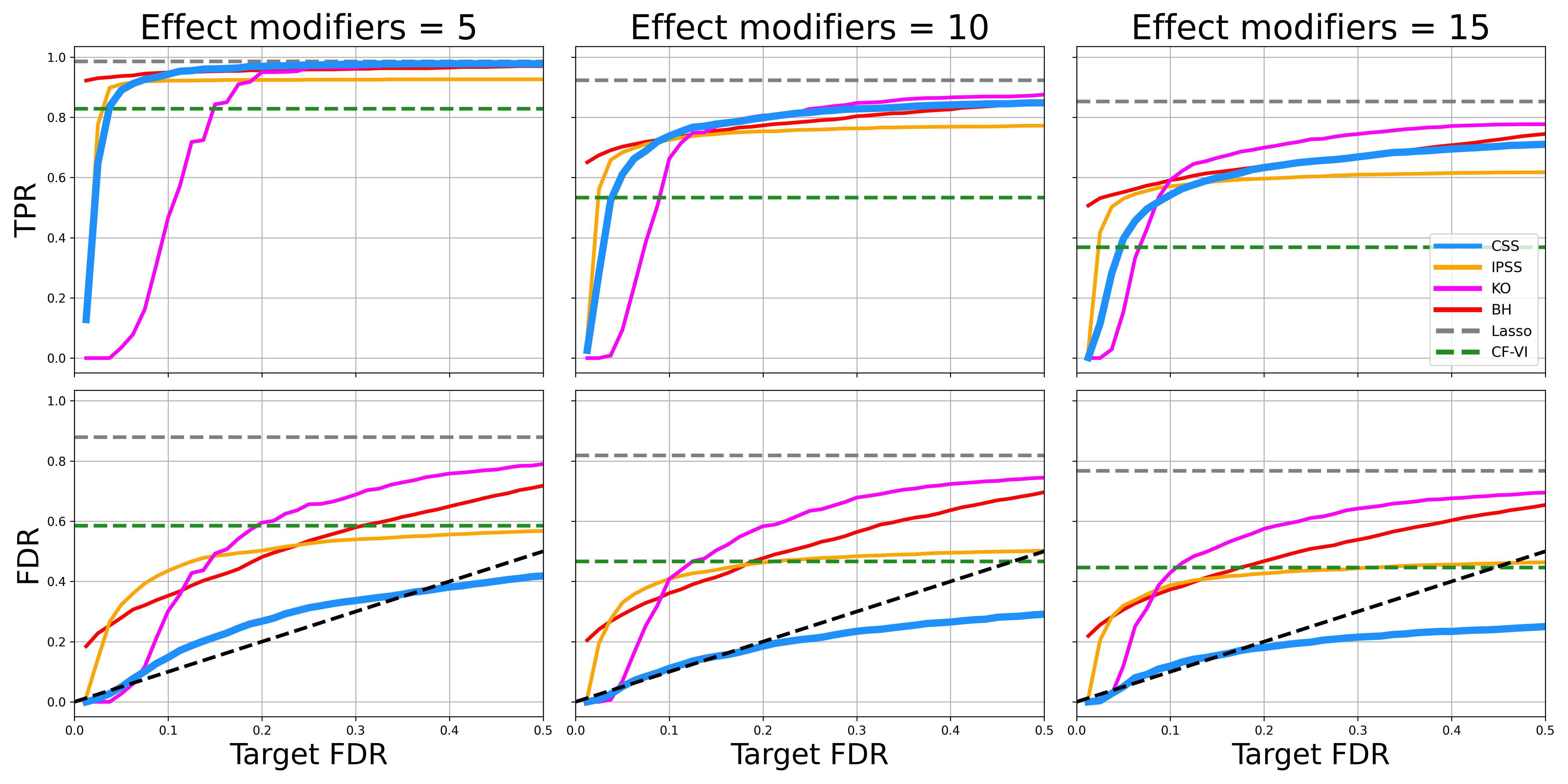}
\caption{\textit{Linear results: number of effect modifiers, $|\mathcal{E}|$.}}
\label{fig:sup_linear_effect_modifiers}
\end{figure}

\begin{figure}[H]
\includegraphics[width=\textwidth]{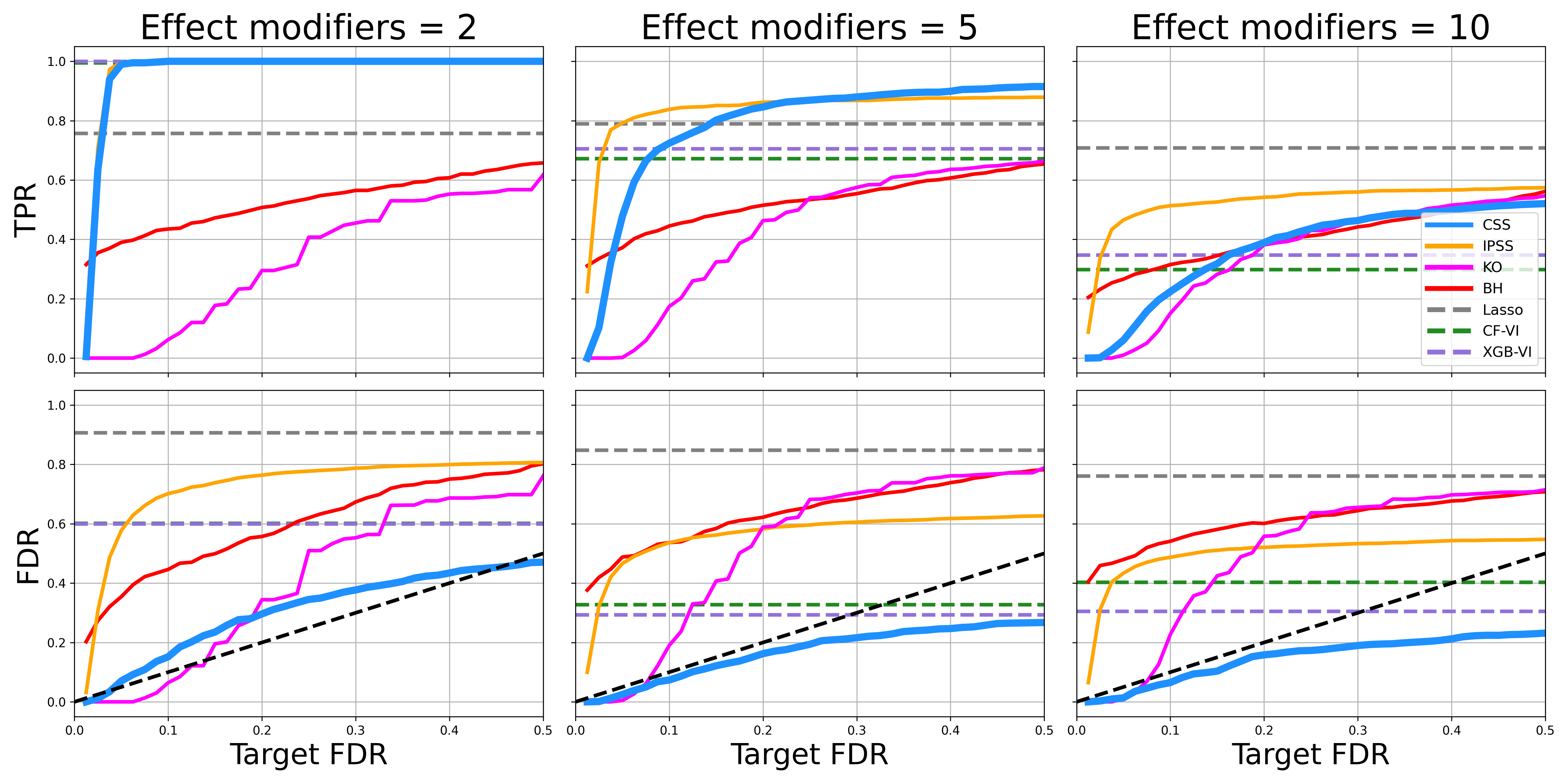}
\caption{\textit{Nonlinear results: number of effect modifiers, $|\mathcal{E}|$.}}
\label{fig:sup_nonlinear_effect_modifiers}
\end{figure}

\begin{figure}[H]
\includegraphics[width=\textwidth]{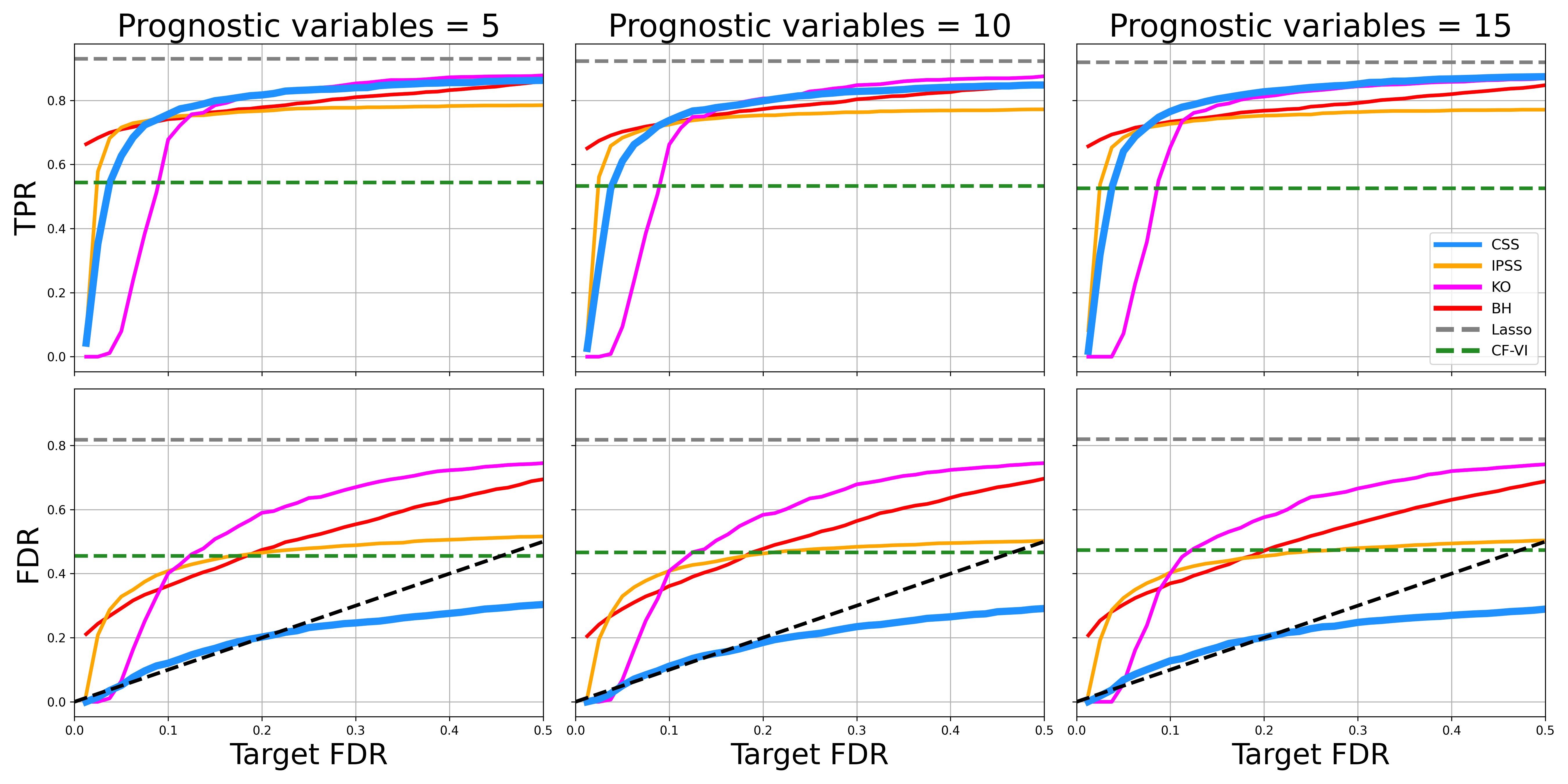}
\caption{\textit{Linear results: number of prognostic variables, $|\mathcal{P}|$.}}
\label{fig:sup_linear_prognostic_vars}
\end{figure}

\begin{figure}[H]
\includegraphics[width=\textwidth]{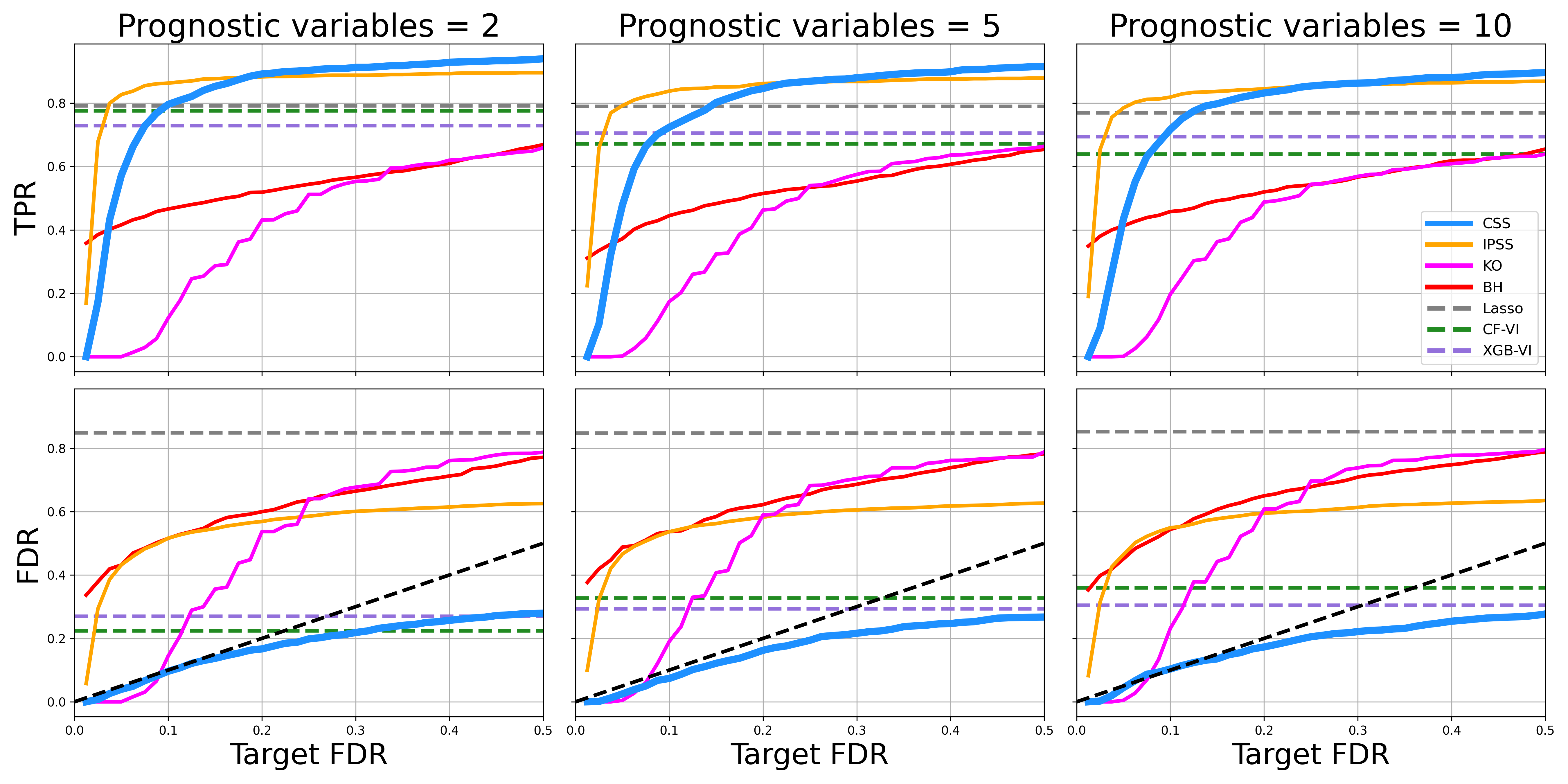}
\caption{\textit{Nonlinear results: number of prognostic variables, $|\mathcal{P}|$.}}
\label{fig:sup_nonlinear_prognostic_vars}
\end{figure}

\begin{figure}[H]
\includegraphics[width=\textwidth]{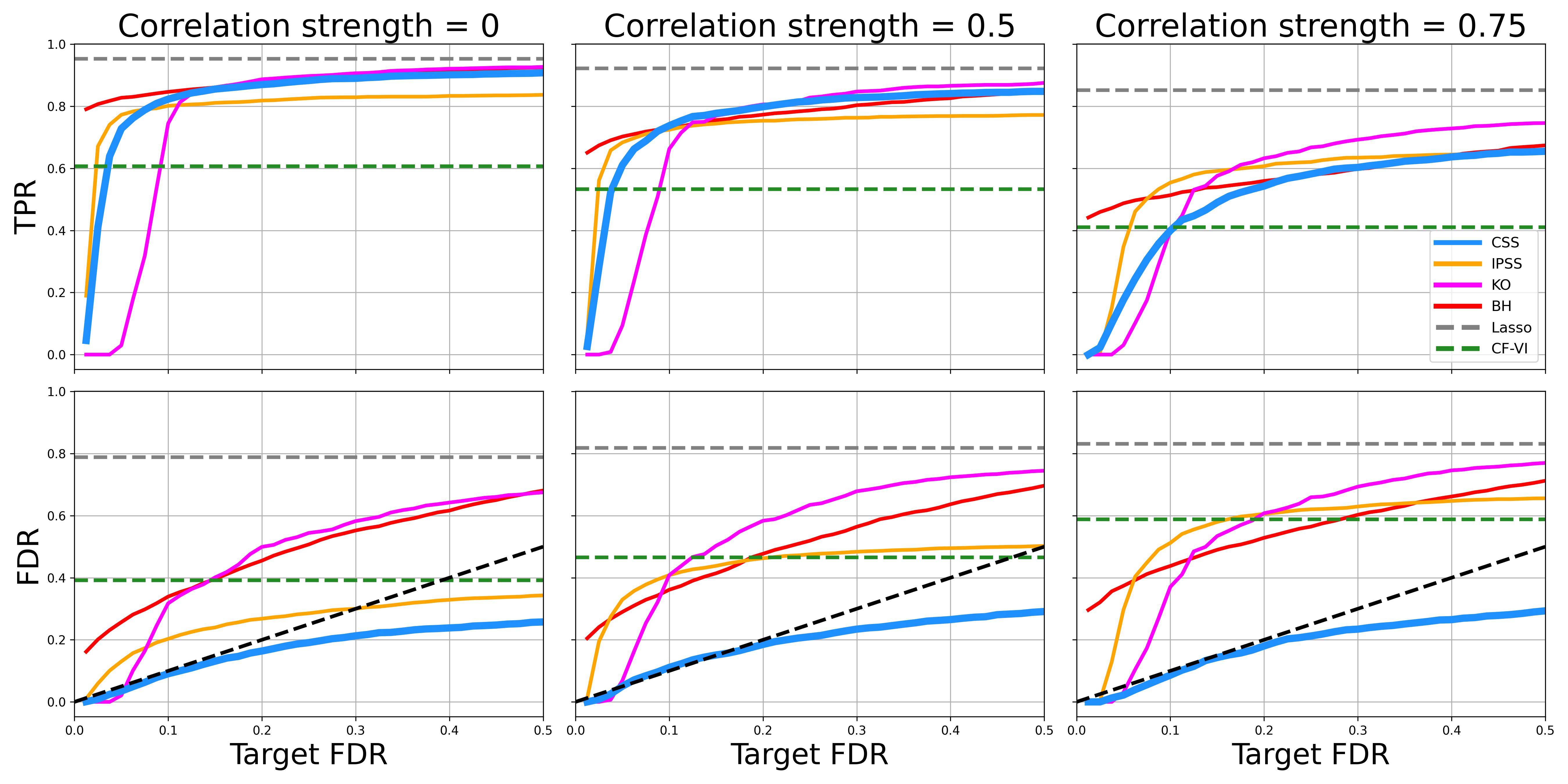}
\caption{\textit{Linear results: covariate correlation, $\rho$.}}
\label{fig:sup_linear_rho}
\end{figure}

\begin{figure}[H]
\includegraphics[width=\textwidth]{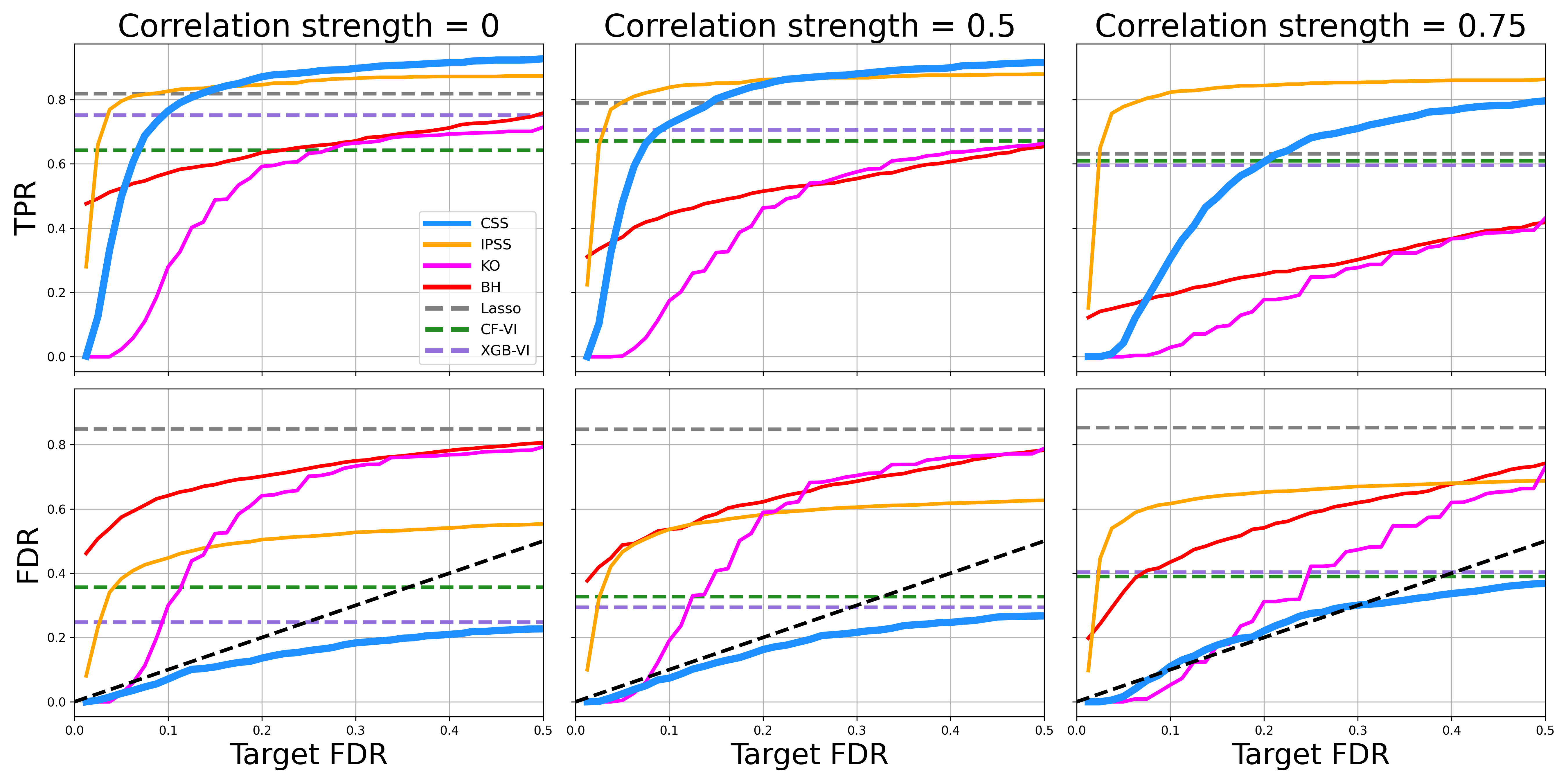}
\caption{\textit{Nonlinear results: covariate correlation, $\rho$.}}
\label{fig:sup_nonlinear_rho}
\end{figure}

\begin{figure}[H]
\includegraphics[width=\textwidth]{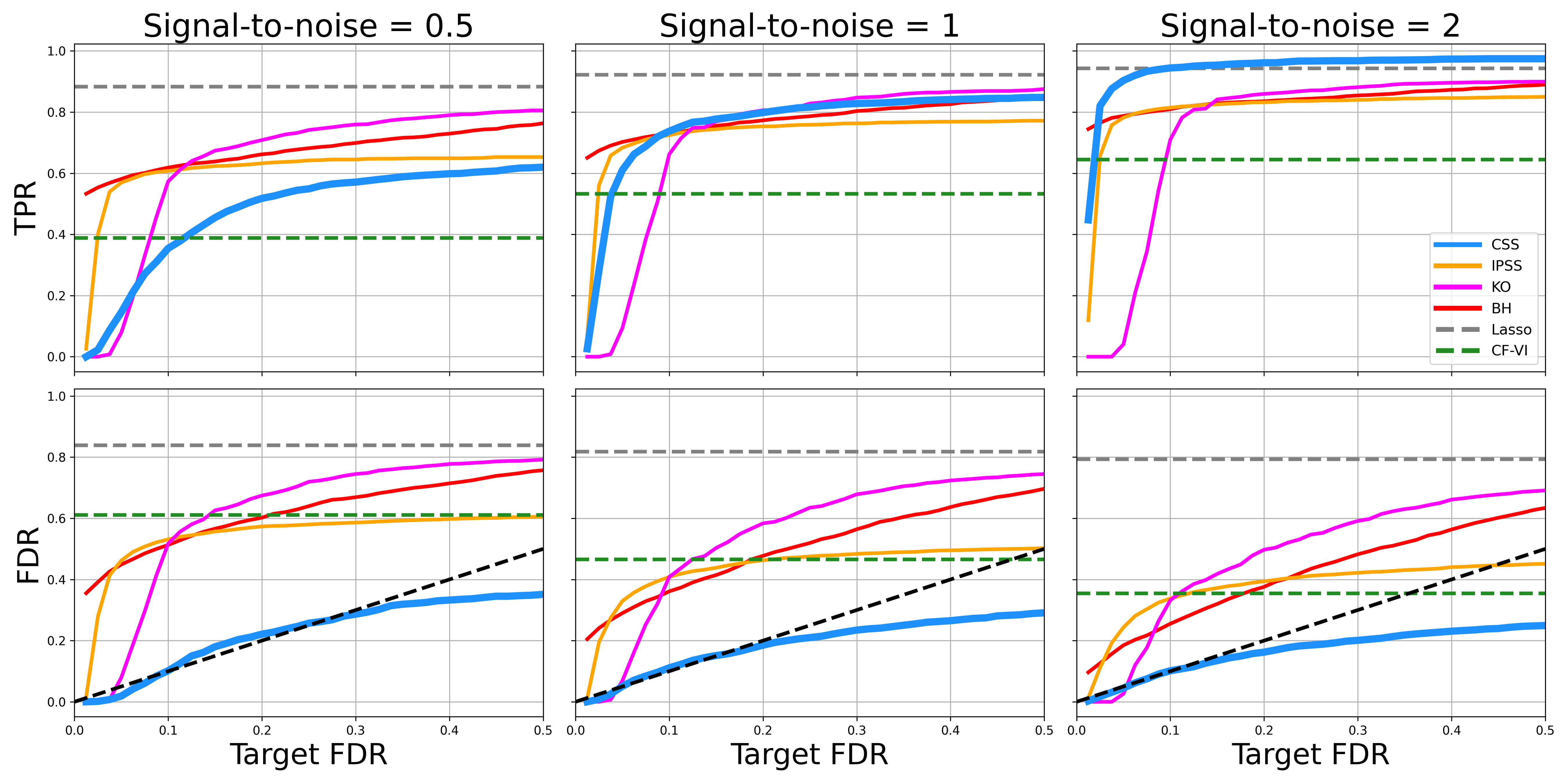}
\caption{\textit{Linear results: signal-to-noise ratio, $\mathrm{SNR}$.}}
\label{fig:sup_linear_snr}
\end{figure}

\begin{figure}[H]
\includegraphics[width=\textwidth]{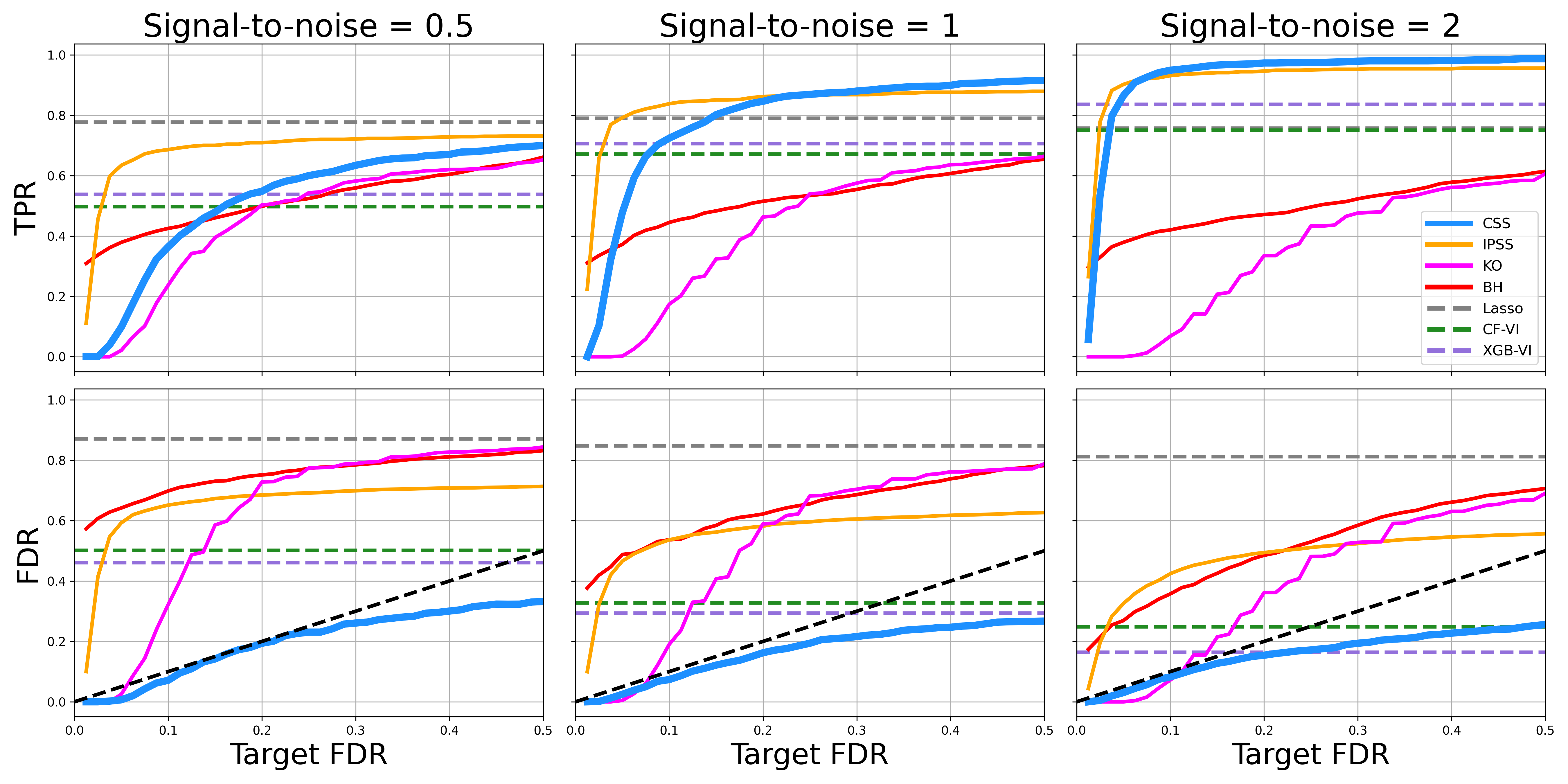}
\caption{\textit{Nonlinear results: signal-to-noise ratio, $\mathrm{SNR}$.}}
\label{fig:sup_nonlinear_snr}
\end{figure}

\begin{figure}[H]
\includegraphics[width=\textwidth]{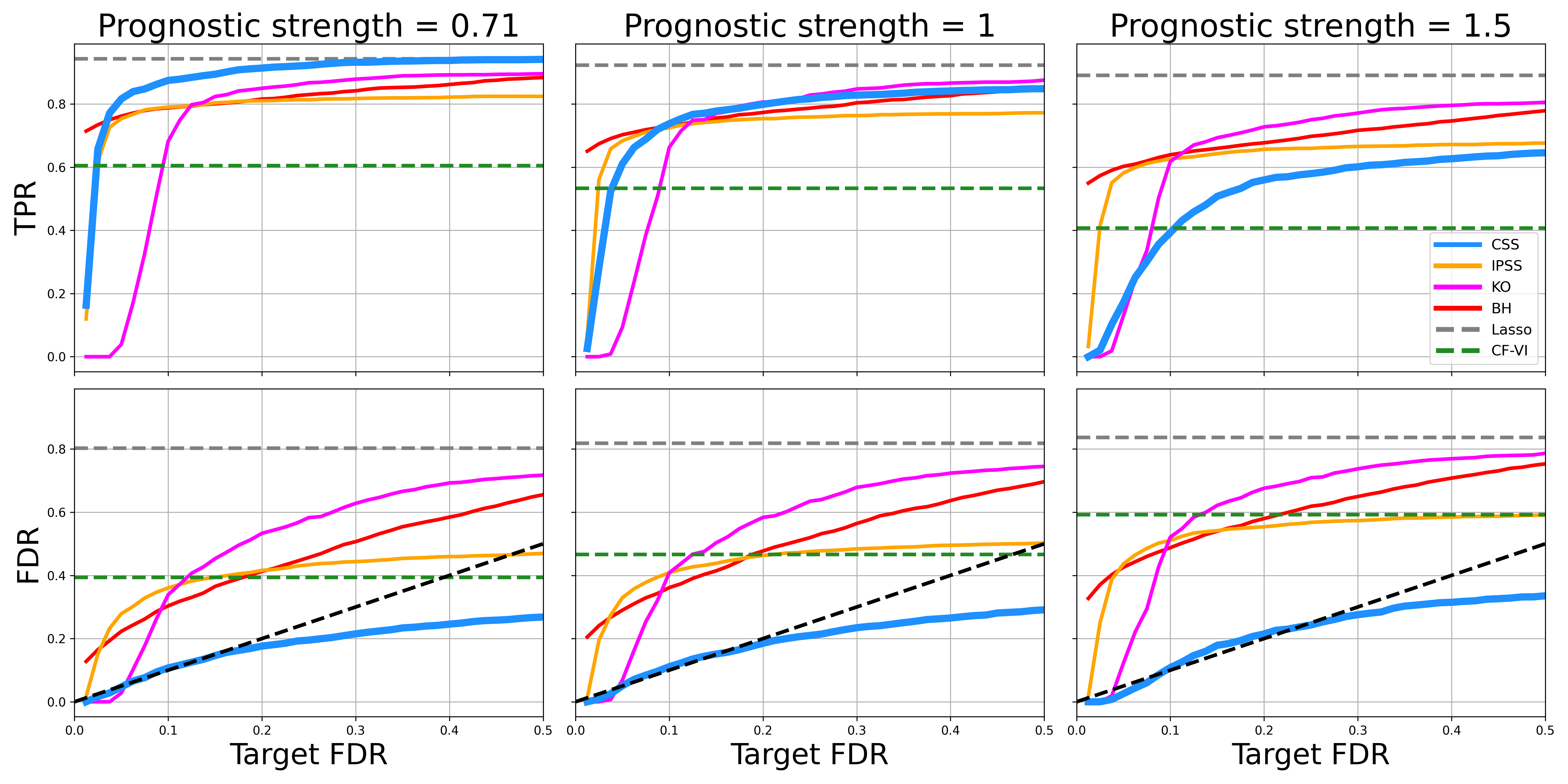}
\caption{\textit{Linear results: prognostic strength, $a$.}}
\label{fig:sup_linear_prog_strength}
\end{figure}

\begin{figure}[H]
\includegraphics[width=\textwidth]{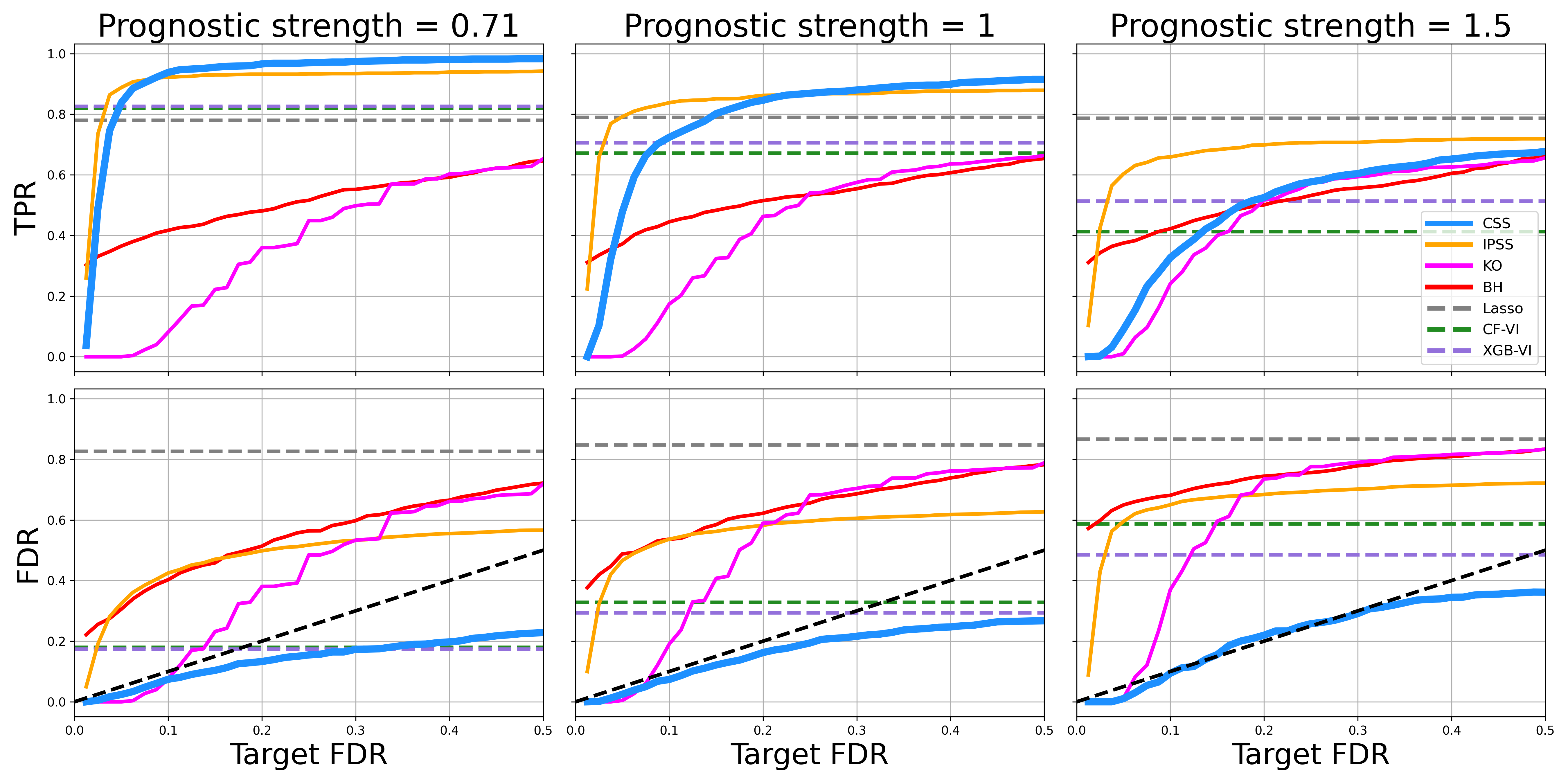}
\caption{\textit{Nonlinear results: prognostic strength, $a$.}}
\label{fig:sup_nonlinear_prog_strength}
\end{figure}

\clearpage

\section{Additional Application Details}\label{sup_sec:application}


\subsection{PRIME Randomized Trial}\label{sup_sec:prime}

\subsubsection{Covariate Correlation Structure}

The covariate matrix exhibits two clusters of mutation indicators with $r > 0.90$, namely BMMTR2/BMMTR15 and BMMTR5/BMMTR16. Such correlation structure resembles a setting in which standard selection procedures are known to over-flag.

\begin{figure}[H]
\centering
\includegraphics[width=0.95\textwidth]{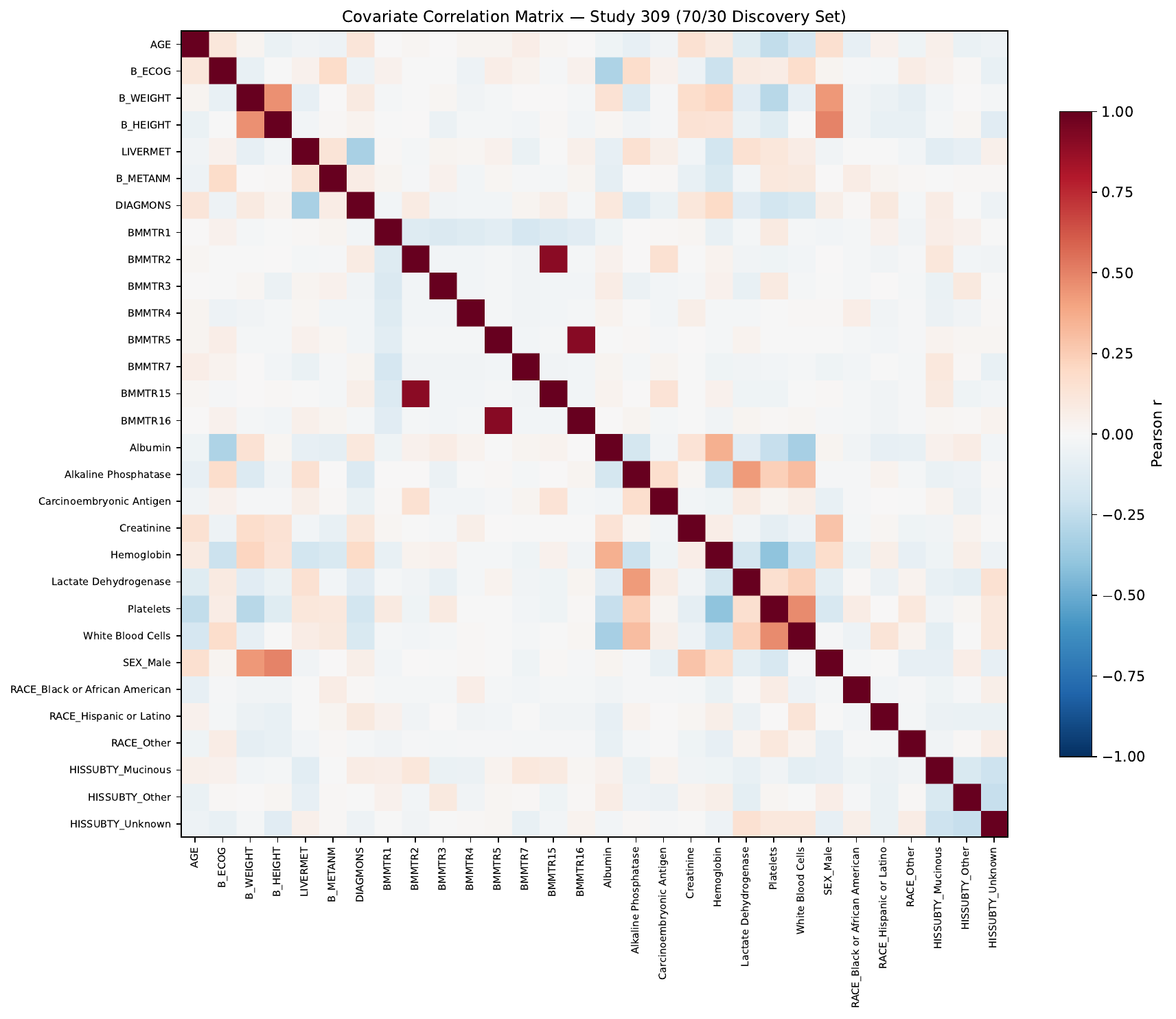}
\caption{Covariate correlation matrix on the discovery set ($n = 603$).}
\label{fig:corr_309}
\end{figure}

\begin{table}[H]
\centering
\begin{tabular}{llr}
\toprule
\textbf{Covariate 1} & \textbf{Covariate 2} & \textbf{$r$} \\
\midrule
BMMTR5 & BMMTR16 & $+0.911$ \\
BMMTR2 & BMMTR15 & $+0.905$ \\
B\_HEIGHT & SEX\_Male & $+0.500$ \\
Platelets & White Blood Cells & $+0.476$ \\
B\_WEIGHT & B\_HEIGHT & $+0.454$ \\
B\_WEIGHT & SEX\_Male & $+0.437$ \\
Alkaline Phosphatase & Lactate Dehydrogenase & $+0.426$ \\
Hemoglobin & Platelets & $-0.403$ \\
Albumin & Hemoglobin & $+0.359$ \\
Albumin & White Blood Cells & $-0.333$ \\
\bottomrule
\end{tabular}
\caption{Top pairwise correlations among covariates on the discovery set.}
\label{tab:309_correlations}
\end{table}

\subsubsection{Pre-Selection Analysis}

As a baseline reference for \texttt{CausalStabSel}, we run standard AIPW regressions on the discovery sample. AIPW pseudo-outcomes are computed via 5-fold cross-fitting with \texttt{LogisticRegressionCV} for both propensity and outcome models, propensity clipped to $[0.10, 0.90]$, and AIPW scores winsorized at the 1st and 99th percentiles.

\paragraph{Univariate AIPW regressions.}
Each covariate is tested individually: $\text{AIPW}_i = \alpha + \beta(X_j - \bar{X}_j) + \varepsilon_i$ with HC1 robust standard errors.

\begin{table}[H]
\centering
\small
\caption{Univariate pre-selection: covariates significant at $p < 0.05$.}
\label{tab:309_univariate}
\begin{tabular}{lcccc}
\toprule
& \multicolumn{2}{c}{\textbf{70/30} ($n=603$)} & \multicolumn{2}{c}{\textbf{80/20} ($n=689$)} \\
\cmidrule(lr){2-3} \cmidrule(lr){4-5}
\textbf{Covariate} & Coef & $p$ & Coef & $p$ \\
\midrule
Carcinoembryonic Antigen & $-0.0000$ & $0.0044$ & $-0.0000$ & $0.0185$ \\
BMMTR1 & $-0.1703$ & $0.0331$ & $-0.1508$ & $0.0447$ \\
Platelets & $-0.0004$ & $0.1926$ & $-0.0007$ & $0.0188$ \\
BMMTR3 & $-0.3537$ & $0.1686$ & $-0.3706$ & $0.0359$ \\
\midrule
\multicolumn{5}{l}{Total significant at 5\%: 2/30 (70/30), 4/30 (80/20)} \\
\bottomrule
\end{tabular}
\end{table}

\paragraph{Multivariate AIPW regressions.}
All 30 covariates included simultaneously: $\text{AIPW}_i = \alpha + \sum_{j=1}^{30} \beta_j (X_{ij} - \bar{X}_j) + \varepsilon_i$ with HC1 robust standard errors.

\begin{table}[H]
\centering
\small
\caption{Multivariate pre-selection: covariates significant at $p < 0.05$.}
\label{tab:309_multivariate}
\begin{tabular}{lcccc}
\toprule
& \multicolumn{2}{c}{\textbf{70/30} ($n=603$)} & \multicolumn{2}{c}{\textbf{80/20} ($n=689$)} \\
\cmidrule(lr){2-3} \cmidrule(lr){4-5}
\textbf{Covariate} & Coef & $p$ & Coef & $p$ \\
\midrule
BMMTR15 & $-0.9347$ & $<0.001$ & $-0.8875$ & $<0.001$ \\
BMMTR2 & $+0.9239$ & $0.0035$ & $+0.9062$ & $0.0030$ \\
Carcinoembryonic Antigen & $-0.0000$ & $0.0047$ & $-0.0000$ & $0.0271$ \\
BMMTR1 & $-0.2159$ & $0.0154$ & $-0.2118$ & $0.0069$ \\
Lactate Dehydrogenase & $-0.0001$ & $0.0194$ & $-0.0001$ & $0.0164$ \\
BMMTR4 & $-0.4676$ & $0.0461$ & $-0.4729$ & $0.0408$ \\
\midrule
\multicolumn{5}{l}{Total significant at 5\%: 6/30 (70/30), 7/30 (80/20)} \\
\bottomrule
\end{tabular}
\end{table}

Plain regression finds 2--4 covariates significant at 5\% univariately and 6--7 multivariately. Many of these associations are likely driven by the correlation structure documented above: in particular, BMMTR2 and BMMTR15 ($r = 0.91$) are both flagged in the multivariate analysis.

\subsubsection{Causal Stability Selection Results}

We report selection results separately for the two selectors and the two split ratios. Shaded rows denote features selected in both the 70/30 and 80/20 splits; unshaded rows denote features selected in only one of the two configurations.

\begin{table}[H]
\centering
\caption{Selected features --- DR-learner (Ridge) / LASSO ($q < 0.10$).}
\label{tab:309_qvals_main}
\begin{tabular}{lcccc}
\toprule
& \multicolumn{2}{c}{\textbf{$q$-values}} & \multicolumn{2}{c}{\textbf{Selected}} \\
\cmidrule(lr){2-3} \cmidrule(lr){4-5}
\textbf{Feature} & 70/30 & 80/20 & 70/30 & 80/20 \\
\midrule
 BMMTR1 & 0.039 & 0.032 & \yes & \yes \\
 BMMTR3 & 0.041 & 0.032 & \yes & \yes \\
 AGE & 0.039 & 0.091 & \yes & \yes \\
 Platelets & 0.039 & 0.032 & \yes & \yes \\
Lactate Dehydrogenase & 0.039 & 0.184 & \yes & \\
BMMTR4 & 0.102 & 0.093 & & \yes \\
\bottomrule
\end{tabular}
\par\vspace{0.3em}
\footnotesize 70/30: 5 selected. 80/20: 5 selected. The four-feature core (BMMTR1, BMMTR3, AGE, Platelets) is invariant; LDH and BMMTR4 swap depending on the split.
\end{table}

\begin{table}[H]
\centering
\caption{Selected features --- DR-learner (Ridge) / GB ($q < 0.10$).}
\label{tab:309_qvals_sens}
\begin{tabular}{lcccc}
\toprule
& \multicolumn{2}{c}{\textbf{$q$-values}} & \multicolumn{2}{c}{\textbf{Selected}} \\
\cmidrule(lr){2-3} \cmidrule(lr){4-5}
\textbf{Feature} & 70/30 & 80/20 & 70/30 & 80/20 \\
\midrule
 AGE & 0.017 & 0.041 & \yes & \yes \\
 BMMTR1 & 0.017 & 0.033 & \yes & \yes \\
 BMMTR3 & 0.017 & 0.033 & \yes & \yes \\
 Platelets & 0.017 & 0.033 & \yes & \yes \\
 B\_WEIGHT & 0.018 & 0.052 & \yes & \yes \\
 Lactate Dehydrogenase & 0.017 & 0.069 & \yes & \yes \\
Hemoglobin & 0.054 & 0.162 & \yes & \\
White Blood Cells & 0.175 & 0.072 & & \yes \\
\bottomrule
\end{tabular}
\par\vspace{0.3em}
\footnotesize 70/30: 7 selected. 80/20: 7 selected. The six-feature core is invariant across splits; Hemoglobin and WBC swap depending on the split. GB selects more features than LASSO at the same threshold.
\end{table}

Across both selectors and both split ratios, BMMTR1, BMMTR3, AGE, and Platelets are selected in all four configurations. BMMTR2 and BMMTR15, both flagged by multivariate pre-selection, are not selected by \texttt{CausalStabSel} under any configuration; the canonical KRAS-exon-2 and exon-4 indicators (BMMTR1, BMMTR3) are. This pattern (e.g., exclusion of demonstrably collinear proxies, retention of canonical effect modifiers) is consistent with the FDR-control behavior established in the simulations.

\subsubsection{Held-out Inference (70/30 split)}

For each selected feature $j$, we estimate the BLP of the CATE on $X_j$ on the inference sample using $\hat\tau(\bx_i) = \alpha + \beta_j(X_{j,i} - \bar{X}_j) + \varepsilon_i$ with HC1 robust standard errors.

\begin{table}[H]
\centering
\caption{Inference results, DR-learner / LASSO core, 70/30 split ($n_{\text{inf}} = 259$).}
\label{tab:309_inf_main_70}
\begin{tabular}{lrrr}
\toprule
\textbf{Feature} & \textbf{Coef} & \textbf{SE} & \textbf{95\% CI} \\
\midrule
 Platelets & $-0.0014$ & $0.0005$ & $[-0.002,\ -0.000]$ \\
BMMTR1 & $-0.1198$ & $0.1232$ & $[-0.363,\ 0.123]$ \\
BMMTR3 & $+0.0729$ & $0.2426$ & $[-0.405,\ 0.551]$ \\
AGE & $+0.0075$ & $0.0056$ & $[-0.004,\ 0.019]$ \\
Lactate Dehydrogenase & $-0.0001$ & $0.0001$ & $[-0.0003,\ 0.0001]$ \\
\midrule
\multicolumn{4}{l}{ATE $= -0.016$ (SE $= 0.065$); 95\% CI $[-0.144,\ 0.112]$} \\
\bottomrule
\end{tabular}
\end{table}

The estimated ATE is essentially zero, consistent with the original PRIME finding that benefit was confined to the KRAS wild-type subgroup. Among the selected modifiers, only platelets has a 95\% CI excluding zero. Treatment-effect heterogeneity coexists with a null average effect, the regime in which discovery of effect modifiers is most informative.

\subsection{Birthweight Observational Study}\label{sup_sec:birthweight}

\subsubsection{Covariate Correlation Structure}

\begin{figure}[H]
\centering
\includegraphics[width=0.85\textwidth]{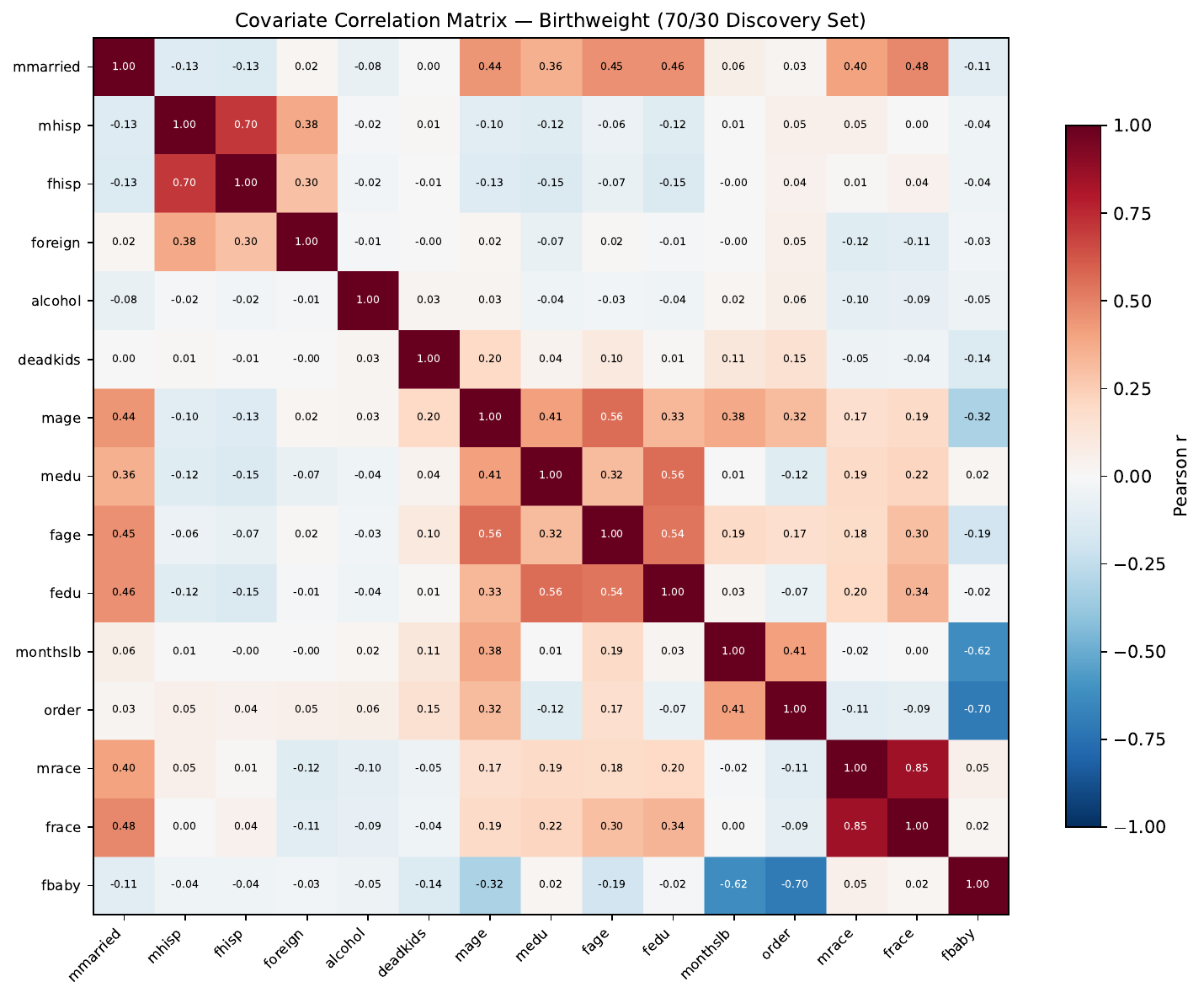}
\caption{Covariate correlation matrix on the 70/30 discovery set ($n = 3{,}249$).}
\label{fig:corr_bw}
\end{figure}

\begin{table}[H]
\centering
\caption{Top pairwise correlations among covariates on the 70/30 discovery set.}
\label{tab:bw_correlations}
\begin{tabular}{llr}
\toprule
\textbf{Covariate 1} & \textbf{Covariate 2} & \textbf{$r$} \\
\midrule
Mother White & Father White & $+0.848$ \\
Mother Hispanic & Father Hispanic & $+0.704$ \\
Birth order & First baby & $-0.704$ \\
Months since last birth & First baby & $-0.623$ \\
Mother education & Father education & $+0.557$ \\
Mother age & Father age & $+0.556$ \\
Married & Father White & $+0.479$ \\
Married & Father education & $+0.459$ \\
Married & Father age & $+0.455$ \\
Married & Mother age & $+0.439$ \\
\bottomrule
\end{tabular}
\end{table}

The covariate matrix exhibits two distinct correlation clusters: assortative mating between maternal and paternal characteristics (Mother/Father White, $r = 0.85$; Mother/Father Hispanic, $r = 0.70$; parental education and age, $r \approx 0.56$), and mechanically related parity variables (Birth order / First baby, $r = -0.70$; Months since last birth / First baby, $r = -0.62$). Marital status correlates with multiple paternal characteristics ($r \approx 0.44$--$0.48$). These correlations are substantially higher than in the PRIME data and may drive spurious effect modifier discovery in standard regression.

\subsubsection{Pre-Selection Analysis}

As a baseline reference for \texttt{CausalStabSel}, we run standard AIPW regressions on the discovery sample. AIPW pseudo-outcomes are computed via 5-fold cross-fitting with \texttt{LogisticRegressionCV} for the propensity model and \texttt{RidgeCV} for the outcome model, propensity clipped to $[0.10, 0.90]$, and AIPW scores winsorized at the 1st and 99th percentiles.

\paragraph{Univariate AIPW regressions.}

Each covariate is tested individually: $\text{AIPW}_i = \alpha + \beta(X_j - \bar{X}_j) + \varepsilon_i$ with HC1 robust standard errors.

\begin{table}[H]
\centering
\small
\caption{Univariate pre-selection: covariates significant at $p < 0.10$.}
\label{tab:bw_univariate}
\begin{tabular}{lcccc}
\toprule
& \multicolumn{2}{c}{\textbf{70/30} ($n=3{,}249$)} & \multicolumn{2}{c}{\textbf{80/20} ($n=3{,}713$)} \\
\cmidrule(lr){2-3} \cmidrule(lr){4-5}
\textbf{Covariate} & Coef (g) & $p$ & Coef (g) & $p$ \\
\midrule
Mother age & $-9.8$ & $0.010$ & $-13.1$ & $<0.001$ \\
First baby & $+106.6$ & $0.011$ & $+128.6$ & $0.001$ \\
Mother White & $-136.8$ & $0.030$ & $-147.1$ & $0.012$ \\
Birth order & $-36.5$ & $0.052$ & $-52.2$ & $0.005$ \\
Father White & $-107.6$ & $0.065$ & $-111.7$ & $0.043$ \\
Married & $-92.9$ & $0.063$ & $-101.3$ & $0.032$ \\
Months since last birth & $-1.3$ & $0.075$ & $-1.4$ & $0.034$ \\
\midrule
\multicolumn{5}{l}{Significant at 5\%: 3/15 (70/30), 7/15 (80/20). At 10\%: 7/15 (70/30), 8/15 (80/20).} \\
\bottomrule
\end{tabular}
\end{table}

\paragraph{Multivariate AIPW regressions.}
All 15 covariates included simultaneously: $\text{AIPW}_i = \alpha + \sum_{j=1}^{15} \beta_j (X_{ij} - \bar{X}_j) + \varepsilon_i$ with HC1 robust standard errors.

\begin{table}[H]
\centering
\small
\caption{Multivariate pre-selection summary.}
\label{tab:bw_multivariate}
\begin{tabular}{lcc}
\toprule
& \textbf{70/30} ($n=3{,}249$) & \textbf{80/20} ($n=3{,}713$) \\
\midrule
Significant at 5\% & 0/15 & 0/15 \\
Significant at 10\% & 0/15 & 1/15 (Mother White, $p=0.082$) \\
\bottomrule
\end{tabular}
\end{table}

Plain regression on the discovery set finds 3--7 covariates significant at 5\% univariately, but \emph{no} covariates survive multivariate adjustment in either split. This pattern -- many univariate hits collapsing in multivariate analysis -- is a well-known artifact of correlated covariate structure: shared variance is partialled out when covariates enter the regression jointly, deflating individual $t$-statistics. The contrast with the PRIME data, where multivariate regression revealed \emph{more} signals than univariate, illustrates that standard regression can fail in opposite directions depending on the correlation geometry of the design.

\subsubsection{Causal Stability Selection Results}

We report selection results for the two selectors and the two split ratios. Shaded rows denote features selected in both the 70/30 and 80/20 splits; unshaded rows denote features selected in only one of the two configurations.

\begin{table}[H]
\centering
\caption{Selected features --- DR-learner (Ridge) / LASSO ($q < 0.10$).}
\label{tab:bw_qvals_main}
\begin{tabular}{lcccc}
\toprule
& \multicolumn{2}{c}{\textbf{$q$-values}} & \multicolumn{2}{c}{\textbf{Selected}} \\
\cmidrule(lr){2-3} \cmidrule(lr){4-5}
\textbf{Feature} & 70/30 & 80/20 & 70/30 & 80/20 \\
\midrule
 Mother age & 0.053 & 0.040 & \yes & \yes \\
 First baby & 0.053 & 0.040 & \yes & \yes \\
\bottomrule
\end{tabular}
\par\vspace{0.3em}
\footnotesize Both splits select the same two features.
\end{table}

\begin{table}[H]
\centering
\caption{Selected features --- DR-learner (Ridge) / GB ($q < 0.10$).}
\label{tab:bw_qvals_sens}
\begin{tabular}{lcccc}
\toprule
& \multicolumn{2}{c}{\textbf{$q$-values}} & \multicolumn{2}{c}{\textbf{Selected}} \\
\cmidrule(lr){2-3} \cmidrule(lr){4-5}
\textbf{Feature} & 70/30 & 80/20 & 70/30 & 80/20 \\
\midrule
 Mother age & 0.028 & 0.027 & \yes & \yes \\
 Birth order & 0.028 & 0.027 & \yes & \yes \\
 Foreign-born & 0.028 & 0.036 & \yes & \yes \\
 Mother White & 0.042 & 0.043 & \yes & \yes \\
 Months since last birth & 0.028 & 0.041 & \yes & \yes \\
First baby & \na & 0.033 & & \yes \\
Married & \na & 0.039 & & \yes \\
Father age & 0.028 & \na & \yes & \\
Father education & 0.063 & \na & \yes & \\
Mother education & 0.089 & \na & \yes & \\
\bottomrule
\end{tabular}
\par\vspace{0.3em}
\footnotesize 70/30: 8 selected. 80/20: 7 selected. The five-feature core is invariant across splits; Father age, Father education, and Mother education appear only in 70/30, while First baby and Married appear only in 80/20.
\end{table}

The two selectors differ in selectivity: LASSO retains two features (Mother age, First baby), while GB retains seven to eight, including the parity correlate Birth order ($r = -0.70$ with First baby) and the demographic correlates Mother White, Foreign-born, and Months since last birth. Mother age is selected in all four configurations of split ratio and selector. The parity signal appears as First baby under LASSO and as Birth order under GB; the two indicators are mechanically related and capture the same underlying phenomenon. Both selectors correctly avoid Father White ($r = 0.85$ with Mother White), an obvious assortative-mating proxy that would inflate naive multivariate regression.

\subsubsection{Held-out Inference}

For each selected feature $j$, we estimate the BLP of the CATE on $X_j$ on the inference sample using $\hat\tau(\bx_i) = \alpha + \beta_j(X_{j,i} - \bar{X}_j) + \varepsilon_i$ with HC1 robust standard errors.

\begin{table}[H]
\centering
\caption{Inference results, DR-learner / LASSO selections, 70/30 split ($n_{\text{inf}} = 1{,}393$).}
\label{tab:bw_inf_main_70}
\begin{tabular}{lrrr}
\toprule
\textbf{Feature} & \textbf{Coef (g)} & \textbf{SE} & \textbf{95\% CI} \\
\midrule
 Mother age & $-14.9$ & $5.8$ & $[-26.3,\ -3.5]$ \\
 First baby & $+145.8$ & $65.7$ & $[17.0,\ 274.6]$ \\
\midrule
\multicolumn{4}{l}{ATE $= -266.4$g (SE $= 32.7$); 95\% CI $[-330.5,\ -202.3]$} \\
\bottomrule
\end{tabular}
\end{table}

\end{document}